\newcommand{\uP}{\mathcal{P}}
\newcommand{\aleq}{\approx_{\alpha}}
\newcommand{\caleq}{\approx_{\theory{C}}}
\newcommand{\ealeq}{\approx_{\theory{E}}}
\newcommand{\taleq}{\approx_{\theory{T}}}
\newcommand{\enarrow}[1]{\rightsquigarrow_{\theory{#1}}}
\newcommand\NEW[0]{\reflectbox{\ensuremath{\mathsf{N}}}}
\newcommand\nw[1]{#1^{\scalebox{.4}{\NEW}}}
\definecolor{bostonuniversityred}{rgb}{0.8, 0.0, 0.0}
\newcommand{\atomSet}{\mathbb{A}}
\newcommand{\pair}[1]{\langle #1\rangle}
\newcommand{\act}{\cdot}
\newcommand{\dom}[1]{\mathtt{dom}(#1)}
\newcommand{\vran}[1]{\mathtt{VRan}(#1)}
\newcommand{\nfpair}[1]{\langle #1\rangle_{nf}}
\newcommand{\pAction}[2]{#1\cdot #2}
\newcommand{\abs}[2]{[#1]#2}
\newcommand{\f}[1]{\ensuremath{\mathit{#1}}}
\newcommand{\theory}[1]{\ensuremath{\mathcal{#1}}}
\newcommand{\tunif}[1]{_?{\overset{\theory{#1}}{\approx}}_? }
\newcommand{\unif}{\; _?{\approx}_?\;}
\newcommand{\cunif}{\; _?{\overset{\theory{C}}{\approx}}_?\;}
\newcommand{\context}[1]{\mathbb{C}}
\newcommand{\closenarrow}{\rightsquigarrow_{\theory{R,E}}^c}
\newcommand{\cnarr}[1]{\rightsquigarrow^c_{[#1]}}
\newcommand{\closerewrite}{\longrightarrow_{\theory{R,E}}^{c}}
\newcommand{\starcloserewrite}{\overset{*}{\longrightarrow}_{\theory{R,E}}^{c}}
\newcommand{\teal}[1]{\textcolor{teal}{#1}}
\newcommand{\rew}{\to_\theory{R}}
\newcommand{\erew}{\to_\theory{R,E}}
\newcommand{\ebarrew}{\to_\theory{R/E}}
\newcommand{\plus}{\mbox{\textit{plus}}}
\newcommand{\mult}{\mbox{\textit{mult}}}
\newcommand{\narrow}{\rightsquigarrow}
\newcommand{\cent}{\vdash}
\newtheorem{example}{Example}
\newtheorem{remark}{Remark}
\newtheorem{definition}{Definition}
\newtheorem{theorem}{Theorem}
\newtheorem{lemma}{Lemma}
\newtheorem{proposition}{Proposition}
\newtheorem{corollary}{Corollary}
\journal{\empty}
\begin{document}

\begin{frontmatter}



\title{Nominal Equational Narrowing:\\
Rewriting for Unification in Languages with Binders} 

\author[kings]{Maribel Fernández}
\author[imperial,unb]{Daniele Nantes-Sobrinho}
\author[unb]{Daniella Santaguida}

\affiliation[kings]{organization={King's College London},
            country={UK}}

\affiliation[imperial]{organization={Imperial College London},
            country={UK}}

\affiliation[unb]{organization={University of Brasília},
            country={Brazil}}

\begin{abstract}
Narrowing extends term rewriting with the ability to search for solutions to equational problems. While first-order rewriting and narrowing are well studied, significant challenges arise in the presence of binders, freshness conditions and equational axioms such as commutativity. This is problematic for applications in programming languages and theorem proving, where reasoning modulo renaming of bound variables, structural congruence, and freshness conditions is needed.
 To address these issues, we present a framework for nominal rewriting and narrowing modulo equational theories that intrinsically incorporates renaming and freshness conditions. We define and prove a key property called nominal \theory{E}-coherence under freshness conditions, which characterises normal forms of nominal terms modulo renaming and equational axioms.
Building on this, we establish the nominal \theory{E}-lifting theorem, linking rewriting and narrowing sequences in the nominal setting. This foundational result enables the development of a nominal unification procedure based on equational narrowing, for which we provide a correctness proof. We illustrate the effectiveness of our approach with examples including symbolic differentiation and simplification of first-order formulas.
\end{abstract}

\begin{keyword}
 Equational Reasoning \sep Nominal Techniques \sep Unification
\end{keyword}
\end{frontmatter}



\section{Introduction}
\label{sec:intro}
In this paper we study the problem of solving equational problems in languages such as Milner's $\pi$-calculus and first-order logic, where the syntax includes  binding operators, freshness conditions and structural congruences. We develop nominal rewriting and narrowing techniques modulo equational axioms, and use them to build  unification procedures.

The nominal framework~\cite{VarBinding/GabbayP02} allows us to specify syntax involving binding and freshness constraints directly within its language. This capability extends beyond what is typically expressible in first-order settings. For example, nominal terms can naturally express identities such as the following from first-order logic:
\begin{equation}\label{eq:fo-identity}
    \forall x. (\phi\wedge \psi)= \phi \wedge \forall x. \psi, \text{ if } x\notin fvar(\phi)
\end{equation}
where $fvar(\phi)$ denotes the set of free variables in the formula $\phi$.
An important distinction to note is that $x$ and $\phi$ are variables at different levels. The symbol $x$ is an object-level variable, representing elements of the domain in a logical interpretation, and is subject to binding. In contrast, $\phi$ is a meta-level variable, representing an entire expression or formula, and is not subject to binding. Nominal techniques carefully distinguish these roles, allowing binding operations to apply selectively. More precisely,  the nominal framework enhances the expressive power of first-order languages by introducing atoms $a$, meta-variables $X$ (called simply variables), and abstractions $[a]t$ of atoms over terms. Atoms represent object-level variables, which can be bound, while variables represent meta-level placeholders, which can be instantiated (substituted) with nominal terms. This clear separation allows us to reason precisely about binding and substitution behaviours.

Identities such as~(\ref{eq:fo-identity}), usually called  {\em schemas}~\cite{GabbayMathijssen06_OneAndaHalfthOrderLogic}, can be naturally expressed in nominal languages. For example, (\ref{eq:fo-identity}) can be expressed as follows:
\begin{equation}\label{eq:id_overview}
    a\#X\vdash \forall [a] (X\wedge Y)\approx X \wedge \forall [a] Y
\end{equation}
where the symbols $\forall$ and $\wedge$ are part of the underlying signature. Here, $X$ and $Y$ are metavariables, $a\#X$ is a freshness constraint stating that $X$ can only be instantiated with terms in which $a$ does not appear free, and equality $\approx$ is $\alpha$-equivalence, hereafter denoted $\aleq$.

Alternative approaches for handling renaming and freshness constraints include the locally nameless representation~\cite{DBLP:journals/jar/Chargueraud12} and higher-order abstract syntax (HOAS)~\cite{DBLP:journals/tocl/McDowellM02}. Note that in higher-order settings, one must reason modulo $\beta$- and $\eta$-equivalence, and in most cases, equational reasoning becomes undecidable.
We adopt the nominal approach because nominal equational reasoning remains closer in spirit to first-order reasoning than to higher-order reasoning. A more detailed comparison with these alternative frameworks is provided in the related work section.


Building on this foundation, \emph{nominal unification} naturally arises as a central operation for equational reasoning in the presence of binders. Unification is fundamental for automated reasoning, serving as the foundation for resolution-based theorem proving, type inference, and numerous other applications.
 Nominal unification~\cite{Matching/jcss/CalvesF10,NomUnification/UrbanPG04} involves finding a substitution $\sigma$ that solves the problem $s\ _?{\overset{}{\approx}}_?\ t$, meaning $s\sigma\aleq t\sigma$, where $s$ and $t$ are nominal terms. Nominal unification algorithms were formalised in proof assistants such as Isabelle \cite{NomUnification/UrbanPG04}, PVS \cite{FormalisingNomC-unif/mscs/Gabriel21} and Coq~\cite{A-C-AC/tcs/Ayala-RinconSFN19}.

In programming languages, theorem proving and protocol analysis applications, unification needs to consider also structural congruences generated by equational axioms $\theory{E}$. 
An approach to solving nominal unification problems modulo equational theories
 involves the use of {\em nominal narrowing}~\cite{NominalNarrowing16}\footnote{Roughly, nominal narrowing is a generalisation of nominal rewriting~\cite{NominalRewriting/FernandezG07,Kikuchi020} by using nominal unification instead of nominal matching in its definition.}. This technique can be used when the equational theory $\theory{E}$ is presented by a convergent nominal rewriting system~\cite{NominalRewriting/FernandezG07,NominalNarrowing16}.
If such a presentation is unavailable,
techniques for rewriting \emph{modulo} \theory{E}~\cite{JouannaudKK83:Incremental,Viola01,VariantNarrowing:EscobarMS09} could be considered.
However, nominal rewriting techniques modulo
\theory{E} have remained unexplored.


This paper addresses that gap by introducing a framework for nominal
\theory{E}-rewriting and narrowing, which supports both binding and structural congruence. We use this framework to construct unification procedures capable of reasoning in languages with binding constructs and equational axioms.
Figure~\ref{fig:rew-diagram} summarises this development. Starting from first-order term rewriting systems (TRS) (top), one can either incorporate binding via nominal rewriting (NRS)  (left) or equational axioms via equational term rewriting systems (ETRS) (right). Our work bridges these two extensions by introducing nominal rewriting modulo
$\alpha$-equivalence and
\theory{E}, depicted by the dashed arrows converging at the bottom of the diagram.

\begin{figure}[!t]
\begin{center}
\footnotesize
    \begin{tikzpicture}[scale=.8]
                \node (0) at (0,0.5) {\footnotesize  TRS~\cite{Baader98}};
                \node (1) at (0,0) {\theory{R}};
                \node (2) at (-2.5,-1) {\footnotesize\theory{R} with $\alpha$};
                \node (3) at (2.5,-1) {\footnotesize\theory{R} with \theory{E}};
                \node[left] (4) at (-1.5,-1.5) {\footnotesize NRS~\cite{NominalRewriting/FernandezG07}};
                \node[right] (6) at (1.5,-1.5) {\footnotesize  ETRS~\cite{Jouannaud83:ConfluentandCoherent}};
                \node (7) at (0,-3) {\footnotesize \textcolor{teal}{
                \theory{R} with $\alpha$ and \theory{E}}};
               \node (8) at (0,-3.4) {\footnotesize \textcolor{teal}{Equational Nominal  Rewriting}};

                \draw[->, >=stealth] (1) to (2);
                \draw[->, >=stealth] (1) to (3);
                \draw[teal,dashed, thick,->, >=stealth] (-1.8,-1.9) to (7);
                \draw[teal, thick, dashed,->, >=stealth] (1.8,-1.9) to (7);
                \end{tikzpicture}
    \end{center}
    \vspace{-5mm}
    \caption{This work: nominal rewriting modulo theories}\label{fig:rew-diagram}
\end{figure}
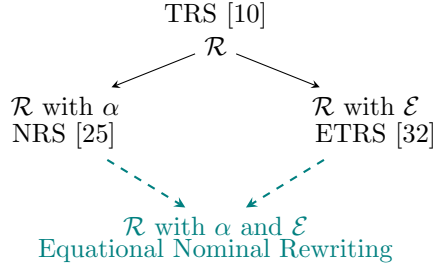


Making a parallel to the   first-order setting,  we consider a set \theory{T}  of nominal identities--such as those illustrated in schema~(\ref{eq:id_overview})--and decompose it into a nominal rewriting system $\theory{R}$ and a nominal equational part $\theory{E}$.  This separation underpins two complementary notions of rewriting:
\begin{itemize}
\item In nominal \theory{R/E}-rewriting,  rules from \theory{R} are applied within the equivalence class of a nominal term $t$, modulo $\alpha$ and modulo $\theory{E}$. In other words, in its $(\alpha\cup\theory{E})$-equivalence class.
\item In contrast, nominal \theory{R,E}-rewriting relies on ({\em $\theory{E}\cup\alpha$)-matching}, applying rules from \theory{R} directly to the term
$t$ without exploring its entire  $(\alpha\cup\theory{E})$-equivalence class.
\end{itemize}
 Since the term $t$ may contain variables, the relations \theory{R/E} and \theory{R,E} are subject to freshness conditions.

We show that under a critical property called {\em nominal
\theory{E}-coherence} — which must also respect freshness—these two rewriting relations yield the same normal forms. This result (Theorem~\ref{prop:nom-e-coherence}) generalises the classical coherence theorems from first-order rewriting~\cite{Jouannaud83:ConfluentandCoherent} to the nominal setting.
Extending equational reasoning to the nominal realm is far from trivial:
$(\alpha\cup\theory{E})$-equivalence classes are typically infinite due to binding, and freshness constraints further restrict the set of admissible representatives. We revisit these and other subtleties in detail in \S\ref{ssec:highlight}.


Having established the theoretical foundation for nominal rewriting modulo theories, we turn to the practical application of nominal rewriting modulo a theory \theory{E}. To this end, we define the nominal narrowing modulo \theory{E} relation, which operationalises unification (modulo $\theory{R\cup E}\cup \alpha$) by systematically exploring nominal \theory{R,E}-rewriting and instantiation possibilities.

Our approach proceeds in two steps. First, leveraging the nominal \theory{E}-Lifting Theorem (Theorem~\ref{teo:Elifting}), we show that sequences of nominal \theory{R,E}-rewriting steps correspond precisely with sequences of nominal \theory{E}-narrowing steps. This correspondence is the basis of the development of our ($\theory{R\cup E}\cup \alpha$)-unification procedure.
Second, we exploit the notion of {\em closed}\footnote{Closed rules do not have free atoms, a standard restriction in HO rewriting frameworks.} nominal rewriting, which has been shown to be both efficient and complete for equational reasoning~\cite{corr:ClosedNomRew/Fernandez10}.  Inspired by this, we define a closed nominal narrowing relation that yields a sound and complete procedure for nominal ($\theory{R\cup E}\cup \alpha$)-unification (Theorem~\ref{teo:complet_clsd}).

Finally, we observe that, as in the first-order setting, narrowing in our framework may be non-terminating—even when the set of unifiers is finite—due to the inherent complexity of exploring the search space~\citep{VariantNarrowing:EscobarMS09}. To address this challenge, we adapt the  \emph{basic} narrowing strategy to our nominal setting, ensuring derivations are guided and avoid redundant steps. It is well-known that basic narrowing is not complete in the presence of equational axioms~\cite{CLD05_TheFiniteVariantProperty}. Interestingly, our basic nominal narrowing preserves completeness modulo $\alpha$-equivalence. This enables us to define unification procedures for theories $\theory{T} = \theory{R} \cup {\alpha}$, where $\theory{R}$ is a convergent nominal rewrite system that internalises $\alpha$-equivalence. This refinement yields a more tractable--due to termination--approach to nominal $\theory{T}$-unification (Theorem~\ref{Thm:complete_basic}), while remaining faithful to the equational and binding structure of the object language.


To conclude this Introduction, we provide a high-level overview of this paper’s contributions, setting the stage for the technical developments that follow in the subsequent sections.

\subsection{Nominal Equational Reasoning At a Glance}\label{ssec:highlight}
This section highlights the distinctive features of nominal rewriting and equational reasoning in the presence of binding, contrasting them with their first-order counterparts through illustrative conceptual examples. These examples demonstrate behaviours unique to nominal languages—phenomena that cannot be directly replicated in first-order frameworks and which have shaped the design of our approach. While the material here is not essential for understanding the technical content of the paper, it serves to build intuition and motivation for the constructions developed later. Readers
less familiar with nominal techniques may prefer to return to this section as needed to clarify concepts introduced in later sections.




\paragraph{Nominal Equational Reasoning}

There is a wealth of techniques to reason modulo first-order equational theories, that is, theories defined by a set \theory{E} of first-order axioms — identities of the form $l \approx r$, where $l$ and $r$ are first-order terms (terms without abstractions or atoms). Examples of first-order theories include associativity (\theory{A}), commutativity (\theory{C}), and idempotency (\theory{I}) of operators. For instance, the theory of commutativity for conjunction can be expressed as $\theory{C} = \{ X \wedge Y \approx Y \wedge X \}$.

However, even when $\theory{E}=\emptyset$, the nominal structure yields a richer and more expressive behaviour compared to standard first-order equality:

\begin{example}\label{ex:overview1}
   Consider a signature  $\Sigma=\{\forall, R\}$, where $\forall$ is unary and  $R$ is a binary symbol. First, observe that the $\alpha$-equivalence class of the ground term $\forall [a]R(a,b)$ consists of all terms obtained by renaming the bound atom: %
   $$[\forall [a]R(a,b)]_\alpha=\{\forall [a]R(a,b),\forall [c]R(c,b),\forall [d]R(d,b), \ldots \}.$$
   Now, consider a term with a variable, say $\forall [a]R(a,X)$. The  $\alpha$-equivalence class of this term is not simply obtained by renaming $a$  to any atom: such renamings are valid only if the new atom is fresh for $X$. In particular, $\forall [a]R(a,X)\aleq \forall [c]R(c,X)$ only if $a,c\#X$. Consequently, the larger the freshness context we assume, the larger the $\alpha$-equivalence class might become. For instance, under the context $a,b,c\#X$, we have:
 $$ \forall [a]R(a,X)\aleq \forall [b]R(b,X)\aleq \forall [c]R(c,X).$$
\end{example}

The nominal framework allows us to express richer theories that involve binding and freshness. For example, the theory $\theory{E_{\forall_C}} = \{ \emptyset \vdash \forall[a]\forall[b]X \approx \forall[b]\forall[a]X \}$ expresses that the $\forall$-quantifier is commutative, while the theory $\theory{E_{\forall_\text{e}}} = \{ a\# X \vdash \forall[a] X \approx X \}$ captures the elimination of vacuous quantification, relying on explicit freshness conditions.


When $\theory{E} \neq \emptyset$, nominal algebra equality modulo $\theory{E}$ defines what we call an ($\alpha\cup \theory{E}$)-equivalence class of terms. The following example illustrates this point using the equational theory $\theory{E}_{\forall_{\theory{C}}}$, which states that the $\forall$-quantifier is commutative -  a property that cannot be formally captured in standard first-order term languages.

\begin{example}\label{ex:overview2}
Consider the signature $\Sigma= \{\forall, f, >\}$, where $f$ and $>$ are binary symbols. Consider $\theory{E}_{\forall_\theory{C}}= \{\emptyset\vdash \forall[a]\forall[b] X\approx \forall[b]\forall[a] X\}$.
The $\alpha,\theory{E}_{\forall_\theory{C}}$-equivalence class of the term $s= \forall[a]\forall[b]( f(a,b)>0)$ includes
$$
[s]_{\alpha,\theory{E}_{\theory{\forall_\theory{C}}}} = \left\{
    \begin{aligned}
    &\forall[a]\forall[b](f(a,b)>0), \forall[b]\forall[a](f(a,b)>0),\\
    &\forall[c]\forall[b](f(c,b)>0),  \forall[b]\forall[c](f(c,b)>0),\\
    &\forall[c]\forall[d](f(c,d)>0),  \forall[d]\forall[c](f(d,c)>0),\ldots
    \end{aligned}
    \right\}.
$$
Observe that both $\alpha$-renaming of bound variables and the commutativity of quantifiers (via $\theory{E}_{\forall_\theory{C}}$) contribute to generating new representatives of the same equivalence class.
\end{example}

\paragraph{Nominal Rewriting}
In general, computation is driven by rewriting. When a set of identities
$\theory{E}$ guides computational reasoning, the first step is often to attempt to orient these identities into rewrite rules. Given a nominal identity, say $\nabla\vdash l\approx r$,  a nominal rewrite rule~\cite{NominalRewriting/FernandezG07} is obtained by orienting the identity either from left to right, $\nabla\vdash l\to r$ or from right to left, $\nabla\vdash r\to l$. An expected condition for well-formed nominal rewrite rule $\nabla\vdash l \to r$ is that $V(r,\nabla)\subseteq V(l)$ and that $l$ is not a variable. Here, $V(-)$ denotes the set of (meta-)variables of $(-)$.

For example, the identity (\ref{eq:id_overview}) could be oriented in the following nominal rewrite rule: $R=a\#X\vdash \forall [a] (X\wedge Y)\to X \wedge \forall [a] Y$. Note that the name $a$ is representing an object-level variable, therefore, we could have written this rule as
$R^{(a\ b)}=b\#X\vdash \forall [b] (X\wedge Y)\to X \wedge \forall [b] Y$. We write $R^\pi$ to denote the rewrite rule obtained from $R$ by applying the renamings defined by the permutation $\pi$. We call a set of rewrite rules {\em equivariant} when it is closed under the action of permutations $(-)^{(a\ b)}$ for all atoms $a$ and $b$. When we write our nominal rewrite rules, we do not give all the possible variants, but we work implicitly with the {\em equivariant closure} $\theory{R}$  of the rewrite rules.

To apply a rewrite step in a term  $\forall [c](R(a,Z)\wedge \neg(R(a,c)) )$ we need to match it with the lhs of some possibly renamed version of the rule $R$. If $c\#Z$ then we have a step $\forall [c](R(a,Z)\wedge \neg(R(a,c)) ) \to R(a,Z)\wedge \forall [c](\neg(R(a,c)) )$.
Checking that a nominal rule $R$ matches a term $s$  is a polynomial problem~\cite{Matching/jcss/CalvesF10}, but checking whether $s$ matches $R^\pi$ for some $\pi$ is an NP-complete problem in general~\cite{DBLP:conf/icalp/Cheney04}.  For efficiency we are interested in conditions to make this problem polynomial,
for this reason, we will work with {\em closed} systems, which intuitively corresponds to working with rules without free atoms.
We come back to this in \S\ref{sec:Narr-for-Unif}.

Closed systems have many applications. All systems that stem from functional programming, including the axiomatisation of the $\lambda$-calculus, are closed. Additionally, all systems that can be specified within a standard higher-order rewriting formalism are also closed~\citep{FG05_NomRewWithNameGeneration}.


\paragraph{Nominal Rewriting Modulo a Theory}

Some theories, like commutativity, may not be oriented without losing important properties such as termination. To overcome this situation, we may
decompose our arbitrary theory into two different sets: one ordered (the usual set of rules), and another unordered (the identities). For instance, consider the theory \theory{T} defined by the equivariant closure of  the identities:
%
$$
\left\{
\begin{aligned}
a\#X&\vdash \forall [a]X\approx X,\\
\emptyset& \vdash \forall[a]\forall [b] Y \approx \forall [b]\forall [a]Y
\end{aligned}\right\}
$$
These identities come from a well-known fragment of first-order logic and describe standard properties of quantifiers.   For simplicity, we do not consider the commutativity of
$\wedge$ or $\vee$ here, though it could be included as additional identities if needed.

The first identity is best oriented from left-to-right to avoid generating infinite equivalence classes due to expansion. Note that the equivalence class of a term such as $\forall[b]b$  would be infinite, including, for example, terms like  $\forall[c](\forall[b]b),~  \forall[a](\forall[c](\forall[b]b))$  and so on—not merely due to
$\alpha$-equivalence, but also due to repeated, semantically vacuous quantification. Left-to-right orientation prevents this unbounded expansion.

In contrast, the second identity naturally lends itself to reasoning within an equivalence class that is already infinite due to
$\alpha$-renaming. However, the application of the identity itself (i.e., commuting $\forall$-abstraction) does not introduce (additional)  structural infinite variability beyond that already implied by
$\alpha$-equivalence. As such, it can be safely treated as part of the unordered equational component.

Following this reasoning, we decompose  \theory{T} as \theory{R{\cup}E} where  \theory{R} is obtained by the equivariant closure of the rule
$R=\{
 a\#X\vdash \forall [a]X\to X
\}$ and  \theory{E} is the equivariant closure of the schema
$\{\emptyset \vdash \forall[a]\forall [b] Y \approx \forall [b]\forall [a]Y\}$. We can verify that \theory{R}  is terminating and confluent modulo $\theory{E}$. We will come back to this in \S\ref{sec:e-rewriting}, see Example~\ref{exa:rew-modulo}.

Nominal rewriting modulo an equational theory is inherently more involved than its first-order counterpart. However, it remains significantly more tractable than analogous mechanisms in full higher-order rewriting frameworks, striking a balance between expressiveness and manageability.

\paragraph{Unification Modulo a Theory via Closed Narrowing}
As expected,  nominal \theory{R,E}-narrowing is a generalisation of nominal \theory{R,E}-rewriting which uses nominal \theory{E}-unification instead of nominal \theory{E}-matching, i.e., both the left-hand side of the rule and the term to be rewritten can be instantiated. Nominal \theory{R,E}-narrowing will be used as a strategy for solving nominal unification problems modulo a theory \theory{T}. The idea is to decompose
\theory{T} as \theory{R{\cup}E}
where
\theory{E} is a set of non-oriented identities for which a unification algorithm exists and  \theory{R} is a set of rewriting rules which is \theory{E}-convergent (see Definition~\ref{def:RE-rewriting}).

For example, consider the set $\theory{T}$ of identities:
$$\{c\#P \vdash P\land \forall[c]Q \approx \forall[c](P\land Q), ~\emptyset \vdash \forall[a]\forall[b]Y \approx \forall[b]\forall[a]Y\}$$
and the decomposition into rules
$\theory{R}=\{c\#P \vdash P\land \forall[c]Q \to \forall[c](P\land Q)\}$ and identities $
\theory{E}= \{\emptyset \vdash \forall[a]\forall[b]Y \approx \forall[b]\forall[a]Y\}.$
In order to solve the \theory{T}-unification problem between  $s_0=\forall[c](P_1\land\forall[c']Q_1)$ and $t_0=\forall[c'](P_2\land\forall[d]Q_2)$, each with their own freshness constraints:
$$(c'\#P_1 \vdash s_0) \;\tunif{T}\; (c\#P_2\vdash t_0),$$
we apply sequences of \theory{R,E}-narrowing narrowing steps (denoted $\narrow_\theory{R,E}$) on the two terms in parallel, until we obtain two new terms and that are \theory{E}-equivalent under a possibly new freshness context $\Delta_n$. That is,

$$\{c'\#P',c\#P'\}\vdash (s_0,t_0) \narrow_\theory{R,E} \ldots \narrow_\theory{R,E} \Delta_n\vdash (s_n,t_n) $$
where $\Delta_n\vdash s_n\ealeq t_n$. This way, we can guarantee that the initial terms are \theory{T=R{\cup}E}-equiv\-a\-lent. We will come back to this procedure with a more interesting example in \S\ref{ssec:exam_diff}.




To establish the correctness of our nominal unification procedure via narrowing, we first prove a key result that connects
\theory{R,E}-rewriting sequences to  \theory{R,E}-narrowing sequences. Specifically, we show how one can {\em lift} a derivation from rewriting to narrowing, and vice versa (Theorem~\ref{teo:Elifting}). To prove the completeness of our unification procedure modulo
\theory{T}, we show that this correspondence also holds in the closed setting (Theorem~\ref{teo:complet_clsd}).




\paragraph{Refinements and Challenges}
As in the
first-order approach, nominal narrowing may be non-terminating, meaning that this (\theory{R\cup E \cup \alpha})-unification procedure might not terminate, even when only a finite number of unifiers exist (see Example~\ref{exa:narrow-infty}).
%
%
When \theory{E=\emptyset}, refinements were proposed in first-order approaches for the restricted case when \theory{R\cup\emptyset}-unification is finitary and the narrowing relation induced by \theory{R} is terminating~\citep{Hullot80,KN87_matching_unif_complexity,DMS92_DecidableMatchingConvergentSystems,Mitra94,DBLP:journals/jsc/NuttRS89}. However, in the nominal framework, even when $\theory{E}=\emptyset$, we are always working modulo the $\alpha$-equivalence theory. Therefore, the usual first-order approach does not directly translate to our framework.
 In \S\ref{sec:basnarr} we prove that {\em closed} basic narrowing is terminating and a complete unification algorithm exists~(Theorem~\ref{Thm:complete_basic}).


\smallskip

In exploring alternative refinements, we focused on the behaviour of nominal narrowing with respect to a non-empty equational theory. This presents several challenges. First, we are dealing with infinite equivalence classes. Second, it is well known that first-order narrowing modulo
\theory{E} can generate infinite derivation sequences—particularly in the presence of associativity and commutativity axioms—as illustrated in~\citep{CLD05_TheFiniteVariantProperty,Viola01}. Such behaviour compromises both termination and, in some cases, the completeness of
\theory{E}-unification via narrowing (see Example~\ref{exa:incomplete-ac}).

Most of the alternative approaches to first-order unification modulo \theory{T} using \theory{R,E}-narrowing techniques rely on restricting the equational theory \theory{E} to satisfy some requirements including the existence of \theory{E}-unification algorithms that produce a finite, minimal, and complete set of unifiers. Unfortunately, in the nominal setting, unification algorithms are currently only available for very limited theories—most notably, commutativity. As a result, further refinements cannot be pursued without first advancing the state of the art in nominal unification techniques.

This gap highlights a significant opportunity for future research and motivates our decision to defer the development of such refinements to future work (see \S\ref{ssec:basic_equational}).

\subsection{Organisation of the paper}
This paper is a revised and extended version of previous work~\cite{NominalNarrowing16,LSFA2024nominal,UNIF2024nominal}
where nominal narrowing and rewriting were extended to consider equational axioms. Here we present also the unification procedure, describe applications, and include proofs  previously omitted.

In \S\ref{sec:preliminaries} we present  preliminary definitions to make the paper self-contained. In \S\ref{sec:e-rewriting} and \S\ref{sec:e-narrowing} we extend the notions of rewriting and narrowing, respectively, modulo an equational theory \theory{E} to the nominal framework and prove the Lifting Theorem  (\S\ref{section:NomLiftingModuloC}), which is the basis for  proving completeness of the unification procedure presented in \S\ref{sec:Narr-for-Unif}.
In \S\ref{sec:basnarr} we present the closed basic narrowing refinement and discuss the current limitations to extend nominal narrowing techniques to work modulo theories.
Finally, we discuss related work in \S\ref{sec:rel-work} and conclude in
\S\ref{sec:future-work}.

\section{Preliminaries}\label{sec:preliminaries}

We briefly recall the formal definitions of nominal rewriting and unification. For more details, we refer to~\citep{NominalRewriting/FernandezG07}.

\subsection{Nominal Syntax}
Fix countable infinite  disjoint sets of {\em atoms} \(\mathbb{A} = \{a,b,c,\ldots \}\) and {\em variables} \({\cal X} = \{X, Y, Z, \ldots \}\).
Atoms follow the \textit{atom convention}: atoms \(a, b, c,\ldots \)  over \(\atomSet\) represent different names.
Let \(\Sigma\) be a finite set of term-formers disjoint from \(\mathbb{A}\) and \({\cal X}\) such that  each \(f \in \Sigma\) has a given arity (a non-negative integer \(n\)), written $f:n$.
A \textit{permutation} \(\pi\) is a bijection on \(\atomSet\) with finite domain, i.e., the set \(\dom{\pi} = \{a \in \atomSet \mid \pi(a) \neq a \}\) is finite. The identity permutation is denoted $id$. The composition of permutations $\pi$ and $\pi'$ is denoted $\pi\circ \pi'$ and $\pi^{-1}$ denotes the inverse of the permutation $\pi$.

{\em Nominal terms} are defined  inductively by the grammar:
 $$ s,t,u ::= a \mid \pAction{\pi}{X} \mid  \abs{a}{t} \mid f(t_1, \ldots, t_n),$$
where \(a\) is an {\em atom}, \(\pAction{\pi}{X}\) is a (moderated) variable, \(\abs{a}{t}\) is the {\em abstraction} of \(a\) in the term \(t\), and \(f(t_1, \ldots, t_n)\) is a {\em function application} with   \(f:n \in \Sigma\).  We abbreviate $id\cdot X$ as $X$. A term is \textit{ground} if it does not contain variables.
A \emph{position} $\context{C}$ is defined as a pair $(s, \_)$ of a term and a distinguished variable $\_ \in \mathcal{X}$ that occurs exactly once in $s$\footnote{This notion of position is standard for nominal terms and equivalent to the notion of position as a path in a term.}.
We write $\context{C}[s']$ for $\context{C}[\_ \mapsto s']$ and if $s\equiv \context{C}[s']$, we say that $s'$ is a subterm of $s$ with position $\context{C}$. The root position will be denoted by $\context{C} = [\_]$.
For example, if $s = f([a]X,b)$, the set of positions of $s$ is $\{[\_],f([\_],b),f([a][\_],b),f([a]X,[\_])\}$.

The \emph{permutation action} of $\pi$ on a term $t$ is defined by induction on the term structure as follows:
$
\pi\cdot a = \pi(a),
\pi\cdot [a]t = [\pi\cdot a](\pi\cdot t),
 \pi \cdot (\pi'\cdot X) = (\pi \circ \pi')\cdot X,
 \pi\cdot f(t_1,\ldots, t_n) = f(\pi\cdot t_1, \ldots, \pi\cdot t_n).
$
The \emph{difference set} of two permutations is denoted
$ds(\pi,\pi') = \{ n \; | \; \pi\cdot n \neq \pi'\cdot n \}$, and $ds(\pi,\pi')\# X$ represents the set of constraints $\{ n\# X \; | \; n \in ds(\pi,\pi') \}$. For example, if $\pi = (a \ b)(c \ d)$ and $\pi' = (c \ b)$, then $ds(\pi,\pi') = \{a,b,c,d\}$ since $\pi$ and $\pi'$ act differently in each atom: note that $\pi(a)=b$ and $\pi'(a)=a$. In addition,  $ds(\pi,\pi')\# X = \{a\#X,b\#X,c\#X,d\#X\}$.

The \emph{substitution action} of $\theta$ on a term $t$ is defined by induction on the term structure:
$
a\theta = a,
([a]t)\theta = [a](t\theta),
 (\pi\cdot X)\theta = \pi\cdot (X\theta),
 f(t_1,\ldots, t_n)\theta = f(t_1\theta, \ldots, t_n\theta).
$
The domain of a substitution $\theta$ is written as $\texttt{dom}(\theta)$, and the image  $\texttt{Im}(\theta)$. Therefore, if $X \not\in \texttt{dom}(\theta)$ then $X\theta = X$. The restriction of the domain to a set $V\subseteq {\cal X}$ of variables generates the substitution $\theta|_V$,  the \emph{restriction of $\theta$ to $V$}.
The identity substitution is denoted {\tt Id}.  The composition of two substitutions $\theta_1$ and $\theta_2$ is denoted by simple juxtaposition
as $\theta_1\theta_2$ and  $t\theta_1\theta_2=(t\theta_1)\theta_2$.

\subsection{Nominal Constraints and  Judgements}
In the nominal framework, there are two kinds of constraints:  (i) $s\aleq t$ is an (alpha-)equality constraint, which means that $s$ and $t$ are equal up to the renaming of bound names; (ii) $a\#t$ is a freshness constraint which means that $a$ cannot occur unabstracted in $t$. A freshness constraint of the form $a\#a$ is called {\em inconsistent} and a freshness constraint of the form $a\#X$ is called {\em primitive}.

A freshness {\em context}, denoted $\nabla,\Delta$, consists of finite sets of primitive freshness constraints. We  abbreviate $\{a\# X, b\# X, c\# X\}$ as $a,b,c\# X$. {\em Judgements} have the form $\Delta\vdash s\aleq t$ and $\Delta\vdash a\# t$ and are derived using the rules in Figure~\ref{fig:fresh-and-equalrelation}.
Given a finite set $Pr$ of (freshness or equality) constraints and a context $\Delta$, we write $\Delta\vdash Pr$ if $\Delta\vdash C$ can be derived, for each constraint $C\in Pr$.

    \begin{figure}[!t]
    \footnotesize{
    \begin{mathpar}
    \inferrule*[right={\scriptsize (\#a)}]{\ }{\Delta \vdash a \# b}
 \and
    \inferrule*[right={\scriptsize (\#x)}]{(\pi^{-1} \cdot a \# X) \in \Delta}
    {\Delta \vdash a \# \pi \cdot X} ~~
    \and
    \inferrule*[right={\scriptsize (\#aba)}]{\ }{\Delta \vdash a \# [a]t}
%
\and
    \inferrule*[right=\scriptsize{(\#abb)}]{\Delta \vdash a \# [b]t }{\Delta \vdash a \# t}
    \and
    \inferrule*[right={\scriptsize(\#f)}]{\Delta \vdash a \# t_1 \\  \cdots \\ \Delta \vdash a \# t_n}{\Delta \vdash a \# f(t_1, \cdots, t_n)} ~~
\and
\inferrule*[right={\scriptsize ($\aleq$a)}]{\ }{\Delta \vdash a\aleq a}
  \and
    \inferrule*[right={\scriptsize ($\aleq$x)}]{ds(\pi,\pi')\# X \in \Delta }{\Delta\vdash \pi\cdot X \aleq \pi'\cdot X} ~~
    \and
    \inferrule*[right={\scriptsize ($\aleq$f)}]{\Delta \vdash s_1 \aleq t_1 ~  \cdots  \Delta \vdash s_n\aleq t_n }{\Delta \vdash f(s_1, \cdots, s_n) \ \approx_\alpha \ f(t_1, \cdots, t_n)} ~~
%
\and
    \inferrule*[right={\scriptsize ($\aleq$aba)}]{\Delta\vdash s\aleq t}{\Delta\vdash [a]s\aleq [a]t} ~~
\and
    \inferrule*[right={\scriptsize ($\aleq$abb)}]{\Delta\vdash s\aleq (a\ b)\cdot t ~~ \Delta\vdash a\# t}{\Delta\vdash [a]s\aleq [b]t} ~~
    \end{mathpar}
    }
    \caption{Rules for $\#$ and $\approx_{\alpha}$}\label{fig:fresh-and-equalrelation}
    \end{figure}

\begin{example}\label{ex:deriv}
Consider the signature {\normalfont $\Sigma_{\lambda}=\{\texttt{lam}:1, \texttt{app}:2\}$} for the lambda-calculus.
Let {\normalfont $Pr=\{\texttt{lam}[a]\texttt{app}(a,X)\aleq \texttt{lam}[b]\texttt{app}(b,(a \ c)\cdot X)\}$} be a problem and  $\Delta = \{a,b,c\#X\}$ be a context. The judgement {\normalfont $\Delta\vdash \texttt{lam}[a]\texttt{app}(a,X)\aleq \texttt{lam}[b]\texttt{app}(b,(a \ c)\cdot X)$} is derivable using the rules in Figure~\ref{fig:fresh-and-equalrelation}.

\end{example}

Given a freshness context $\Delta$ and a substitution $\theta$, $\Delta\theta$ consists of the set of  freshness constraints $\{a\#X\theta \mid a\#X \in \Delta\}$.  We denote by $\nfpair{\Delta\theta}$ the least context $\Delta'$ such that $\Delta'\vdash \Delta\theta$, if such context exists. If such context does not exist,  $\Delta\theta$ contains a constraint that cannot be derived using the rules in Figure~\ref{fig:fresh-and-equalrelation}. For example, for $\Delta=\{a\#X,b\#Z\}$, $\theta=[X\mapsto f(a,a),Z\mapsto c]$ and $\Delta\theta=\{a\#f(a,a), b\#c\}$ there exists no $\Gamma$ such that $\Gamma\vdash a\# f(a,a)$. Thus, $\nfpair{\Delta\theta}$ does not exist. Differently, for  $\theta'=[X\mapsto f(b,Y),Z\mapsto c]$, we have $\nfpair{\Delta\theta'}=\{a\#Y\}$.

A {\em term-in-context} $\Delta\vdash t$ expresses that the term $t$ has the freshness constraints imposed by $\Delta$. For example, $a\#X\vdash f(X,h(b))$ expresses that $a$ cannot occur free in instances of $X$.

\subsection{Equality Modulo an Equational Theory \theory{E}}\label{ssec:theories}
A nominal {\em identity}  is a  pair in-context, $\nabla\vdash (l,r)$, of nominal terms $l$ and $r$ under a (possibly empty) freshness context $\nabla$.  We denote such identity as $\nabla \vdash l\approx r$.

Given an equivariant set \theory{E} of identities closed by symmetry, the  induced one-step equality is generated by:
{\small
\begin{mathpar}
    \inferrule*[right=($Ax_\theory{E}$)]{\Delta, \Gamma \vdash \big(\nabla\theta, \\  s\approx_\alpha \context{C}[(l\theta)], \\ \context{C}[(r\theta)]\approx_\alpha t \big)}
    {\Delta \vdash s\approx_{\alpha,\theory{E}} t}
\end{mathpar}
}

\noindent for any position $\context{C}$,
substitution $\theta$, and fresh context $\Gamma$ (so if $a\# X \in \Gamma$ then $a$ is not mentioned in $\Delta, s, t$).

The {\em equational theory} induced by \theory{E}, which we will also denote as \theory{E}, is the least transitive reflexive closure of  the one-step equality.

In other words, an identity from $\theory{E}$ (possibly after renaming) can be applied at any position  (via $\context{C}$) and with any substitution $\theta$, provided freshness conditions are respected.
Notice that $\alpha$-equality is naturally embedded into this framework, since the equations in $\theory{E}$ can include abstractions\footnote{In the case where $\theory{E}$ is purely first-order (i.e., no binding structure), the freshness context $\nabla$ and the equivariant closure become superfluous and can be omitted.} and we use $\alpha$-equality in the premises of $(Ax_\theory{E})$.  From now on, to simplify the notation, we will write $\ealeq$ instead of $\approx_{\alpha,\theory{E}}$.

The following proposition is a direct consequence of the definition of equational theory.

\begin{proposition}[\theory{E}-Compatibility with substitutions]\label{def:e-compatible}
    Equational theories
    are {\em compatible with  substitutions}, that is,  whenever $\pair{\Delta\theta}_{nf}$ exists, if $\Delta \vdash Pr$ then $\pair{\Delta\theta}_{nf} \vdash Pr\theta$.
\end{proposition}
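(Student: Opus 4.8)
The statement is that equational theories are compatible with substitutions: if $\pair{\Delta\theta}_{nf}$ exists and $\Delta \vdash Pr$, then $\pair{\Delta\theta}_{nf} \vdash Pr\theta$. I would split $Pr$ into its individual constraints, since $\Delta\vdash Pr$ means $\Delta\vdash C$ for each $C\in Pr$, and it suffices to show $\pair{\Delta\theta}_{nf}\vdash C\theta$ for each such $C$. There are two cases according to whether $C$ is a freshness constraint $a\#t$ or an equality constraint $s\ealeq t$.

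\textbf{Freshness case.} Here I would prove the more basic lemma that if $\Delta\vdash a\#t$ and $\pair{\Delta\theta}_{nf}$ exists, then $\pair{\Delta\theta}_{nf}\vdash a\#(t\theta)$, by induction on the derivation of $\Delta\vdash a\#t$ using the rules $(\#a),(\#x),(\#aba),(\#abb),(\#f)$ in Figure~\ref{fig:fresh-and-equalrelation}. The only nontrivial rule is $(\#x)$: if $\Delta\vdash a\#\pi\act X$ because $(\pi^{-1}\act a\#X)\in\Delta$, then applying $\theta$ we must show $\pair{\Delta\theta}_{nf}\vdash a\#\pi\act(X\theta)$, equivalently $\pair{\Delta\theta}_{nf}\vdash \pi^{-1}\act a\#(X\theta)$. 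But $(\pi^{-1}\act a\#X)\in\Delta$ gives $(\pi^{-1}\act a\#(X\theta))\in\Delta\theta$ (by definition of $\Delta\theta$), and since $\pair{\Delta\theta}_{nf}$ is by definition the least context with $\pair{\Delta\theta}_{nf}\vdash\Delta\theta$, we get the claim. The abstraction and function-symbol rules go through directly by the inductive hypothesis together with the fact that $\theta$ commutes with abstraction and function application (from the definition of the substitution action). This freshness lemma is essentially the standard result from~\cite{NominalRewriting/FernandezG07}; I would cite it if already available, otherwise include the short induction.

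\textbf{Equality case.} Here I would unfold the definition of $\ealeq$ as the reflexive–transitive closure of the one-step equality generated by $(Ax_\theory{E})$, and argue by induction on the number of one-step equality steps. Reflexivity is trivial. For the inductive step it suffices to treat a single one-step equality $\Delta\vdash s\ealeq t$: by $(Ax_\theory{E})$ there exist a position $\context{C}$, a substitution $\rho$, and a fresh context $\Gamma$ with $\Delta,\Gamma\vdash(\nabla\rho,\ s\aleq\context{C}[l\rho],\ \context{C}[r\rho]\aleq t)$. Applying $\theta$ throughout and choosing the witnessing data $(\context{C}\theta,\ \rho\theta,\ \Gamma')$ — where $\Gamma'$ is a suitably $\alpha$-renamed version of $\Gamma$ with atoms chosen fresh for $\Delta\theta, s\theta, t\theta$ (possible since $\Sigma$, atoms, and the constraints involved are all finite and $\atomSet$ is infinite, and $\theory{E}$ is equivariant so the renamed identity is still in $\theory{E}$) — I want to re-instantiate $(Ax_\theory{E})$ to conclude $\pair{\Delta\theta}_{nf}\vdash s\theta\ealeq t\theta$, and then note $s\theta\equiv (s)\theta$, $t\theta\equiv(t)\theta$, $\context{C}[l\rho]\theta\equiv(\context{C}\theta)[(l\rho)\theta]$, etc., by the definition of the substitution action (which commutes with positions). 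For the premises of the re-instantiated rule I need: (i) $\pair{\Delta\theta}_{nf},\Gamma'\vdash\nabla\rho\theta$ — this follows from $\Delta,\Gamma\vdash\nabla\rho$ via the freshness case above applied with substitution $\theta$, after checking that $\pair{(\Delta,\Gamma)\theta}_{nf}$ exists and relates appropriately to $\pair{\Delta\theta}_{nf},\Gamma'$; and (ii) the two $\aleq$-premises, $\pair{\Delta\theta}_{nf}\vdash s\theta\aleq(\context{C}\theta)[(l\rho)\theta]$ and $\pair{\Delta\theta}_{nf}\vdash(\context{C}\theta)[(r\rho)\theta]\aleq t\theta$ — these follow from the analogous but simpler fact that $\aleq$ (the $\theory{E}=\emptyset$ case, which is the special case of the present proposition for the plain nominal algebra) is compatible with substitutions, again a standard result from~\cite{NominalRewriting/FernandezG07}.

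\textbf{Main obstacle.} The delicate point is the bookkeeping around the auxiliary fresh context $\Gamma$: one must rename its atoms so that they stay fresh for the substituted data $\Delta\theta, s\theta, t\theta$ (the atoms introduced by $\theta$ in the image may clash with the original $\Gamma$), and then argue that $\pair{\Delta\theta}_{nf}$ together with this renamed $\Gamma'$ still derives the substituted side conditions $\nabla\rho\theta$. Making this precise requires a careful statement of how $\pair{(\Delta\cup\Gamma)\theta}_{nf}$ decomposes and an appeal to equivariance of $\theory{E}$ so that applying a permutation to the whole $(Ax_\theory{E})$ instance yields another valid instance. Everything else is a routine structural induction once the freshness lemma and the $\alpha$-equivalence version of substitution-compatibility are in hand; I would state those two as cited lemmas and keep the proof focused on the one-step $(Ax_\theory{E})$ case and the transitive closure.
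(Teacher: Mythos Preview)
Your proposal is correct and is precisely the natural way to verify the statement; the paper, however, does not give a proof at all. It simply remarks that the proposition ``is a direct consequence of the definition of equational theory'' and moves on. So there is nothing to compare at the level of proof strategy: you are supplying the argument that the paper omits.

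Your decomposition (reduce to single constraints, handle freshness by induction on the $\#$-rules, handle $\ealeq$ by induction on the length of the one-step chain and re-instantiate $(Ax_\theory{E})$ using compatibility of $\aleq$ with substitutions) is exactly right, and your identification of the only genuinely delicate point---renaming the auxiliary fresh context $\Gamma$ so that its atoms remain fresh for the data in the image of $\theta$, and invoking equivariance of $\theory{E}$ to keep the instance valid---is the one place where a reader might get stuck. If you want to match the paper's level of detail you can simply assert the result; if you want to actually prove it, what you wrote is the argument.
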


The next result which will be used in the proofs is a direct consequence of Proposition~\ref{def:e-compatible} and follows by inspection of the rules in Figure~\ref{fig:fresh-and-equalrelation}.

\begin{lemma}\label{lem:compatibility}
    Consider the contexts $\Delta,\nabla,\Gamma$ and the substitutions $\sigma,\theta$. If we have $\nabla \vdash \Delta\sigma$ and $\Gamma \vdash \nabla\theta$ then $\Gamma \vdash \Delta\sigma\theta$.
\end{lemma}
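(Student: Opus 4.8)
The plan is to reduce the statement to a single application of Proposition~\ref{def:e-compatible} ($\theory{E}$-compatibility with substitutions), after noticing that the hypothesis $\nabla \vdash \Delta\sigma$ is exactly the kind of judgement that Proposition~\ref{def:e-compatible} accepts as input, with $\sigma$ playing the role of the substitution and $\nabla$ playing the role of $\pair{\Delta\sigma}_{nf}$ (or at least a context deriving it). First I would observe that $\Delta$ is a freshness context, i.e.\ a finite set of primitive constraints $a\#X$, so ``$\nabla\vdash\Delta\sigma$'' means precisely $\nabla\vdash Pr$ for $Pr = \Delta$ and substitution $\sigma$ — in the notation of Proposition~\ref{def:e-compatible}, we may instantiate $Pr := \Delta$, $\theta := \sigma$, and the ambient context by $\nabla$. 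The subtlety is that Proposition~\ref{def:e-compatible} is stated with $\Delta\vdash Pr$ and concludes $\pair{\Delta\theta}_{nf}\vdash Pr\theta$; here I want to start from a context $\nabla$ that already derives $\Delta\sigma$, not necessarily the least one. So the first real step is to record that if $\nabla\vdash\Delta\sigma$ and $\pair{\Delta\sigma}_{nf}$ exists, then $\nabla\vdash \pair{\Delta\sigma}_{nf}$ — which is immediate, since $\pair{\Delta\sigma}_{nf}$ is by definition the \emph{least} context deriving $\Delta\sigma$, hence its constraints are all derivable from any context that derives $\Delta\sigma$, in particular from $\nabla$; and dually applying weakening/monotonicity of $\vdash$ under the rules of Figure~\ref{fig:fresh-and-equalrelation}.

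The main line of argument then runs as follows. From the hypothesis $\Gamma\vdash\nabla\theta$ and Proposition~\ref{def:e-compatible} applied with ambient context $\nabla$, substitution $\theta$, and $Pr := \Delta\sigma$ (which $\nabla$ derives by the first hypothesis), I obtain $\pair{\nabla\theta}_{nf}\vdash (\Delta\sigma)\theta$, provided $\pair{\nabla\theta}_{nf}$ exists — and it does exist because $\Gamma\vdash\nabla\theta$ witnesses derivability of $\nabla\theta$, so the least such context is defined. Next, since $\Gamma$ derives $\nabla\theta$ and $\pair{\nabla\theta}_{nf}$ is the least context doing so, I have $\Gamma\vdash\pair{\nabla\theta}_{nf}$; composing derivations (a straightforward induction on the derivation of a freshness judgement: replacing leaf uses of the rule $(\#x)$ by the derivations supplied by $\Gamma$) yields $\Gamma\vdash(\Delta\sigma)\theta$. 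Finally, $(\Delta\sigma)\theta = \Delta(\sigma\theta)$ by associativity of substitution composition, which is how $t\theta_1\theta_2 = (t\theta_1)\theta_2$ was defined in the Preliminaries, applied componentwise to each primitive constraint in $\Delta$. Hence $\Gamma\vdash\Delta\sigma\theta$, as required. I would also note the edge case that $\pair{\Delta\sigma\theta}_{nf}$ exists (needed for a clean invocation), which follows a fortiori from $\Gamma$ deriving it.

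The routine bookkeeping — checking that ``$\nabla$ derives a set of constraints'' transfers along composition, and that $\pair{-}_{nf}$-contexts are entailed by any context deriving the same constraints — is all by inspection of the rules in Figure~\ref{fig:fresh-and-equalrelation} (only $(\#a)$, $(\#x)$, $(\#aba)$, $(\#abb)$, $(\#f)$ can appear), exactly as the paragraph preceding the lemma already announces. The one step that deserves a sentence of care, and the place I expect the only genuine friction, is the passage from Proposition~\ref{def:e-compatible}'s conclusion about $\pair{\nabla\theta}_{nf}$ to the desired conclusion about $\Gamma$: this is a transitivity/cut property for freshness entailment (if $\Gamma\vdash\nabla'$ and $\nabla'\vdash C$ then $\Gamma\vdash C$), which holds because freshness derivations can be composed by grafting, but it is worth stating explicitly rather than leaving implicit. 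Everything else is a direct consequence of Proposition~\ref{def:e-compatible} together with associativity of substitution composition.
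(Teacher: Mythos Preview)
Your proposal is correct and follows essentially the same route the paper indicates: apply Proposition~\ref{def:e-compatible} with ambient context $\nabla$, problem $Pr := \Delta\sigma$, and substitution $\theta$ to obtain $\pair{\nabla\theta}_{nf}\vdash(\Delta\sigma)\theta$, then use monotonicity of freshness derivation (inspection of the rules in Figure~\ref{fig:fresh-and-equalrelation}) together with $\Gamma\vdash\nabla\theta$ to pass to $\Gamma$. Your opening paragraph briefly mis-instantiates $Pr$ as $\Delta$ before correcting to $Pr := \Delta\sigma$, but the main line of argument is exactly the one the paper sketches as a ``direct consequence of Proposition~\ref{def:e-compatible}''.
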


\subsection{ Nominal \theory{E}-unification}

We now introduce the basic definitions for nominal unification modulo an equational theory $\theory{E}$, which will be used throughout the paper. A substitution $\theta_1$ is said to be {\em more general} than a substitution $\theta_2$ if there exist a substitution $\sigma$ and a context $\Delta$ such that, for every variable $X \in \mathcal{X}$, we have $\Delta \vdash X\theta_2 \ealeq X\theta_1\sigma$. This notion naturally extends to pairs $(\Delta_1, \theta_1)$ and $(\Delta_2, \theta_2)$, where substitutions may involve variables subject to freshness constraints and where the equational theory $\theory{E}$ is taken into account. Specifically, we say that $(\Delta_1, \theta_1)$ is more general than $(\Delta_2, \theta_2)$, written $(\Delta_1, \theta_1) \leq_{\theory{E}} (\Delta_2, \theta_2)$, if there exists a substitution $\theta'$ such that $\Delta_2 \vdash X\theta_1\theta' \ealeq X\theta_2$ for all $X \in \mathcal{X}$, and $\Delta_2 \vdash \Delta_1\theta'$. We denote by $\leq^V_{\theory{E}}$ the restriction of $\leq_{\theory{E}}$ to a given set $V$ of variables. Algorithms for (nominal) unification can be defined based on simplification rules as seen in \citep{Baader98,NomUnification/UrbanPG04,NominalRewriting/FernandezG07}.

\begin{definition}\label{def:nominalEunif}
A \emph{nominal $\theory{E}$-unification problem (in-context)} $\uP$ is a set of equations of the form
 $(\nabla\vdash l)\;_?{\overset{\theory{E}}{\approx}}_?  \; (\Delta\vdash s)$.
The pair $(\Delta',\theta)$ is an $\theory{E}$-solution (or $\theory{E}$-unifier) of $\uP$ iff
$(\Delta',\theta)$ solves each  equation  in $\uP$ 
that is, for each such $(\nabla\vdash l) \;_?{\overset{\theory{E}}{\approx}}_?  \; (\Delta\vdash s)$,
the following hold:
\begin{itemize}
\item  $\Delta'\vdash \nabla\theta, \Delta\theta$; and
\item  $\Delta'\vdash  l\theta\ealeq s\theta$.
\end{itemize}
The set of all $\theory{E}$-solutions $\uP$ is denoted as $\mathcal{U}_{\theory{E}}(\uP)$.
A subset $\mathcal{V}\in \mathcal{U}_{\theory{E}}(\uP)$ is  a \emph{complete set of $\theory{E}$-solutions of $\uP$} if for all $(\Delta_1,\theta_1) \in \mathcal{U}_{\theory{E}}(\uP)$, there exists $(\Delta_2, \theta_2) \in \mathcal{V}$ such that $(\Delta_2,\theta_2) \leq_{\theory{E}} (\Delta_1, \theta_1)$. In addition, $\mathcal{V}$ is \emph{minimal} iff $\forall (\Delta_1,\theta_1), (\Delta_2,\theta_2) \in \mathcal{V}$, both  $(\Delta_1,\theta_1) \not\leq_\theory{E} (\Delta_2,\theta_2)$ and $(\Delta_2,\theta_2) \not\leq_\theory{E} (\Delta_1,\theta_1)$.
\end{definition}

\begin{example}\label{ex:c-unification} Let $\Sigma = \{h:1, f^\theory{C}:2, \oplus: 2\}$ be a signature, where $f^\theory{C}$ and $\oplus$ are commutative symbols, i.e., and $\theory{C} =\{ \ \vdash f^\theory{C}(X,Y) \approx f^\theory{C}(Y,X), \ \vdash X\oplus Y \approx Y\oplus X\}$ be the axioms defining the theory.
Consider the \theory{C}-unification problem:
\begin{equation}\label{eq:1}
    (\emptyset\vdash f^\theory{C}([a][b]Z,Z)) \; _?{\overset{\theory{C}}{\approx}}_? \; (\emptyset\vdash  f^\theory{C}([b][a]X,X))
\end{equation}
Using a standard \theory{C}-decomposition rule, this problem branches to either
$$\text{(i)}\; \{[a][b]Z \cunif [b][a]X, Z\cunif X\} \qquad \text{(ii)}\; \{[a][b]Z\cunif X, Z\cunif [b][a]X\}$$
The interesting branch is $\text{(i)}$. A simple instantiation rule transforms the problem in $\text{(i)}$ into the problem $\{[a][b]X \cunif [b][a]X\}$ using the substitution $\sigma=[Z\mapsto X]$.
The problem~\ref{eq:1} has a solution:
$(\{a\#X,b\#X\}, \sigma\rho_1)$ with instances $\rho_1$ of $X$ that do not contain free occurrences of $a$ or $b$. E.g. for $\rho_1=[X\mapsto g(e)]$, we have
$[a][b]X\rho_1 \caleq [b][a]X\rho_1$.
However, there are infinite solutions considering a \theory{C}-function symbol whose arguments contain $a$ and $b$. In fact,
\begin{itemize}
    \item $(\emptyset, \rho_2
= \sigma[X\mapsto a \oplus b])$: since $X\rho_2 = a\oplus b \caleq b \oplus a = (a\ b)\cdot X\rho_2$
    \item $(\emptyset, \rho_3
= \sigma[X\mapsto (a \oplus b)\oplus(a \oplus b)])$: since
$X\rho_3 = (a \oplus b)\oplus(a \oplus b) \caleq (b \oplus a)\oplus(b \oplus a) = (a\ b)\cdot X\rho_3.$ And so on and so forth.
\end{itemize}

\end{example}

\section{Nominal Rewriting Modulo Theories} \label{sec:e-rewriting}

An {\em equational nominal rewrite system} (ENRS) is a set of nominal identities $\theory{T}$ that can be partitioned into two components: a set $\theory{R}$ of oriented nominal rewrite rules--possibly featuring freshness constraints such as $\nabla\vdash l\to r$--and a set $\theory{E}$ of unordered (equational) identities. We denote this decomposition as $\theory{R} \cup {\alpha} \cup \theory{E}$, or more concisely, $\theory{R} \cup \theory{E}_\alpha$.

In the nominal framework, we work modulo $\alpha$-equivalence, meaning all terms are considered up to $\alpha$-renaming.  With an ENRS, we generalise this by reasoning modulo combined $\alpha,\theory{T}$-equivalence classes, where part of $\theory{T}$ may be oriented into a set of nominal rewrite rules ($\theory{R}$), while the rest ($\theory{E}$) remains as equational axioms. This combination gives rise to the composite rewriting relation: $\ealeq \circ \to_\theory{R}\circ \ealeq$, whose step-reduction is defined below.
As said before, we work implicitly with the {\em equivariant closure} of rewrite rules $\theory{R}$.

When $\theory{E} = \emptyset$, this reduces to plain nominal rewriting $\aleq \circ \to_\theory{R} \circ \aleq$, as introduced in~\cite{NominalRewriting/FernandezG07}:
{\small
\begin{mathpar}
\inferrule{ s\equiv\context{C}[s'] \\ \Delta \vdash \nabla \theta, \\  s'\aleq l\theta, \\ t \aleq \context{C}[ r \theta]}{\Delta\vdash s \to_\theory{R} t}
\end{mathpar}
}

\noindent for a substitution $\theta$,
subterm $s'$ of $s$, position $\context{C}$ and rule $\nabla\vdash l\to r\in \theory{R}$.
As expected, to find the substitution
above to build a rewriting step, we need to solve a (nominal) matching problem.
As usual, we will write $\to^+$ to denote at least one-step reduction, and $\to^*$ to denote a finite (possibly zero) number of reductions.

In this paper, we are interested in the case $\theory{E}\neq \emptyset$, and the rewrite relation is defined in $(\alpha,\theory{E})$-equivalence classes:

\begin{definition}[Nominal \theory{R/E}-rewriting]\label{def:RE-rewriting}
Let
$\theory{R}{\cup}\theory{E}_\alpha$
be an ENRS. The relation $\to_\theory{R/E}$ is induced by the composition $\ealeq \circ \to_\theory{R}\circ \ealeq$. A nominal term-in-context $\Delta \vdash s$ reduces with   $\to_{\theory{R/E}}$, when a term in its $\theory{E}$-equivalence class reduces via $\to_{\theory{R}}$ as below:
\begin{center}
$\Delta \vdash (s \to_{\theory{R/E}} t)$ iff there exist $s',t'$ such that $\Delta \vdash (s \ealeq s' \to_\theory{R} t' \ealeq t)$.
\end{center}
We say  that $\theory{R}$ is \emph{$\theory{E}$-confluent} if whenever $\Delta \vdash s \to_\theory{R/E}^* t$ and $\Delta \vdash s \to_\theory{R/E}^* u$,  there exist terms $t',u'$ such that $\Delta \vdash t \to_\theory{R/E}^* t'$, $\Delta \vdash u \to_\theory{R/E}^* u'$ and $\Delta\vdash t'\ealeq u'$.
Also, $\theory{R}$ is said to be \emph{$\theory{E}$-terminating} if there is no infinite $\to_\theory{R/E}$ sequence.
 $\theory{R}$ is called \emph{$\theory{E}$-convergent} if it is $\theory{E}$-confluent and $\theory{E}$-terminating.
\end{definition}

The following example illustrates an ENRS to compute the prenex normal form of a first-order formula. We consider the commutativity of the connectives $\wedge$ and $\vee$.
\begin{example}[Prenex normal form rules]\label{ex:prenex-rules}
    Consider the signature for the first-order logic $\Sigma = \{\forall, \exists, \lnot, \land, \lor \}$, let $\theory{C} = \{\;\vdash P \lor Q \approx Q \lor P, \;\vdash P \land Q \approx Q \land P\}$ be the commutative theory. The prenex normal form rules can be specified by the following set $\theory{R}$ of nominal rewrite rules:
{\small
\[
\begin{array}{lrlc@{\hspace{.2cm}}lrl}
    R_1: & a \# P & \vdash \; P \land \forall[a]Q \rightarrow \forall [a](P \land Q)& \ &R_5: & &\vdash \; \lnot (\exists[a]Q) \rightarrow \forall[a] \lnot Q\\
    R_2: & a \# P & \vdash \; P \lor \forall[a] Q \rightarrow \forall[a] (P \lor Q)&&R_6: && \vdash \; \lnot (\forall[a]Q) \rightarrow \exists[a] \lnot Q\\
    R_3: & a \# P & \vdash \; P \land \exists[a]Q \rightarrow \exists[a](P \land Q)&&R_7: &&\vdash \; \exists[a](\forall[b]Q) \rightarrow \forall[b](\exists[a]Q)\\
    R_4: & a \# P & \vdash \; P \lor \exists[a]Q \rightarrow \exists[a](P \lor Q)&&&&\\
\end{array}
\]
}
\end{example}

In Definition~\ref{def:RE-rewriting}, the relation $\to_\theory{R/E}$ operates on $\alpha,\theory{E}$-congruence classes of terms. These classes may be infinite—even when $\theory{E}$ itself induces only finite congruence classes—due to the presence of abstracted names and the infinite possibilities for $\alpha$-renaming, as illustrated in Examples~\ref{ex:overview1} and~\ref{ex:overview2}.

While the pure $\alpha$-equivalence relation $\approx_\alpha$ is decidable, combining it with an equational theory $\theory{E}$ that admits infinite congruence classes can lead to undecidability of the relation $\to_{\theory{R/E}}$, similarly to what happens in the first-order setting. To address this, the nominal rewriting relation $\to_{\theory{R,E}}$ uses nominal $\theory{E}$-matching, avoiding the need to inspect the entire $\alpha,\theory{E}$-congruence class of a term.

\begin{definition}[Nominal \theory{R,E}-rewriting]\label{def:rew-modC}
The \emph{one-step \theory{R,E}-rewrite relation} $\Delta\vdash s\rightarrow_{\theory{R,E}} t$ is the least relation such that there exists $R = (\nabla\vdash l\rightarrow r)\in \theory{R}$, position $\context{C}$, term $s'$,
and substitution $\theta$,
    \begin{prooftree}
    \AxiomC{$s\equiv \context{C}[s']$}
    \AxiomC{$\Delta\vdash \big( \nabla\theta,\ s' \ealeq  l\theta, \ \context{C}[r\theta] \aleq t\big)$}
    \BinaryInfC{$\Delta\vdash s\rightarrow_{\theory{R,E}}\  t $}
    \end{prooftree}

The \theory{E}-\emph{rewrite relation} $\Delta \vdash s \rightarrow_{\theory{R,E}}^* t$ is
the least relation that includes $\rightarrow_{\theory{R,E}}$ and is closed by reflexivity and transitivity of $\to_{\theory{R,E}}$.
If
$\Delta \vdash s \to_\theory{R,E}^* t$ and $\Delta \vdash s \to_\theory{R,E}^* u$, then we say that \emph{\theory{R,E} is \theory{E}-confluent} when there exist terms $t',u'$ such that $\Delta \vdash t \to_\theory{R,E}^* t'$, $\Delta \vdash u \to_\theory{R,E}^* u'$ and $\Delta\vdash t'\ealeq u'$.
\end{definition}
 A term-in-context $\Delta\vdash t$ is said to be in {\em \theory{R,E}-normal} form (\theory{R/E}-normal form) whenever one cannot apply another step of $\to_\theory{R,E}$ ($\to_\theory{R/E}$). Note that if the term $t$ is ground, the context is $\Delta=\emptyset$.

Note that $\rew \subseteq \erew\subseteq \ebarrew$.

\begin{example}[Cont. Example~\ref{ex:prenex-rules}]
    Consider the term $ s= S'\lor (\exists [a]Q' \lor P')$. Note that  to \theory{R,C}-reduce the term  $s$  it is necessary to \theory{C}-match the subterm $s' = \exists [a]Q' \lor P'$ with the lhs of the rule ($R_4$), getting the
    substitution $\theta = [P \mapsto P', Q \mapsto Q']$.  Due to the freshness context of the rule $(R_4)$, we can only make the reduction under the context $\Delta = \{a\#P'\}$.
    We need  to check that the following hold:
        (i) $ a\#P' \vdash a\#P' = (a\#P)\theta$;
        (ii)  $a\# P'\vdash \exists [a]Q' \lor P' \caleq (P \lor \exists [a]Q)\theta$; and
        (iii) $ a\# P'\vdash  \context{C}[(\exists [a](P\lor Q))\theta] =  S' \lor (\exists [a](P'\lor Q'))$. Thus the  one-step \theory{C}-rewrite is:
        $$a\#P' \vdash S'\lor (\exists [a]Q' \lor P') \to_{\theory{R,C}} S'\lor (\exists [a](P'\lor Q')).$$

    Since $\lor$ is a commutative symbol, we could reduce the initial term to three other possible terms because we have two occurrences of the disjunction. Thus, we can ``permute'' the subterms inside the rewriting modulo~$\theory{C}$.
\end{example}

Next, we present an example that features an equational theory $\theory{\forall_C}$ that expresses the property that nested $\forall$-quantification  commute, note that this theory is not expressible in a first-order language:
\begin{example}\label{exa:rew-modulo}
Consider the rule $\theory{R}=\{c'\# X\vdash \forall [c']X\to X\}$   and the equational theory   $\theory{\forall_C}=\{\emptyset \vdash \forall[a]\forall [b] Y \approx \forall [b]\forall [a]Y\}$. Let  $s= \forall [a]\forall [d]R(a,Y)$, where $R$ is a binary relation in our signature.
%
Note that if $\Delta=\{a,b,c',d\#Y\}$ we have $\Delta \vdash \forall[a]\forall[d]R(a,Y) \to_{\theory{R/\forall_C}} \forall[b]R(b,Y)$ since:
$$\Delta\vdash \forall[a]\forall[d]R(a,Y)\approx_{\theory{\forall_C}}\forall[d]\forall[a]R(a,Y)\to_{\theory{R}} \forall[a]R(a,Y)\approx_{\theory{\forall_C}}\forall[b]R(b,Y)$$
and no other rewriting step  with $\to_\theory{R}$ is possible; thus, $\Delta\vdash \forall [b]R(b,Y)$ is a normal form in the class $s\downarrow_\theory{R/\forall_C}$.

Observe that, at position $\context{C}=\forall [a][\_]$ of $s$ we have $s\equiv \context{C}[\forall [d]R(a,Y)]$
and we need to solve the matching problem  $\forall[d]R(a,Y)\approx_? \forall[c']X$ whose solution is $\theta=[X\mapsto R(a,Y)]$.
Thus, there is a step of rewriting $\to_{\theory{R,\forall_C}}$:

$$\Delta \vdash s= \forall[a]\forall[d]R(a,Y) \to_{\theory{R},\forall_\theory{C}} \forall[a]R(a,Y)$$
and no other $\to_\theory{R,\forall_C}$-step is possible. Thus,  $\Delta\vdash \forall[a]R(a,Y)$ is a normal form in the class  $s\downarrow_\theory{R,\forall_C}$.
Note that $\Delta\vdash s\downarrow_\theory{R,\forall_C}\approx_{\alpha/{\forall_\theory{C}}} s\downarrow _\theory{R/\forall_C}$.
\end{example}

Note that this behavior becomes more intricate because the term $s$ contains meta-variables, as we must then account for the associated freshness context $\Delta$—an issue already noted in Example~\ref{ex:overview1} in the
Introduction. In particular, applying rewrite steps and reasoning modulo
$\theory{E}$
 requires careful management of both binding structures and freshness conditions

 Following a similar approach to the classical setting of first-order rewriting modulo equational theories developed by~\citep{JouannaudKK83:Incremental}, we establish a connection between the nominal relations $\to_{\theory{R/E}}$ and $\to_{\theory{R,E}}$, through  an extension of the notion of {\em $\theory{E}$-coherence}, here formulated under a freshness context $\Delta$. This property ensures that rewriting steps respect $\alpha$-equivalence modulo
 \theory{E}  in a controlled way under a freshness context.
\begin{definition}[\(\theory{E}\)-Coherence] \label{def:e-coherence}
The relation $\to_{\theory{R,E}}$ is called \emph{$\theory{E}$-coherent} iff for all $ \Delta, t_1, t_2, t_3$ such that $\Delta\vdash t_1\ealeq t_2$ and $\Delta\vdash t_1 \to_{\theory{R,E}} t_3$, there exist $t_4, t_5, t_6$ such that $\Delta\vdash t_3 \to_{\theory{R,E}}^* t_4$, $t_2 \to_{\theory{R,E}} t_5 \to_{\theory{R,E}}^* t_6$ and $\Delta\vdash t_4 \ealeq t_6$, for some $\Delta$.
\end{definition}

\begin{theorem}
\label{prop:nom-e-coherence}
Let
\theory{R} be a nominal rewrite system that is
\theory{E}-terminating
and \theory{R,E} be \theory{E}-confluent.
Then the \theory{R,E}- and \theory{R/E}-normal forms of any term-in-context $\Delta\vdash t$ are \theory{E}-equal iff $\to_{\theory{R,E}}$ is \theory{E}-coherent.
\end{theorem}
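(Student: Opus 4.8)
The plan is to prove the two directions of the biconditional separately, using $\theory{E}$-termination and $\theory{E}$-confluence of $\to_{\theory{R,E}}$ throughout, and treating $\to_{\theory{R,E}}$ and $\to_{\theory{R/E}}$ carefully since they produce normal forms only up to $\ealeq$.

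\smallskip

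\textbf{($\Leftarrow$) Coherence implies the normal forms coincide up to $\ealeq$.} First I would observe that, by the inclusion $\to_{\theory{R,E}} \subseteq \to_{\theory{R/E}}$ (which follows because $\aleq$ is a sub-relation of $\ealeq$ and hence an $\to_{\theory{R,E}}$-step is in particular an instance of the $\ealeq \circ \to_\theory{R} \circ \ealeq$ composition), every $\to_{\theory{R,E}}$-rewrite sequence is also a $\to_{\theory{R/E}}$-sequence. Thus an $\to_{\theory{R,E}}$-normal form $\Delta \vdash t\!\downarrow_{\theory{R,E}}$ need not be an $\to_{\theory{R/E}}$-normal form, but by $\theory{E}$-termination we may continue reducing it with $\to_{\theory{R/E}}$ to some $\to_{\theory{R/E}}$-normal form $u$. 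The core of the argument is to show that $\theory{E}$-coherence lets us simulate, up to $\ealeq$, an $\to_{\theory{R/E}}$-step by a nonempty $\to_{\theory{R,E}}$-sequence: if $\Delta \vdash v \ealeq v' \to_\theory{R} v'' \ealeq v'''$, then since $v' \to_\theory{R} v''$ is (via $\aleq$-reflexivity in the premises of Definition~\ref{def:rew-modC}) an $\to_{\theory{R,E}}$-step $v' \to_{\theory{R,E}} v''$ and $\Delta \vdash v \ealeq v'$, coherence yields a common $\ealeq$-successor reachable from $v$ by $\to_{\theory{R,E}}$-steps — in particular $v$ is not $\to_{\theory{R,E}}$-normal unless $v''$ was already $\ealeq$-equal to $v$. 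Iterating this (using $\theory{E}$-termination to guarantee the process stops), I would conclude that any $\to_{\theory{R/E}}$-normal form $u$ reachable from $t$ is $\ealeq$-reachable from the $\to_{\theory{R,E}}$-normal form via $\to_{\theory{R,E}}$; then $\theory{E}$-confluence of $\to_{\theory{R,E}}$ forces $\Delta \vdash t\!\downarrow_{\theory{R,E}} \ealeq u = t\!\downarrow_{\theory{R/E}}$.

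\smallskip

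\textbf{($\Rightarrow$) Coinciding normal forms imply coherence.} Here I would argue contrapositively: assume $\to_{\theory{R,E}}$ is not $\theory{E}$-coherent, so there are $\Delta, t_1, t_2, t_3$ with $\Delta \vdash t_1 \ealeq t_2$, $\Delta \vdash t_1 \to_{\theory{R,E}} t_3$, yet no $t_4, t_5, t_6$ satisfying the closing conditions of Definition~\ref{def:e-coherence}. Using $\theory{E}$-termination, reduce $t_3$ to an $\to_{\theory{R,E}}$-normal form $n_3$; then $\Delta \vdash t_1 \to_{\theory{R,E}}^* n_3$ and, because $\Delta \vdash t_1 \ealeq t_2 \ealeq t_1$ (symmetry of $\ealeq$) gives $\Delta \vdash t_2 \to_{\theory{R/E}} t_3 \to_{\theory{R/E}}^* n_3$, the term $n_3$ is also an $\to_{\theory{R/E}}$-reduct of $t_2$. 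If $\to_{\theory{R,E}}$-normal forms and $\to_{\theory{R/E}}$-normal forms of $\Delta \vdash t_2$ were always $\ealeq$-equal (which, combined with $\theory{E}$-confluence, would make normal forms essentially unique up to $\ealeq$), then normalising $n_3$ with $\to_{\theory{R/E}}$ and comparing with the $\to_{\theory{R,E}}$-normal form of $t_2$ would produce the missing closing diagram — contradiction. The technical content is to package the hypothesis "the two normal forms are $\theory{E}$-equal for every term-in-context" into the statement that $\to_{\theory{R,E}}$ has unique normal forms up to $\ealeq$ and that these are reachable both by $\to_{\theory{R,E}}$ and by $\to_{\theory{R/E}}$, which then directly yields the required $t_4, t_5, t_6$.

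\smallskip

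\textbf{Main obstacle.} The delicate point is that neither rewrite relation gives \emph{syntactically} unique normal forms — only uniqueness modulo $\ealeq$ — and that $\to_{\theory{R/E}}$-steps are defined via a composition with $\ealeq$ on both sides while $\to_{\theory{R,E}}$-steps only allow $\aleq$ on the right; so the two relations do not simply interleave. The hard part is therefore the simulation lemma in the ($\Leftarrow$) direction: showing that under $\theory{E}$-coherence every $\to_{\theory{R/E}}$-step from a term $v$ can be matched, up to $\ealeq$, by a genuine (nonempty) $\to_{\theory{R,E}}$-derivation from $v$, while controlling the freshness context $\Delta$ consistently (the quantifier "for some $\Delta$" at the end of Definition~\ref{def:e-coherence} must be handled using $\theory{E}$-compatibility with substitutions, Proposition~\ref{def:e-compatible}, and Lemma~\ref{lem:compatibility}). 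Once this simulation is in place, both implications follow by a standard Newman-style diagram chase using $\theory{E}$-termination and $\theory{E}$-confluence.
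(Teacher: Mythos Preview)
Your two-direction structure and the key ideas --- use coherence to transfer reducibility across $\ealeq$, then tie the two kinds of normal forms together via $\theory{E}$-confluence and $\theory{E}$-termination --- match the paper's proof. Two points deserve correction.

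In the ($\Rightarrow$) direction there is a genuine gap. To close the coherence diagram you must exhibit at least one step $\Delta\vdash t_2 \to_{\theory{R,E}} t_5$; your construction takes $t_6 = t_2{\downarrow}_{\theory{R,E}}$ and $t_4 = n_3 = t_3{\downarrow}_{\theory{R,E}}$ and shows they are $\ealeq$, but if $t_2$ is already $\to_{\theory{R,E}}$-normal there is no $t_5$ to name. The paper rules this case out by a termination argument you omit: if $t_2$ were $\to_{\theory{R,E}}$-normal then (using the hypothesis twice, once on $t_3$ and once on $t_2$) one obtains $\Delta\vdash t_2 \ealeq t_3{\downarrow}_{\theory{R,E}}$, and since $\Delta\vdash t_2 \ealeq t_1 \to_{\theory{R,E}}^{+} t_3{\downarrow}_{\theory{R,E}} \ealeq t_2$ this yields $\Delta\vdash t_2 \to_{\theory{R/E}}^{+} t_2$, contradicting $\theory{E}$-termination. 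You need to add this step.

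In the ($\Leftarrow$) direction your iterative simulation is the unfolding of what the paper does more cleanly by induction on the length of the derivation $\Delta\vdash t \to_{\theory{R/E}}^{*} t{\downarrow}_{\theory{R/E}}$: at the inductive step one peels off $\Delta\vdash t \ealeq t_1 \to_{\theory{R}} t_2$, uses coherence on $t \ealeq t_1$ together with $\theory{E}$-confluence of $\to_{\theory{R,E}}$ on the fork at $t_1$ to obtain $\Delta\vdash t{\downarrow}_{\theory{R,E}} \ealeq t_2{\downarrow}_{\theory{R,E}}$, and then invokes the induction hypothesis on $t_2$. Either route works, but the inductive packaging avoids having to track the accumulating $\ealeq$-slack you would otherwise carry through each iteration. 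Finally, your remark that the trailing ``for some $\Delta$'' in Definition~\ref{def:e-coherence} must be handled via Proposition~\ref{def:e-compatible} and Lemma~\ref{lem:compatibility} is off-track: those results concern instantiating freshness contexts under substitutions, but no substitution is applied anywhere in this argument --- the entire proof takes place under a single fixed $\Delta$.
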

\begin{proof}

    ($\Rightarrow$) Suppose $\Delta\vdash t{\downarrow}_\theory{R/E} \ealeq t{\downarrow}_\theory{R,E}$. 
 Let $t_2$ and $t_3$ be terms such that $\Delta \vdash t \ealeq t_2$ and $\Delta \vdash t \erew t_3 \erew^* t_3{\downarrow_\theory{R,E}}$. By hypothesis, $\Delta\vdash t_3{\downarrow}_\theory{R/E} \ealeq t_3{\downarrow}_\theory{R,E}$, thus $t_3{\downarrow}_\theory{R,E}$ is also a \theory{R/E}-normal form of $t_2$ under $\Delta$, then $\Delta\vdash t_2{\downarrow}_\theory{R/E} \ealeq t_3{\downarrow}_\theory{R/E}$. Again, using the hypothesis, we have $\Delta\vdash t_2{\downarrow}_\theory{R,E} \ealeq t_2{\downarrow}_\theory{R/E}$, and by transitivity of $\ealeq$, we get $\Delta\vdash t_2{\downarrow}_\theory{R,E} \ealeq t_3{\downarrow}_\theory{R,E}$.
    Furthermore, since \theory{R} is \theory{E}-terminating, $t_2{\downarrow}_\theory{R,E}$ cannot be
     equal to $t_2$, otherwise we would have a term $t_1$ such that $\Delta\vdash t_2 \ealeq t_1 \erew^{+} t_3{\downarrow}_\theory{R,E} \ealeq t_2$, which is $\Delta\vdash t_2 \ebarrew^{+} t_2$ a cycle of \theory{R/E}.
    Therefore $\erew$ is \theory{E}-coherent.

    ($\Leftarrow$) Suppose  that $\erew $ is \theory{E}-coherent. We prove that $\Delta \vdash t{\downarrow}_\theory{R,E} \ealeq t{\downarrow}_\theory{R/E}$ for any term $t$, by induction on \theory{R/E}.
    Let $t$ be arbitrary and such that $t \equiv t{\downarrow}_\theory{R/E}$, then it is also in \theory{R,E}-normal form and we are done.
    Consider then that $\Delta\vdash t \ebarrew^+ t{\downarrow}_\theory{R/E}$, that is, $\Delta \vdash t\ealeq t_1 \rew t_2 \ebarrew^* t{\downarrow}_\theory{R/E}$. Observe that $t_1 \not\equiv t_1{\downarrow}_\theory{R,E}$ because we have $\Delta\vdash t_1 \rew t_2$ and $\theory{R} \subseteq \theory{R,E}$, thus $\Delta\vdash t_1 \erew^+ t_1{\downarrow}_\theory{R,E}$.

    By \theory{E}-coherence of $\erew$, we have $\Delta\vdash t \erew^+ t{\downarrow}_\theory{R,E}$ and $\Delta\vdash t{\downarrow}_\theory{R,E} \ealeq t_1{\downarrow}_\theory{R,E}$.
    Now, we apply the \theory{E}-confluence property of \theory{R,E} on $\Delta \vdash t_1 \erew^+ t_1{\downarrow}_\theory{R,E}$ and $\Delta\vdash t_1 \rew t_2$ (hence $\Delta\vdash t_1 \erew t_2$) and we have $\Delta \vdash t_2 \erew^* t_2{\downarrow}_\theory{R,E}$ and $\Delta\vdash t_1{\downarrow}_\theory{R,E} \ealeq t_2{\downarrow}_\theory{R,E}$. Finally we can apply the induction on $t_2$, since $\Delta\vdash t \ebarrew t_2$, then $\Delta\vdash t_2{\downarrow}_\theory{R/E} \ealeq t_2{\downarrow}_\theory{R,E}$ and the result follows by transitivity of $\ealeq$.

\end{proof}

Note that  $\theory{R,E}$-reducibility is decidable if  $\theory{E}$-matching is decidable. The existence of a finite and complete $\theory{E}$-unification
algorithm is a sufficient condition for that decidability~\citep{JouannaudKK83:Incremental}.
However, solving nominal $\theory{E}$-unification problems has the additional complication of dealing with $\alpha$-equality, significantly impacting obtaining finite and complete sets of nominal $\theory{E}$-unifiers.

\begin{example}\label{rmk:cunif-notfin}
Nominal $\theory{C}$-unification is not finitary when solutions are represented using freshness constraints and substitutions~\citep{FormalisingNomC-unif/mscs/Gabriel21}, but the type of problems that generate an infinite set of $\theory{C}$-unifiers are fixed-point equations $\pi\cdot X \; _?{\overset{\theory{C}}{\approx}}_? \; X$. E.g., the \theory{C}-unification problem $(a \ b)\cdot X\; _?{\overset{\theory{C}}{\approx}}_? \; X$  has solutions $[X\mapsto a\oplus b], [X\mapsto (a\oplus b)\oplus (a \oplus b)],\ldots$ (Example~ \ref{ex:c-unification}). However, these problems do not appear in nominal $\theory{C}$-matching, which is finitary~\citep{FormalisingNomC-unif/mscs/Gabriel21}. Thus, the relation $\to_{\theory{R,C}}$ is decidable. 
\end{example}

\subsection{Closed Rewriting Modulo Theories}

As briefly discussed in Section~\ref{ex:overview1}, for efficiency, we will focus on working with closed systems. Intuitively, closed terms contain no free atoms, and closed axioms are identities between closed terms that prevent any abstracted atoms from becoming free. Before formally introducing the notions of closed terms and closed systems, we first define the concept of a {\em freshened variant} of a term or a context: given a term $t$, we say that $\nw{t}$ is a \emph{freshened variant} of $t$ when $\nw{t}$ has the same structure of $t$, except that the atoms and unknowns have been replaced by `fresh' ones.
    Similarly, if $\nabla$ is a freshness context then $\nw{\nabla}$ will denote a freshened variant of $\nabla$, that is, if $a\# X \in \nabla$ then $\nw{a}\# \nw{X} \in \nw{\nabla}$ where $\nw{a}$ and $\nw{X}$ are chosen fresh. This notion naturally extends to other syntactic objects such as equality and rewrite judgements.

    \begin{example}
    We have that $[\nw{a}][\nw{b}]\nw{X}$ is a freshened variant of $[a][b]X$. Also $\nw{a}\#\nw{X}$ is a freshened variant of $a\# X$, and $\emptyset \vdash f([\nw{a}]\nw{X}) \to [\nw{a}]\nw{X}$ is a freshened variant of $\emptyset \vdash f([a]X) \to [a]X$.
    Note that neither $[\nw{a}][\nw{a}]\nw{X}$ nor $[\nw{a}][\nw{b}]X$ are freshened variants of $[a][b]X$: the first one because we are identifying different atoms with the same fresh name, and the second one because we did not freshen $X$.
    \end{example}

    \begin{definition}[Closedness]
    A term-in-context $\nabla \vdash l$ is \emph{closed} if there exists a solution for the matching problem  $(\nw{\nabla} \vdash \nw{l})\; _?{\approx}\; (\nabla, A(\nw{l})\# V(\nabla, l) \vdash l).$
    A rule $R=(\nabla \vdash l \to r)$ and an axiom $Ax = (\nabla \vdash l = r)$ are called \emph{closed} when $\nabla \vdash (l,r)$ is closed.

    \end{definition}

    \begin{example}\label{ex:lambda}
    Consider the following rewrite rules \theory{R_\texttt{lam}} for the $\lambda$-calculus:
    {\normalfont
    {\small
    $$
    \begin{array}{rclcl}
      (\beta)\qquad \qquad   & \vdash & \texttt{app}(\texttt{lam}([a]X),X') & \to & \texttt{sub}([a]X,X')   \\
        & \vdash & \texttt{sub}([a]a,X) & \to & X \\
        a\# Y & \vdash & \texttt{sub}([a]Y,X) & \to & Y \\
        & \vdash & \texttt{sub}([a]\texttt{app}(X,X'),Y) & \to & \texttt{app}(\texttt{sub}([a]X,Y),\texttt{sub}([a]X',Y)) \\
        b\# Y & \vdash & \texttt{sub}([a]\texttt{lam}([b]X),Y) & \to & \texttt{lam}([b]\texttt{sub}([a]X,Y)) \\
    \end{array}
    $$
    }}

    \noindent All the rewrite rules above are closed.
    We refer to \theory{R_\texttt{sub}} as the rules of \theory{R_\texttt{lam}} without rule $(\beta)$.
    \end{example}

    \begin{example}
    Consider the rule $R \equiv \emptyset \vdash [a]f(a, X) \to a$. This rule is not closed because there is no solution to
     $(\emptyset \vdash  [a']f(a', X'), a') \; _?{\approx} \; (a'\#X \vdash [a]f(a, X), a)$.
Similarly, the rule     $R \equiv \emptyset \vdash [a]f(a, X) \to X$ is not closed--the abstracted atom $a$ in the left-hand side can become free when applying the rule (e.g., if $X$ is instantiated with $a$). In Example~\ref{ex:lambda}, the third rule has a freshness constraint $a\# Y$, which prevents this.
    \end{example}

\begin{definition}
    [Closed Nominal \theory{R,E}-rewriting]\label{def:clsd-rew-modC}
    The \emph{one-step closed \theory{R,E}-rewrite relation} $\Delta\vdash s\rightarrow_{\theory{R,E}}^{c} t$ is the least relation generated by the rule below, where $R = (\nabla\vdash l\rightarrow r)\in \theory{R}$,  $\Delta \vdash s$ is a term-in-context, and $\nw{R}$ a freshened variant of $R$ (so fresh for $R, \Delta, s, t$),  $\context{C}$ a position,
    and $\theta$ a substitution such that
        \begin{mathpar}
        \inferrule{s\equiv \context{C}[s'] \\ \Delta, A(\nw{R})\# V(\Delta, s,t)\vdash \big( \nw{\nabla}\theta,\ s' \ealeq \nw{l}\theta, \ \context{C}[\nw{r}\theta] \aleq t\big)}
        {\Delta\vdash s\rightarrow_{\theory{R,E}}^{c}\  t }
        \end{mathpar}

    \end{definition}

\section{Nominal Narrowing Modulo Theories}\label{sec:e-narrowing}

In this section,  we define the nominal narrowing relation induced by an equivariant set of nominal rules \theory{R} modulo an equational theory $\theory{E}$, extending the framework of~\citep{NominalNarrowing16}. Nominal narrowing generalises nominal rewriting as it considers nominal unification instead of matching modulo \theory{E}. As expected,  this process must carefully handle the freshness constraints associated with both the rules in \theory{R} and the identities in the theory \theory{E}, while operating modulo renaming of bound names. The main goal of this section is to establish the nominal version of the {\em Lifting Theorem Modulo \theory{E}} (Theorem~\ref{teo:Elifting}), a fundamental result that connects \theory{R,E}-narrowing steps to \theory{R,E}-rewriting steps. This theorem is crucial for ensuring that the narrowing relation can be effectively used to solve \theory{(R\cup E_\alpha)}-unification problems.

\begin{definition}[Nominal \theory{R,E}-narrowing]\label{def:narr-C}
The \emph{one-step \theory{R,E}-narrowing relation} $(\Delta\vdash s) \rightsquigarrow_{\theory{R,E}} (\Delta' \vdash t)$ is the least relation such that for some $(\nabla\vdash l\rightarrow r)\in \theory{R}$,  position $\context{C}$, term $s'$,
and substitution $\theta$, we have
    \begin{prooftree}
    \AxiomC{$s\equiv \context{C}[s']$}
    \AxiomC{$\Delta'\vdash \big(\nabla\theta,~\Delta\theta,~ s'\theta \ealeq l\theta, \,(\context{C}[r])\theta \aleq t\big)$}
    \RightLabel{.}
    \BinaryInfC{$(\Delta\vdash s) \rightsquigarrow_{\theory{R,E}}^\theta (\Delta' \vdash t) $}
    \end{prooftree}
where $(\Delta',\theta) \in {\cal U}_{\theory{E}}( \nabla\vdash l,  \Delta\vdash  s')$.  We will write only $(\Delta\vdash s) \rightsquigarrow_{\theory{R,E}} (\Delta' \vdash t)$, omitting the $\theta$, when it is clear from the context.
The \emph{nominal \theory{R,E}-narrowing relation} $(\Delta \vdash s) \rightsquigarrow_{\theory{R,E}}^* (\Delta' \vdash t)$ is
the least relation that includes $\rightsquigarrow_{\theory{R,E}}$ and is closed by reflexivity and transitivity of $\rightsquigarrow_{\theory{R,E}}$.
\end{definition}
The
substitution $\theta$ in the definition above
is
found by solving the nominal  $\theory{E}$-unification problem $(\nabla\vdash l) \; _?{\overset{\theory{E}}{\approx}}_? \; (\Delta\vdash  s')$.

\begin{remark}
The decidability of $\rightsquigarrow_\theory{R,E}$ relies on the existence of an algorithm for nominal \theory{E}-unification.
In our examples we will use the theory \theory{C},
 which has a nominal unification algorithm.
\end{remark}

Since nominal $\theory{C}$-narrowing uses nominal $\theory{C}$-unification, which is not finitary when we use pairs $(\Delta', \theta)$ of freshness contexts and substitutions to represent solutions, following Example~\ref{rmk:cunif-notfin}, we conclude that our nominal $\theory{C}$-narrowing trees are infinitely branching. The following example illustrates these infinite branches.


\begin{figure}[!t]
$$
\xymatrix{
& \emptyset \vdash h(f^\theory{C}([b][a]X,X)) \ar@{~>}[d]_{\theta_0} & &\\
& \emptyset\vdash f^\theory{C}([b][a]X,X) \ar@{~>}[dl]^{\theta_1} \ar@{~>}[d]_{\theta_2} \ar@{~>}[dr]_{\theta_n} \ar@{~>}[drr] &  &\\
a\#X,b\#X \vdash t_1 & \emptyset\vdash t_2 \ar@{}[r]^{\ldots} & \qquad \emptyset\vdash t_n \; \ar@{}[r]^{\ldots}& \qquad\qquad\\
}
$$

\footnotesize{
$\theta_0=[Y\mapsto f^\theory{C}([b][a]X,X)]$

$\theta_1=[Z\mapsto X]$ \hfill $t_1 =f^\theory{C}(h(X),h(X))$

$\theta_2=[Z\mapsto X][X\mapsto a\oplus b]$ \hfill $t_2 = f^\theory{C}(h(a\oplus b),h(a\oplus b))$

$\theta_n=[Z\mapsto X][X\mapsto (a\oplus b)\oplus(b\oplus a)]
$ \hfill $t_n = f^\theory{C}(h((a\oplus b)\oplus(b\oplus a)),h((a\oplus b)\oplus(b\oplus a)))$}
\caption{Infinitely branching tree}\label{fig:enter-label5}
\end{figure}

\begin{example}[Cont. Example~\ref{ex:c-unification}]\label{ex:narrow}
    Consider the signature $\Sigma = \{h:1, f^\theory{C}:2, \oplus: 2\}$, where $f^\theory{C}$ and $\oplus$ are commutative symbols.
    Let $R=\{ \ \vdash h(Y) \to Y ,\ \vdash f^\theory{C}([a][b]\cdot Z, Z) \to f^\theory{C}(h(Z),h(Z)) \}$ be a set of rewrite\footnote{$\vdash l\to r$ denotes $\emptyset\vdash l\to r$.} rules.
    Let $\ \vdash h(f^\theory{C}([b][a]X,X))$ be a nominal term that we want to apply nominal \theory{C}-narrowing to.
Observe that we can apply one step of narrowing, and then we obtain a branch that yields infinite branches due to the fixed-point equation (see Fig.~\ref{fig:enter-label5}).
    The first narrowing step is $$\emptyset \vdash h(f^\theory{C}([b][a]X,X)) \enarrow{R,C} \emptyset \vdash f^\theory{C}([b][a]X,X),$$ using the rule $\ \vdash h(Y) \to Y$.
    The substitution $\theta_0= [Y\mapsto f^\theory{C}([b][a]X,X)]$ as well as the other (infinite) narrowing steps were computed in Example~\ref{ex:c-unification}.

\end{example}

The following proposition shows that each nominal narrowing step corresponds to a nominal rewriting step, using the same substitution $\theta$.

\begin{proposition}\label{prop:narrow-to-rewriting}
Let \theory{E} be an equational theory for which a complete \theory{E}-unification algorithm exists. $(\Delta_0\vdash s_0) \rightsquigarrow_{\theory{R,E}}^{\theta} (\Delta_1 \vdash s_1) $ implies $\Delta_1\vdash (s_0 \theta ) \rightarrow_{\theory{R,E}}  s_1 $.
\end{proposition}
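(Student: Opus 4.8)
The plan is to unfold the definition of the narrowing step, extract from it the unification data that certifies it, and then verify that exactly these data certify a rewriting step on the instantiated term $s_0\theta$. First I would invoke Definition~\ref{def:narr-C}: the step $(\Delta_0\vdash s_0) \rightsquigarrow_{\theory{R,E}}^{\theta} (\Delta_1 \vdash s_1)$ means there is a rule $R = (\nabla\vdash l\to r)\in\theory{R}$, a position $\context{C}$, a subterm $s_0'$ with $s_0\equiv\context{C}[s_0']$, and a pair $(\Delta_1,\theta)\in\mathcal{U}_{\theory{E}}(\nabla\vdash l,\ \Delta_0\vdash s_0')$ such that
\[
\Delta_1 \vdash \big(\nabla\theta,\ \Delta_0\theta,\ s_0'\theta \ealeq l\theta,\ (\context{C}[r])\theta \aleq s_1\big).
\]

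Next I would transport the decomposition of $s_0$ along $\theta$. Since substitution commutes with positions (it acts homomorphically on the term structure and fixes the distinguished hole), we have $s_0\theta \equiv (\context{C}\theta)[s_0'\theta]$; write $\context{C}' := \context{C}\theta$ and $s'' := s_0'\theta$, so $s_0\theta\equiv\context{C}'[s'']$. Now I want to build the rewriting step of Definition~\ref{def:rew-modC} that takes $s_0\theta$ to $s_1$ using the \emph{same} rule $R$, the position $\context{C}'$, the subterm $s''$, and a substitution which I would take to be $\theta$ itself (the rule variables being disjoint from those already instantiated, or else one composes appropriately). The three conditions required are: (i) $\Delta_1\vdash\nabla\theta$, which is exactly one of the conjuncts above; (ii) $\Delta_1\vdash s'' \ealeq l\theta$, i.e. $\Delta_1\vdash s_0'\theta \ealeq l\theta$, again exactly one of the conjuncts above; and (iii) $\Delta_1\vdash \context{C}'[r\theta] \aleq s_1$, i.e. $\Delta_1\vdash (\context{C}[r])\theta \aleq s_1$, once more one of the conjuncts. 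Hence $\Delta_1\vdash (s_0\theta)\to_{\theory{R,E}} s_1$, as claimed.

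The one point that needs a little care — and which I expect to be the main (minor) obstacle — is the bookkeeping around variable disjointness and freshness contexts when passing the rule $R$ through the substitution $\theta$. In the narrowing step the rule is applied with substitution $\theta$ to \emph{both} $l$ and $s_0'$, so $l$ itself gets instantiated; in the rewriting step of Definition~\ref{def:rew-modC} one matches $s''$ against $l\theta'$ for some $\theta'$, and here we simply take $\theta' = \theta$, relying on the standard convention that the variables of the rule are renamed apart from those of the problem so that $\theta$ restricted to $V(R)$ behaves as the matching substitution while $\theta$ restricted to the problem variables does the instantiation. I would state this disjointness convention explicitly (it is already implicit in the paper's use of equivariant closure and freshened rule variants) and note that Lemma~\ref{lem:compatibility} and Proposition~\ref{def:e-compatible} guarantee the freshness judgements carry over correctly under composition, so that $\Delta_1$ genuinely supports all three conjuncts. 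Everything else is a direct rewriting-out of definitions, using only that substitution is a homomorphism compatible with positions and with $\ealeq$ and $\aleq$.
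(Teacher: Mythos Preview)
Your proposal is correct and follows essentially the same route as the paper's proof: unfold the narrowing step, observe that $s_0\theta \equiv (\context{C}\theta)[s_0'\theta]$, and read off the three premises of Definition~\ref{def:rew-modC} directly from the narrowing conjuncts. The paper's proof is terser (it does not spell out the variable-disjointness bookkeeping you flag at the end), but the argument is the same.
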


\begin{proof}
    Indeed, suppose we have $(\Delta_0\vdash s_0) \rightsquigarrow_{\theory{R,E}}^{\theta} (\Delta_1 \vdash s_1) $. The narrowing step guarantees that for a substitution $\theta$
    and a rule $\nabla\vdash l\to r\in \theory{R}$, the following holds: $s_0 \equiv \context{C}[s'_0]$ and $\Delta_1\vdash \big(\nabla\theta,~\Delta_0\theta,~ s'_0\theta \ealeq
    l\theta, \,(\context{C}[
    r])\theta \aleq s_1\big)$.
    It it clear that $s_0\theta \equiv \context{C}\theta[s_0'\theta]$, and by the definition of \theory{R,E}-rewrite, the previous items give us $\Delta_1\vdash s_0\theta\erew s_1$.
\end{proof}

\subsection{Nominal Lifting Theorem modulo \texorpdfstring{\theory{E}}{E}}\label{section:NomLiftingModuloC}

We now extend the previous result to derivations.

\begin{remark}Throughout this section, we assume that $\theory{R{\cup}E_\alpha}$ is {\em well-structured}, that is,   $\theory{R}=\{\nabla_i\vdash l_i\to r_i\}$ is $\theory{E}$-convergent
and  there exists a complete \theory{E}-unification algorithm.
\end{remark}

Before stating the main result, we introduce the  notion of a normalised substitution with respect to the relation $\to_{\theory{R,E}}$.
\begin{definition}[Normalised substitution w.r.t $\to_{\theory{R},\theory{E}}$]
A substitution $\theta$ is \emph{normalised in $\Delta$ with respect to $\to_{\theory{R,E}}$} if $\Delta \vdash X\theta$ is a $\theory{R,E}$-normal form  for every $X$. A substitution $\theta$ {\em satisfies the freshness context} $\Delta$ iff there exists a freshness context $\nabla$ such that $\nabla \vdash a\#X\theta$ for each $a\#X \in \Delta$. In this case, we say that $\theta$ satisfies $\Delta$ with $\nabla$.
In the case where \theory{E=\emptyset} we write only $\rew$.
\end{definition}

The following result (correctness) states that a finite sequence of rewriting steps exists for each finite sequence of narrowing steps.

\begin{lemma}{($\rightsquigarrow_{\theory{R,E}}^*$ to $\to_{\theory{R,E}}^*$)}\label{theo:narrowtorewrite}
Let $\theory{R{\cup}E_\alpha}$ be well-structured. Consider the $(\Delta_0\vdash s_0) \rightsquigarrow^*_{\theory{R,E}} (\Delta_n \vdash s_n) $  nominal \theory{R,E}-narrowing derivation.
Let $\rho$ be a substitution satisfying $\Delta_n$ with  $\Delta$.
$$(\Delta_0\vdash s_0) \rightsquigarrow^{\theta_0}_{\theory{R,E}} (\Delta_1 \vdash s_1) \rightsquigarrow^{\theta_1}_{\theory{R,E}}  \ldots \rightsquigarrow^{\theta_{n-1}}_{\theory{R,E}} (\Delta_n \vdash s_n) $$
Then, there exists a nominal \theory{R,E}-rewriting derivation
$$\Delta\vdash s_0\rho_0  \rightarrow_{\theory{R,E}} \ldots \rightarrow_{\theory{R,E}} s_i\rho_i \rightarrow_{\theory{R,E}}\ldots   \rightarrow_{\theory{R,E}} s_{n-1}\rho_{n-1}\rightarrow_{\theory{R,E}} s_n\rho$$
such that $\Delta\vdash \Delta_i\rho_{i}$ and $\rho_{i}=\theta_{i}\ldots \theta_{n-1}\rho$, for all $0\leq i < n$. In other words,
$\Delta\vdash (s_0 \theta )\rho \rightarrow^*_{\theory{R,E}}  s_n \rho $
where $\theta=\theta_0\theta_1\ldots \theta_{n-1}$.
\end{lemma}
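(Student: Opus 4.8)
The plan is to proceed by induction on the length $n$ of the narrowing derivation, following the structure of the classical first-order lifting argument (as in Hullot or Middeldorp-style proofs) but carefully tracking freshness contexts at each step. The base case $n=0$ is immediate: the empty narrowing derivation $(\Delta_0\vdash s_0)$ corresponds to the empty rewriting derivation $\Delta\vdash s_0\rho$, and $\rho_0 = \rho$ trivially satisfies $\Delta\vdash\Delta_0\rho_0$ because $\rho$ satisfies $\Delta_n = \Delta_0$ with $\Delta$ by hypothesis.

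For the inductive step, I would split off the \emph{first} narrowing step $(\Delta_0\vdash s_0)\rightsquigarrow^{\theta_0}_{\theory{R,E}}(\Delta_1\vdash s_1)$ and apply the induction hypothesis to the tail $(\Delta_1\vdash s_1)\rightsquigarrow^*_{\theory{R,E}}(\Delta_n\vdash s_n)$ of length $n-1$, together with the same substitution $\rho$ satisfying $\Delta_n$ with $\Delta$. This yields a rewriting derivation $\Delta\vdash s_1\rho_1\to^*_{\theory{R,E}} s_n\rho$ with $\rho_i=\theta_i\cdots\theta_{n-1}\rho$ and $\Delta\vdash\Delta_i\rho_i$ for $1\le i<n$. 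It remains to prepend one rewriting step $\Delta\vdash s_0\rho_0\to_{\theory{R,E}} s_1\rho_1$ where $\rho_0=\theta_0\rho_1=\theta_0\theta_1\cdots\theta_{n-1}\rho$. To obtain this, I unfold the definition of the first narrowing step: there is a rule $\nabla\vdash l\to r\in\theory{R}$, a position $\context{C}$, a subterm $s_0'$ with $s_0\equiv\context{C}[s_0']$, and a substitution $\theta_0$ with $(\Delta_1,\theta_0)\in\mathcal{U}_{\theory{E}}(\nabla\vdash l,\Delta_0\vdash s_0')$, so that $\Delta_1\vdash(\nabla\theta_0,\ \Delta_0\theta_0,\ s_0'\theta_0\ealeq l\theta_0,\ (\context{C}[r])\theta_0\aleq s_1)$. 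Applying Proposition~\ref{prop:narrow-to-rewriting} already gives $\Delta_1\vdash s_0\theta_0\to_{\theory{R,E}} s_1$; I then need to instantiate this step by $\rho_1$. Here I invoke \theory{E}-compatibility with substitutions (Proposition~\ref{def:e-compatible}) and Lemma~\ref{lem:compatibility}: applying $\rho_1$ to the judgement $\Delta_1\vdash(\ldots)$ and using that $\Delta\vdash\Delta_1\rho_1$ (from the induction hypothesis), I get $\Delta\vdash(\nabla\theta_0\rho_1,\ s_0'\theta_0\rho_1\ealeq l\theta_0\rho_1,\ (\context{C}[r])\theta_0\rho_1\aleq s_1\rho_1)$, and since $s_0\rho_0\equiv s_0\theta_0\rho_1\equiv(\context{C}\theta_0\rho_1)[s_0'\theta_0\rho_1]$, the definition of $\to_{\theory{R,E}}$ yields exactly $\Delta\vdash s_0\rho_0\to_{\theory{R,E}} s_1\rho_1$. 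Concatenating gives the full rewriting derivation, and the equation $\theta=\theta_0\cdots\theta_{n-1}$ together with $s_0\theta\rho\equiv s_0\rho_0$ gives the final ``in other words'' statement.

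The main obstacle I anticipate is the bookkeeping of freshness contexts through the composition of substitutions — specifically, ensuring that $\nw{\ }$-freshness side conditions, the contexts $\nabla\theta_0$ coming from the rule, and the accumulated contexts $\Delta_i$ all remain derivable after instantiation by $\rho_1$, and that $\langle\Delta_i\rho_i\rangle_{nf}$ is well-defined at each stage so that Proposition~\ref{def:e-compatible} applies. A subtle point is that $\rho$ only \emph{satisfies} $\Delta_n$ (produces \emph{some} witnessing context $\Delta$), so I must be careful that the induction hypothesis is stated with ``satisfies'' rather than ``$\Delta_n\rho$ is derivable in a fixed context'', and that the witnessing context $\Delta$ is threaded uniformly through all $n$ steps. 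Another technical care point is that the narrowing unifier $\theta_0$ may introduce fresh variables and rename the rule; Lemma~\ref{lem:compatibility} is precisely the tool to chain $\Delta_1\vdash\Delta_0\theta_0$ (part of the unifier condition) with $\Delta\vdash\Delta_1\rho_1$ to recover control over $\Delta_0\rho_0$, which is needed if one wants the full sequence of side conditions $\Delta\vdash\Delta_i\rho_i$ down to $i=0$. Modulo this freshness accounting, the rest is a routine unfolding of the one-step definitions.
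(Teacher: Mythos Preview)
Your proof is correct and uses the same ingredients as the paper (Proposition~\ref{prop:narrow-to-rewriting}, \theory{E}-compatibility, and Lemma~\ref{lem:compatibility}), but the decomposition is reversed. The paper takes $n=1$ as the base case and in the inductive step splits off the \emph{last} narrowing step $(\Delta_n\vdash s_n)\rightsquigarrow^{\theta_n}(\Delta_{n+1}\vdash s_{n+1})$: it first uses the base case on this step with an arbitrary $\sigma$ satisfying $\Delta_{n+1}$, then argues that $\rho:=\theta_n\sigma$ satisfies $\Delta_n$ with $\Delta$ (via $\Delta_{n+1}\vdash\Delta_n\theta_n$ and compatibility), and only then invokes the induction hypothesis on the length-$n$ prefix. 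You instead start at $n=0$ and peel off the \emph{first} step, applying the induction hypothesis to the tail directly with the given $\rho$. Your ordering is arguably more streamlined: the induction hypothesis hands you $\Delta\vdash\Delta_1\rho_1$ for free, which is exactly the freshness fact needed to push $\rho_1$ through the first step, and the missing case $\Delta\vdash\Delta_0\rho_0$ falls out of Lemma~\ref{lem:compatibility} just as you describe. The paper's ordering has to do a small extra verification (that $\theta_n\sigma$ satisfies $\Delta_n$) before the induction hypothesis is even applicable. Either way the freshness bookkeeping you flag as the ``main obstacle'' is handled by the same chain $\Delta_{i+1}\vdash\Delta_i\theta_i$ plus compatibility, so there is no real gap in your argument.
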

\begin{proof}
By induction on the length $n\geq 1$ of the narrowing derivation $(\Delta_0\vdash s_0) \rightsquigarrow^{n}_{\theory{R,E}} (\Delta_n \vdash s_n)$.
We start the induction for $n=1$ because the case for $n=0$ holds trivially and gives no additional insight.
\begin{itemize}
    \item \textbf{Base Case:} For $n=1$,
    we have $(\Delta_0\vdash s_0) \rightsquigarrow_{\theory{R,E}}^{\theta} (\Delta_1 \vdash s_1)$, and it follows, from
    Proposition~\ref{prop:narrow-to-rewriting}, that
    $\Delta_1\vdash (s_0 \theta ) \rightarrow_{\theory{R,E}}  s_1 $.
By \theory{E}-compatibility with substitutions, $\Delta_1 \vdash s_0\theta \rightarrow_{\theory{R,E}} s_1$ gives the following:

\begin{enumerate}[leftmargin=*]
    \item $(s_0\theta)\rho \equiv (\context{C}\theta[s'_0\theta])\rho = \context{C}\theta\rho[(s'_0\theta)\rho]$
    \item $\Delta_1 \vdash \nabla\theta$ implies  $\nfpair{\Delta_1\rho} \vdash \nabla\theta\rho$
    \item $\Delta_1 \vdash s_0'\theta \ealeq
    l\theta$ implies $\nfpair{\Delta_1\rho} \vdash s_0'\theta\rho \ealeq
    l\theta\rho$
    \item $\Delta_1 \vdash \context{C}\theta[
    r\theta] \aleq s_1$ implies $\nfpair{\Delta_1\rho} \vdash \context{C}\theta\rho[
    r\theta\rho] = (\context{C}\theta[
    r\theta])\rho \aleq s_1\rho$
\end{enumerate}
which implies that $\Delta = \nfpair{\Delta_1\rho} \vdash (s_0\theta)\rho \rightarrow_{\theory{R,E}} s_1\rho$.
Note that we need $\rho$ satisfying $\Delta_1$ with $\Delta$ to guarantee that when we instantiate $\Delta_1$ we do not have any inconsistency with the freshness constraints in $\Delta_1$.

    Note that $\theta= \theta_0$ for the case $n=1$. The result above gives us
    $\Delta \vdash  (s_0\theta_0)\rho \rightarrow_{\theory{R,E}} s_1\rho$. Since $\Delta\vdash \Delta_1\rho$, and by the narrowing step $\Delta_1\vdash \Delta_0\theta_0$, we get $\Delta\vdash \Delta_0\theta_0\rho$. Taking $\rho_0 = \theta_0\rho$, we have the result
    $\Delta\vdash s_0\rho_0 \to_\theory{R,E} s_1\rho$
    such that $\Delta\vdash \Delta_0\rho_0$.
    \item \textbf{Induction Step:} Assume that the result holds for $n>1$. Then
    $(\Delta_0 \vdash s_0) \rightsquigarrow^{n}_{\theory{R,E}} (\Delta_n \vdash s_n)$
    implies that there exists a rewriting derivation
    $\Delta \vdash s_0\rho_0 \rightarrow^{n}_{\theory{R,E}} s_n\rho$, for some $\rho$ satisfying $\Delta_n$ with $\Delta$
    and Figure~\ref{fig:narrrew} illustrates this setting.
    We will prove the result for $n+1$ steps.

    Consider the narrowing step
    $(\Delta_n \vdash s_n) \rightsquigarrow^{\theta_n}_\theory{R,E} (\Delta_{n+1} \vdash s_{n+1}).$
   By
   the base case,
   for any substitution, say $\sigma$, that satisfies $\Delta_{n+1}$ with $\Delta$, that is (H1)~$\Delta\vdash \Delta_{n+1}\sigma$,
    we have
    \begin{equation}\label{eq:narrow}
        \Delta \vdash (s_n\theta_n)\sigma \to_{\theory{R,E}} s_{n+1}\sigma
    \end{equation}


    Take $\rho = \theta_n\sigma$. Note that  $\rho$ satisfies $\Delta_n$ with $\Delta$:
    since by Definition~\ref{def:narr-C}, we have (H2)~$\Delta_{n+1}\vdash \Delta_n\theta_n$  and this with Proposition~\ref{def:e-compatible}(1) give us  (H3)~$\langle\Delta_{n+1}\sigma\rangle_{nf} \vdash \Delta_n\rho$ .
     Thus, from  (H1) and (H3) it follows that  $\Delta\vdash \Delta_n\rho$.
    By the induction hypothesis,  we have
    $\Delta \vdash s_0\theta_0\ldots\theta_{n-1}\rho \rightarrow^{n}_{\theory{R,E}} s_n\rho$
    with $\Delta\vdash \Delta_i\rho_i$
    and $\rho_i = \theta_i\ldots\theta_{n-1}\rho$, for every $i=0,\ldots, n$. Hence,
    $$\Delta \vdash s_0\theta_0\ldots\theta_{n-1}\theta_n\sigma \to^{n}_{\theory{R,E}} s_n\theta_n\sigma \overset{\eqref{eq:narrow}}{\to}_{\theory{R,E}} s_{n+1}\sigma,$$
    and the result follows.
%
\end{itemize}
%
\end{proof}

\begin{figure}[!t]
    \centering
    \includegraphics[width=\textwidth]{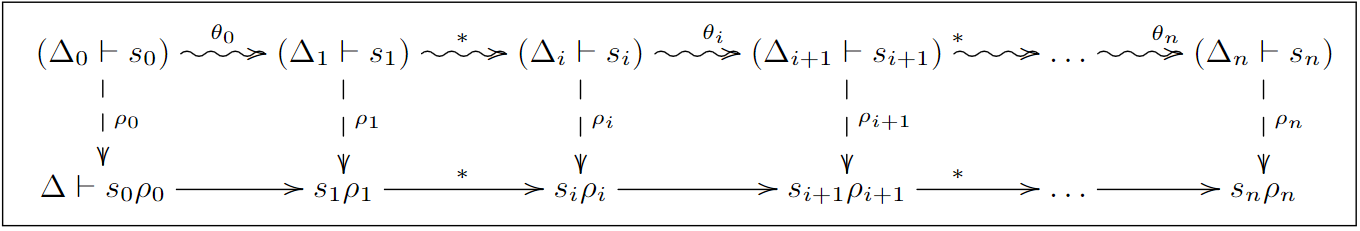}
    \caption{Corresponding Narrowing to Rewriting Derivations}
    \label{fig:narrrew}
\end{figure}

The proof of the converse (completeness) is more challenging.

\begin{lemma}{($\to^*_{\theory{R,E}}$ to $\rightsquigarrow^*_{\theory{R,E}}$)}\label{theo:rewritenarr}
Let $\theory{R{\cup}E_\alpha}$ be well-structured  and  $\to_{\theory{R,E}}$ be $\theory{E}$-coherent.
Let $V_0$ be a finite set of variables containing $V=V(\Delta_0,s_0)$. Then, for any $\theory{R,E}$-derivation
$$\Delta \vdash t_0 = s_0\rho_0\rightarrow_{\theory{R,E}} t_1\rightarrow_{\theory{R,E}}\ldots \rightarrow_{\theory{R,E}} t_n=t_0{\downarrow}$$
to any of its $\theory{R,E}$-normal forms, say $t_0{\downarrow}$, where $\texttt{dom}(\rho_0) \subseteq V(s_0) \subseteq V_0$ and $\rho_0$ is a $\theory{R,E}$-normalised substitution that satisfies $\Delta_0$ with $\Delta$, there exist a $\theory{R,E}$-narrowing derivation
$$(\Delta_0\vdash s_0) \rightsquigarrow^{\theta_0}_{\theory{R,E}} (\Delta_1 \vdash s_1) \rightsquigarrow^{\theta_1}_{\theory{R,E}}  \ldots \rightsquigarrow^{\theta_{n-1}}_{\theory{R,E}} (\Delta_n \vdash s_n) $$
such that, for each $i=1,\ldots, n$,  $\rho_{i}=\theta_i\ldots \theta_{n-1}\rho_n$, $\theta=\theta_0\theta_1\ldots \theta_{n-1}$, and $V(s_i)\subseteq V_i$, the following hold:\\

\begin{minipage}{7cm}
\begin{enumerate}
\item $\Delta\vdash \Delta_i\rho_i$; \label{cond1}
\item \label{cond2} $\Delta \vdash s_i\rho_i \ealeq t_i$;
\end{enumerate}
\end{minipage}
\begin{minipage}{7cm}
\begin{enumerate}[start=3]
\item \label{cond3} $\Delta \vdash \rho_0|_V \ealeq \theta\rho_n|_V$;
\item \label{cond4} $\texttt{dom}(\rho_i) \subseteq V_i$.
\end{enumerate}
\end{minipage}
\end{lemma}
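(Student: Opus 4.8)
The plan is to prove this \emph{Nominal $\theory{E}$-Lifting Theorem} by induction on the length $n$ of the $\theory{R,E}$-rewriting derivation, mirroring the structure of Lemma~\ref{theo:narrowtorewrite} but running in the opposite direction. The base case $n=0$ is immediate: take the empty narrowing derivation, $\theta = \id$, $\rho_n = \rho_0$, and all four conditions hold trivially. For $n \geq 1$, I would split off the \emph{first} rewrite step $\Delta \vdash t_0 = s_0\rho_0 \to_{\theory{R,E}} t_1$ and analyse where in $s_0\rho_0$ the redex occurs — this is the classical case distinction of lifting arguments: either the redex lies entirely inside a subterm introduced by $\rho_0$, or it overlaps a non-variable position of $s_0$ itself. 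Since $\rho_0$ is assumed $\theory{R,E}$-normalised in $\Delta$, no redex can sit purely inside an instance $X\rho_0$, so the redex must be at a non-variable position $\context{C}$ of $s_0$, i.e.\ $s_0 \equiv \context{C}[s_0']$ with $s_0'$ not a variable, and the rule $\nabla \vdash l \to r \in \theory{R}$ applies with some matcher at $s_0'\rho_0$.

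The key move is then to turn that \emph{matching} situation into a \emph{narrowing} step. Because $s_0'\rho_0 \ealeq l\theta'$ for the matching substitution $\theta'$ (suitably renamed apart from $V_0$), the pair $(s_0', \rho_0 \cup \theta')$ gives an $\theory{E}$-unifier of $(\nabla \vdash l)$ and $(\Delta_0 \vdash s_0')$; by well-structuredness a complete $\theory{E}$-unification algorithm exists, so there is a \emph{most general} such unifier $(\Delta_1, \theta_0)$ with $(\Delta_1,\theta_0) \leq_{\theory{E}} (\text{this solution})$, witnessed by some $\rho_0'$ with $\Delta \vdash \Delta_1\rho_0'$ and $\Delta \vdash X\theta_0\rho_0' \ealeq X\rho_0$ for $X \in V$. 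This yields the narrowing step $(\Delta_0 \vdash s_0) \rightsquigarrow^{\theta_0}_{\theory{R,E}} (\Delta_1 \vdash s_1)$ with $s_1 \ealeq (\context{C}[r])\theta_0$, and $s_1\rho_0' \ealeq t_1$ under $\Delta$ (using $\theory{E}$-compatibility, Proposition~\ref{def:e-compatible}, to push $\rho_0'$ through the freshness and equality premises). Now I set $\rho_1 := \rho_0'$ restricted appropriately, check it is still $\theory{R,E}$-normalised (the crucial point — see below), satisfies $\Delta_1$ with $\Delta$, and has $\Delta \vdash s_1\rho_1 \ealeq t_1$; then I apply the induction hypothesis to the remaining derivation $\Delta \vdash t_1 \to^{n-1}_{\theory{R,E}} t_0{\downarrow}$ starting from $(\Delta_1 \vdash s_1)$ with normalised substitution $\rho_1$ and variable set $V_1 \supseteq V(\Delta_1,s_1)$. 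Composing the first step with the derivation from the IH, and tracking $\rho_i = \theta_i \cdots \theta_{n-1}\rho_n$ and $\theta = \theta_0\cdots\theta_{n-1}$ by the usual substitution bookkeeping, gives conditions~\ref{cond1}--\ref{cond4}; condition~\ref{cond3} follows by composing the more-general witnesses along the chain.

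The main obstacle — and the reason the converse is ``more challenging'' as the text warns — is ensuring that the substitution produced at each stage remains \emph{$\theory{R,E}$-normalised} so that the induction hypothesis applies: when we replace $\rho_0$ by $\theta_0\rho_0'$ via the mgu, the new residual substitution $\rho_0'$ need not be normalised a priori, and one typically must \emph{normalise it} (replacing $\rho_0'$ by $\rho_0'{\downarrow}$) and then argue, using $\theory{E}$-confluence and $\theory{E}$-termination of $\theory{R}$ together with $\theory{E}$-coherence of $\to_{\theory{R,E}}$, that doing so preserves conditions~\ref{cond1}--\ref{cond3} up to $\ealeq$. This is exactly where $\theory{E}$-coherence is needed: it guarantees that reasoning modulo $\alpha$ and $\theory{E}$ does not create or destroy redexes in an uncontrolled way, so that normal forms reached after an $\ealeq$-step can be matched up (via Theorem~\ref{prop:nom-e-coherence}). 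A secondary but persistent technical nuisance is the freshness bookkeeping — one must repeatedly invoke Lemma~\ref{lem:compatibility} and the existence of $\nfpair{\Delta\theta}$ to transport freshness contexts along substitutions without introducing inconsistencies, and keep the freshened variants of the rules apart from the growing variable sets $V_i$; none of this is deep, but it must be done carefully and uniformly, and it is where most of the length of the actual proof will go.
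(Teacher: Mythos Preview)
Your proposal is correct and follows the paper's approach: induction on $n$, handling the single-step case by locating the redex at a non-variable position of $s_0$ (using that $\rho_0$ is normalised), forming the $\theory{E}$-unifier $\eta=\rho_0\cup\sigma$, extracting a more general solution $(\Delta_1,\theta_0)$ to produce the narrowing step, and then invoking the IH. The only real difference is where $\theory{E}$-coherence enters: you propose to renormalise the residual $\rho_1$ before calling the IH, whereas the paper applies the IH to a derivation issuing from $t_1':=s_1\rho_1$ (which is only $\ealeq t_1$, not equal) and invokes Theorem~\ref{prop:nom-e-coherence} once at the end to identify the resulting $t_n'$ with the original normal form $t_n$. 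Your flagging of the normalised-substitution invariant as the crux is well-placed --- the paper in fact does not explicitly verify that $\rho_1$ is normalised before applying the IH, and either route (renormalise-then-IH, or IH-on-$t_1'$-then-reconcile-normal-forms) is what actually closes the argument.
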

\begin{proof}
 By induction on the number of steps $n$ applied in the derivation $\Delta \vdash t_0 = s_0\rho_0\rightarrow_{\theory{R,E}}^* t_0{\downarrow}$.
 In fact, the base case does not rely on $t_1$ to be a normal form and we prove a more general base case.

\begin{itemize}
    \item \textbf{Base Case:}
    Let $n=1$.  The one-step rewriting is done in a position $\context{C}_0$ of $t_0$, with substitution $\sigma$ and  rule $R_0 = \nabla_0 \vdash l_0 \to r_0 \in \theory{R}$:
\begin{mathpar}
    \inferrule*[left=(*)]{t_0\equiv \context{C}_0[t_0'] \\ \Delta \vdash \nabla_0\sigma,\; t_0' \ealeq
    l_0\sigma  ,\; \context{C}_0[
    r_0\sigma] \aleq t_1}{\Delta \vdash t_0 \to_{[\context{C}_0, R_0],\theory{E}} t_1}
\end{mathpar}

Note that the variables of $R_0$ are renamed w.r.t. $t_0 = s_0\rho_0$ and $\Delta$. Thus,  $V(R_0)\cap V(\Delta, t_0) = \emptyset$ and $\texttt{dom}(\sigma) \cap V_0 = \emptyset$. By hypothesis,  (H1)~$\Delta\vdash \Delta_0\rho_0$. Since $\rho_0$ is  normalised in $\Delta$ and $\Delta\vdash s_0\rho_0\to_\theory{R,E} t_1$,  there must exist a non-variable position $\context{C}_0'$ and a subterm $s_0'$ of $s_0$ such that $s_0\equiv \context{C}_0'[s_0']$ and (H2)~$\Delta \vdash s_0'\rho_0 \ealeq t_0' \ealeq
    l_0\sigma$.

Consider the substitution $\eta = \rho_0\cup\sigma$. Then the pair $(\Delta, \eta)$ is a solution for $(\Delta_0\vdash s_0') \unif (\nabla_0\vdash
l_0)$. That is, the following hold:
\begin{itemize}[leftmargin=*]
    \item $\Delta\vdash s_0'\eta \ealeq
    l_0\eta$: from (H2).
    \item $\Delta\vdash \Delta_0\eta$: from (H1) and the fact that $\sigma$ does not affect $\Delta_0$, because $\dom{\sigma}\cap V_0 = \emptyset$ and $V_0$ contains $V(\Delta_0)$.
    \item $\Delta\vdash \nabla_0\eta$: from (*) and the fact that $\rho_0$ does not affect $\nabla_0$, as $\dom{\rho_0}\cap V(R_0)= \emptyset$, $\ V(\nabla_0)\subseteq V(R_0))$ and $\dom{\rho_0} \subseteq V(s_0) \subseteq V(t_0)$.
\end{itemize}

Now, we can take a more general solution
of
$(\Delta_0\vdash s_0') \unif (\nabla_0\vdash
l_0)$, say $(\Delta_1,\theta_0)$. Thus (H3) $\Delta_1\vdash \Delta_0\theta_0, \nabla_0\theta_0, s_0'\theta_0 \ealeq
l_0\theta_0$.
Define $s_1$ such that $\Delta_1\vdash (\context{C}_0'[
r_0])\theta_0 \aleq s_1$. This together with (H3) give us the nominal \theory{R,E}-narrowing step $(\Delta_0\vdash s_0) \enarrow{R,E}^{\theta_0} (\Delta_1\vdash s_1)$.

Since $(\Delta_1,\theta_0) \leq_\theory{E} (\Delta,\eta)$, then there exists a substitution $\rho'$ such that $\forall X\in \mathcal{X}, \Delta\vdash X\theta_0\rho' \ealeq X\eta$ and (H4)~$\Delta\vdash \Delta_1\rho'$. Hence, $\Delta\vdash \theta_0\rho' \ealeq \eta$.
Because $\eta = \rho_0\cup\sigma$ and $\dom{\sigma}\cap V_0 = \emptyset$, $\rho_0$ is actually such that
$\Delta\vdash \rho_0 \ealeq \theta_0\rho'|_{V_0}$. However, from hypothesis, $\rho_0 = \theta_0\rho_1$, which gives  (H5)~$\Delta\vdash \rho_1 \ealeq \rho'|_{V_0}$. Thus, item~\ref{cond3} $\Delta\vdash \rho_0|_V \ealeq \theta_0\rho_1|_V$ holds.

Since $\dom{\rho_0}\subseteq V_0$ and $\rho_0 = \theta_0\rho_1$, taking $V_1 = (V_0 \cup \vran{\theta_0}) - \dom{\theta_0}$ we have~\ref{cond4}: $\dom{\rho_1}\subseteq V_1$.
Therefore, from (H4) and  (H5), we get~\ref{cond1}: $\Delta\vdash \Delta_1\rho_1$.
Finally, one the one hand  we have  $\Delta \vdash t_1 \aleq \context{C}_0[
r_0\sigma] \aleq \context{C}_0[
r_0\eta] \ealeq \context{C}_0[
r_0\theta_0\rho'] \ealeq \context{C}_0[
r_0\theta_0\rho_1]$. On the other hand $\Delta\vdash s_1\rho_1 \ealeq (\context{C}_0'[
r_0])\theta_0\rho_1 \aleq \context{C}_0'\theta_0\rho_1[
r_0\theta_0\rho_1] \ealeq \context{C}_0'\rho_0[
r_0\theta_0\rho_1] \ealeq \context{C}_0[
r_0\theta_0\rho_1]$ (the last equivalence comes from $s_0\rho_0 = t_0$). Therefore, we have condition~\ref{cond2}: $\Delta\vdash s_1\rho_1 \ealeq t_1$.

Observe that for the base case we do not need the hypothesis $\theory{R}$ is
\theory{E}-terminating, \theory{R,E} is \theory{E}-confluent
and $\to_{\theory{R,E}}$ is $\theory{E}$-coherent. In matter of fact, these conditions are necessary for the induction hypothesis.
    \item \textbf{Inductive Step:} Let $n>1$ and assume that the result holds for sequences of  $n-1$ rewriting steps.
    Then,
    $$\Delta \vdash t_0 = s_0\rho_0\rightarrow_{\theory{R,E}} t_1\rightarrow_{\theory{R,E}}\ldots \rightarrow_{\theory{R,E}} t_n=t_0{\downarrow}$$
    By applying the base case
    on the rewrite step $\Delta \vdash t_0 \to_{\theory{R,E}} t_1$,  we get that
    $(\Delta_{0} \vdash s_{0}) \rightsquigarrow_{\theory{R,E}}^{\theta_{0}} (\Delta_1 \vdash s_1),$
   where $\rho_0$ is a $\theory{R,E}$-normalised substitution that satisfies $\Delta_0$ with $\Delta$,
   $\Delta \vdash s_1\rho_1 = t_1' \ealeq t_1$, and $\Delta \vdash \rho_0|_V \ealeq \theta_0\rho_1|_V$.
   Consider the sequence to any of the normal forms of~$t_1$:
      $$\Delta \vdash  t_1\overbrace{\rightarrow_{\theory{R,E}}\ldots \rightarrow_{\theory{R,E}}}^{n-1} t_n=t_1{\downarrow_\theory{R,E}}
      $$
    By the induction hypothesis, there exists  a narrowing sequence
    $$ (\Delta_1 \vdash s_1) \rightsquigarrow^{\theta_1}_{\theory{R,E}}  \ldots \rightsquigarrow^{\theta_{n-1}}_{\theory{R,E}} (\Delta_n \vdash s_n) $$
    with $\theta=\theta_1\ldots \theta_{n-1}$, a normalised substitution $\rho_n$ such that
    $\Delta\vdash \Delta_i\rho_i$,  (**)~$\Delta \vdash s_i\rho_i =t_i'\ealeq t_i$, for every $i$ and $\Delta \vdash \rho_0|_V \ealeq \theta\rho_n|_V$.

   Note that from (**), it follows that $\Delta\vdash s_n\rho_n=t_n'\ealeq t_n$. Since \theory{R} is \theory{E}-convergent and $\to_{\theory{R,E}}$ is \theory{E}-coherent, it follows from Theorem~\ref{prop:nom-e-coherence}, that all the normal forms  of $t_0$ are $\ealeq$-equivalent. That is,
   $\Delta\vdash t_n'\ealeq t_1\downarrow_\theory{R,E}\ealeq t_0\downarrow_\theory{R,E}=t_n.$
    Therefore, there exists  a nominal \theory{R,E}-narrowing sequence
  $$(\Delta_0\vdash s_0) \rightsquigarrow^{\theta_0}_{\theory{R,E}} (\Delta_1 \vdash s_1) \rightsquigarrow^{\theta_1}_{\theory{R,E}}  \ldots \rightsquigarrow^{\theta_{n-1}}_{\theory{R,E}} (\Delta_n \vdash s_n).$$
\end{itemize}
\end{proof}

As a consequence of Lemmas~\ref{theo:narrowtorewrite} and \ref{theo:rewritenarr} we obtain the following result:
\begin{theorem}[$\theory{E}$-Lifting Theorem]\label{teo:Elifting}
Let $\theory{R{\cup}E_\alpha}$ be well-structured  and $\to_{\theory{R,E}}$ be $\theory{E}$-coherent.
To each finite sequence of nominal \theory{R,E}-rewriting steps corresponds a finite sequence of nominal $\theory{E}$-narrowing steps, and vice versa.
\end{theorem}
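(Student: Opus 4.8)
The plan is to obtain Theorem~\ref{teo:Elifting} as an immediate corollary of Lemmas~\ref{theo:narrowtorewrite} and~\ref{theo:rewritenarr}, since these two lemmas together establish the two directions of the correspondence. First I would observe that the hypotheses of the theorem --- $\theory{R{\cup}E_\alpha}$ well-structured and $\to_{\theory{R,E}}$ $\theory{E}$-coherent --- are exactly what is needed to invoke both lemmas: well-structuredness gives us that $\theory{R}$ is $\theory{E}$-convergent and that a complete $\theory{E}$-unification algorithm exists, which is the standing assumption of Lemma~\ref{theo:narrowtorewrite}, and additionally coherence is the extra hypothesis required by Lemma~\ref{theo:rewritenarr}.

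For the narrowing-to-rewriting direction, I would take an arbitrary finite nominal \theory{R,E}-narrowing derivation $(\Delta_0\vdash s_0) \rightsquigarrow^{\theta_0}_{\theory{R,E}} \cdots \rightsquigarrow^{\theta_{n-1}}_{\theory{R,E}} (\Delta_n \vdash s_n)$ and apply Lemma~\ref{theo:narrowtorewrite} with, say, $\rho = \id$ (the identity substitution trivially satisfies $\Delta_n$ with $\Delta_n$). This yields a corresponding finite \theory{R,E}-rewriting derivation $\Delta_n \vdash s_0\theta \to^*_{\theory{R,E}} s_n$ with $\theta = \theta_0\cdots\theta_{n-1}$, which is the desired sequence. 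For the rewriting-to-narrowing direction, I would start from a finite \theory{R,E}-rewriting derivation; since $\theory{R}$ is $\theory{E}$-terminating, any such derivation can be extended to reach a \theory{R,E}-normal form, so without loss of generality we may assume the derivation ends at $t_0{\downarrow}$. Writing $t_0 = s_0\rho_0$ with $s_0$ a fresh variable $X$ and $\rho_0 = [X\mapsto t_0]$ (or more generally taking $s_0$ to be the starting term itself with $\rho_0 = \id$, provided $\rho_0$ is \theory{R,E}-normalised, which we can arrange by normalising), Lemma~\ref{theo:rewritenarr} produces the required \theory{R,E}-narrowing derivation $(\Delta_0\vdash s_0) \rightsquigarrow^*_{\theory{R,E}} (\Delta_n\vdash s_n)$ satisfying conditions~\ref{cond1}--\ref{cond4}.

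The only genuinely delicate point --- and the place where some care is needed --- is that Lemma~\ref{theo:rewritenarr} requires the rewriting derivation to terminate at an \theory{R,E}-normal form and requires $\rho_0$ to be an \theory{R,E}-normalised substitution satisfying $\Delta_0$ with $\Delta$. When we are simply handed an arbitrary finite rewriting sequence $\Delta\vdash u_0 \to^*_{\theory{R,E}} u_k$ that does not necessarily end in a normal form, we must argue that it can be completed: by $\theory{E}$-termination the sequence $u_k \to^*_{\theory{R,E}} u_k{\downarrow}$ exists, and we concatenate. We then present $u_0$ in the form $s_0\rho_0$ by choosing $s_0 \equiv X$ for a fresh $X$ and $\rho_0 \equiv [X \mapsto u_0]$; this $\rho_0$ need not be normalised, but since $u_0$ and $u_0{\downarrow}$ lie in the same $\to_{\theory{R,E}}$-reachability class we may instead take $\rho_0 = [X \mapsto u_0{\downarrow}]$ and prepend the steps $u_0 \to^* u_0{\downarrow}$ --- or, more cleanly, note that by Theorem~\ref{prop:nom-e-coherence} the choice of representative does not affect the normal form, so the hypotheses of Lemma~\ref{theo:rewritenarr} can always be met. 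Thus I expect the main obstacle to be purely bookkeeping: ensuring the normalisation and freshness-satisfaction side-conditions of Lemma~\ref{theo:rewritenarr} are legitimately in force, rather than any new mathematical content. Once these are in place, the theorem follows directly, and I would close with a one-line remark that the two lemmas are precisely the two halves of the stated equivalence.

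\begin{proof}
This is an immediate consequence of Lemmas~\ref{theo:narrowtorewrite} and~\ref{theo:rewritenarr}. Since $\theory{R{\cup}E_\alpha}$ is well-structured, $\theory{R}$ is $\theory{E}$-convergent and a complete $\theory{E}$-unification algorithm exists; together with the assumption that $\to_{\theory{R,E}}$ is $\theory{E}$-coherent, both lemmas apply.

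Given a finite nominal \theory{R,E}-narrowing derivation $(\Delta_0\vdash s_0) \rightsquigarrow^{\theta_0}_{\theory{R,E}} \cdots \rightsquigarrow^{\theta_{n-1}}_{\theory{R,E}} (\Delta_n \vdash s_n)$, apply Lemma~\ref{theo:narrowtorewrite} with $\rho = \id$ (which trivially satisfies $\Delta_n$ with $\Delta_n$): this yields the finite nominal \theory{R,E}-rewriting derivation $\Delta_n \vdash (s_0\theta) \to^*_{\theory{R,E}} s_n$ with $\theta = \theta_0\cdots\theta_{n-1}$.

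Conversely, let $\Delta \vdash u_0 \to_{\theory{R,E}} \cdots \to_{\theory{R,E}} u_k$ be a finite nominal \theory{R,E}-rewriting derivation. Since $\theory{R}$ is $\theory{E}$-terminating, extend it to a derivation $\Delta \vdash u_0 \to^*_{\theory{R,E}} u_0{\downarrow}$ reaching an \theory{R,E}-normal form. Write $u_0 = s_0\rho_0$ by taking $s_0 \equiv X$ for a fresh variable $X$, $\Delta_0 = \emptyset$, and $\rho_0 = [X \mapsto u_0{\downarrow}]$, prepending the (possibly empty) normalising steps $u_0 \to^* u_0{\downarrow}$ if needed; by Theorem~\ref{prop:nom-e-coherence} this does not change the resulting normal form. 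Then $\rho_0$ is \theory{R,E}-normalised and trivially satisfies $\Delta_0 = \emptyset$ with $\Delta$, so Lemma~\ref{theo:rewritenarr} (with $V_0 = V(\Delta_0, s_0)$) produces a finite nominal \theory{R,E}-narrowing derivation $(\Delta_0\vdash s_0) \rightsquigarrow^*_{\theory{R,E}} (\Delta_n\vdash s_n)$ satisfying conditions~\ref{cond1}--\ref{cond4}. This establishes the correspondence in both directions.
\end{proof}
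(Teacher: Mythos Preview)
Your high-level strategy matches the paper exactly: the paper presents Theorem~\ref{teo:Elifting} as an immediate consequence of Lemmas~\ref{theo:narrowtorewrite} and~\ref{theo:rewritenarr}, with no further argument. Your narrowing-to-rewriting direction, instantiating $\rho = \id$ in Lemma~\ref{theo:narrowtorewrite}, is fine.

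However, your extra bookkeeping for the rewriting-to-narrowing direction does not work as written. You set $s_0 \equiv X$ and $\rho_0 = [X \mapsto u_0{\downarrow}]$, but then $s_0\rho_0 = u_0{\downarrow}$, not $u_0$: the rewriting derivation you feed into Lemma~\ref{theo:rewritenarr} is the \emph{empty} derivation starting at a normal form, and the lemma hands you back the trivial length-$0$ narrowing derivation. That is not a narrowing sequence corresponding to the original $u_0 \to^*_{\theory{R,E}} u_k$. Worse, even if you tried $\rho_0 = [X\mapsto u_0]$ instead, $\rho_0$ would not be normalised, and the proof of Lemma~\ref{theo:rewritenarr} explicitly uses normalisation of $\rho_0$ to guarantee that the redex position in $s_0\rho_0$ comes from a \emph{non-variable} position of $s_0$ --- which is impossible when $s_0 = X$.

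The resolution is simply not to try to manufacture such a decomposition. The theorem, as the paper intends it, is an informal packaging of the two lemmas: the ``correspondence'' is precisely the one those lemmas describe, where the data $(\Delta_0, s_0, \rho_0)$ with $\rho_0$ normalised is part of the input on the rewriting side. Your discursive preamble already hints at this reading; the formal proof block should stop after observing that the hypotheses suffice to invoke both lemmas, rather than attempting a construction that the lemmas themselves do not support.
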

Since there exists an algorithm for nominal \theory{C}-unification
Theorem~\ref{teo:Elifting} can be applied whenever $\theory{R{\cup}C_\alpha}$  is $\theory{C}$-convergent and $\to_{\theory{R,C}}$ is  $\theory{C}$-coherent.
\begin{corollary}
    The \theory{C}-Nominal Lifting theorem holds.
\end{corollary}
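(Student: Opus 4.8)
The plan is to derive the statement as the special case $\theory{E} := \theory{C}$ of the $\theory{E}$-Lifting Theorem (Theorem~\ref{teo:Elifting}). That theorem carries two hypotheses on $\theory{R{\cup}E_\alpha}$: being \emph{well-structured}, i.e. that $\theory{R}$ is $\theory{E}$-convergent and a complete $\theory{E}$-unification algorithm exists; and that $\to_{\theory{R,E}}$ is $\theory{E}$-coherent. So the first step is to instantiate these for $\theory{C}$ and check that each is available under the hypotheses standing in the sentence preceding the corollary --- namely that $\theory{R{\cup}C_\alpha}$ is $\theory{C}$-convergent and $\to_{\theory{R,C}}$ is $\theory{C}$-coherent.

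The only clause that is not immediately one of these hypotheses is the existence of a complete nominal $\theory{C}$-unification algorithm. Here I would invoke the known nominal $\theory{C}$-unification procedure (as used around Example~\ref{ex:c-unification}): while the set of $\theory{C}$-unifiers may be infinite when solutions are represented by pairs of freshness contexts and substitutions (Example~\ref{rmk:cunif-notfin}), Theorem~\ref{teo:Elifting} --- and Lemmas~\ref{theo:narrowtorewrite} and~\ref{theo:rewritenarr} underlying it --- require only a \emph{complete} such algorithm, not a finitary one, so this suffices. With all hypotheses of Theorem~\ref{teo:Elifting} discharged for $\theory{E}=\theory{C}$, the correspondence between finite nominal $\theory{R,C}$-rewriting sequences and finite nominal $\theory{C}$-narrowing sequences follows at once, which is precisely the $\theory{C}$-Nominal Lifting theorem.

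There is no genuine obstacle in this argument: the content is entirely the observation that nominal $\theory{C}$-unification fits the ``complete unification algorithm'' slot of the general theorem, a fact already available in the literature. One could optionally add, as a closing remark, that since nominal $\theory{C}$-matching is finitary, $\to_{\theory{R,C}}$-reducibility is decidable (cf.\ Example~\ref{rmk:cunif-notfin}), so the rewriting side of the correspondence is effective --- but this is a strengthening, not something needed for the lifting statement itself.
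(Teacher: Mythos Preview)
Your proposal is correct and matches the paper's own approach exactly: the corollary is an immediate instantiation of Theorem~\ref{teo:Elifting} with $\theory{E}=\theory{C}$, using the existence of a complete nominal $\theory{C}$-unification algorithm together with the standing hypotheses that $\theory{R{\cup}C_\alpha}$ is $\theory{C}$-convergent and $\to_{\theory{R,C}}$ is $\theory{C}$-coherent. The paper does not give a separate proof, merely the one-line justification preceding the corollary.
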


\section{\theory{R,E}-narrowing for \texorpdfstring{\theory{R{\cup}E_\alpha}}{RuE}-Unification}\label{sec:Narr-for-Unif}

Considering a \emph{closed} and well-structured $\theory{T=R{\cup} E_\alpha}$,
we can solve a nominal \theory{T}-unification problem applying \emph{closed (nominal) narrowing}.

\begin{definition}[Closed \theory{R,E}-narrowing]\label{def:clsd-narr-modE}
The \emph{one-step closed \theory{R,E}-narrowing relation} $(\Delta\vdash s) \rightsquigarrow_{\theory{R,E}}^c (\Delta' \vdash t)$ is the least relation generated by the rule below, for $R = (\nabla\vdash l\rightarrow r)\in \theory{R}$, term-in-context $\Delta \vdash s$, $\nw{R}$ a freshened variant of $R$ (so fresh for $R, \Delta, s, t$), position $\context{C}$, term $s'$,
and substitution $\theta$
    \begin{prooftree}
    \AxiomC{$s\equiv \context{C}[s']$}
    \AxiomC{$\Delta', A(\nw{R})\# V(\Delta, s,t)\vdash \big( \nw{\nabla}\theta, \Delta\theta ,\ s'\theta \ealeq \nw{l}\theta, \ (\context{C}[\nw{r}])\theta \aleq t\big)$}
    \BinaryInfC{$(\Delta\vdash s) \rightsquigarrow_{\theory{R,E}}^c (\Delta' \vdash  t)$}
    \end{prooftree}

The \emph{closed nominal \theory{R,E}-narrowing relation}
is the reflexive-transitive closure of the one-step closed \theory{R,E}-narrowing.
We denote it with  the composition $\rightsquigarrow_{\theory{R,E}}^{c}\ldots \rightsquigarrow_{\theory{R,E}}^{c}$ as usual.
\end{definition}

A ``closed \theory{E}-lifting'' theorem can be stated by replacing nominal \theory{R,E}-rewriting (\theory{R,E}-narrowing)
for {\em closed} nominal \theory{R,E}-rewriting (\theory{R,E}-narrowing).

\begin{theorem}[Closed \theory{E}-Lifting Theorem]\label{teo:closedElifting}
Let $\theory{R{\cup}E_\alpha}$ be closed and well-structured,
and $\to_{\theory{R,E}}$ be $\theory{E}$-coherent. To each finite sequence of closed nominal \theory{R,E}-rewriting steps corresponds a finite sequence of closed nominal $\theory{E}$-narrowing steps, and vice versa.
\end{theorem}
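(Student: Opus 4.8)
The plan is to prove Theorem~\ref{teo:closedElifting} by replaying, \emph{mutatis mutandis}, the argument behind Theorem~\ref{teo:Elifting}: that result was obtained by composing the correctness lemma (Lemma~\ref{theo:narrowtorewrite}) with the completeness lemma (Lemma~\ref{theo:rewritenarr}), together with the one-step correspondence of Proposition~\ref{prop:narrow-to-rewriting}. Hence it suffices to prove \emph{closed} analogues of these three statements, with $\to_{\theory{R,E}}$ and $\rightsquigarrow_{\theory{R,E}}$ replaced by $\to^c_{\theory{R,E}}$ and $\rightsquigarrow^c_{\theory{R,E}}$ (Definitions~\ref{def:clsd-rew-modC} and~\ref{def:clsd-narr-modE}). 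The only genuinely new ingredient is that a rule $R$ is always used through a \emph{freshened variant} $\nw{R}$, carrying the auxiliary freshness context $A(\nw{R})\#V(\Delta,s,t)$. Because the atoms and unknowns of $\nw{R}$ are chosen fresh for all the data occurring in a derivation, every substitution appearing in the proofs has domain and image disjoint from $A(\nw{R})$; so these auxiliary constraints are preserved under the substitution actions (Proposition~\ref{def:e-compatible}), and the bookkeeping of the original proofs carries over with only cosmetic changes.

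For the correctness direction (closed narrowing $\Rightarrow$ closed rewriting) I would first establish the closed counterpart of Proposition~\ref{prop:narrow-to-rewriting}: a step $(\Delta_0\vdash s_0)\rightsquigarrow^c_{\theory{R,E}}(\Delta_1\vdash s_1)$ obtained with a freshened variant $\nw{R}$ and substitution $\theta$ yields $\Delta_1\vdash(s_0\theta)\to^c_{\theory{R,E}} s_1$ \emph{with the same} $\nw{R}$, since the closed-narrowing premise already supplies $\nw{\nabla}\theta$, $s_0'\theta\ealeq\nw{l}\theta$ and $(\context{C}[\nw{r}])\theta\aleq s_1$ under a context that derives $A(\nw{R})\#V(\Delta_1,s_0\theta,s_1)$ (using $\theory{E}$-compatibility with substitutions for the last point). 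The induction on the length of the derivation is then a verbatim transcription of the proof of Lemma~\ref{theo:narrowtorewrite}: a substitution $\rho$ satisfying the final context is propagated backwards, $\theory{E}$-compatibility is applied at each rewrite step, and the auxiliary freshness constraints ride along untouched.

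For the completeness direction (closed rewriting $\Rightarrow$ closed narrowing) I would adapt the proof of Lemma~\ref{theo:rewritenarr}, whose base case is where the work concentrates. Given a closed rewrite step $\Delta\vdash t_0=s_0\rho_0\to^c_{\theory{R,E}} t_1$ realised by a freshened variant $\nw{R_0}$ of $R_0=\nabla_0\vdash l_0\to r_0$ under $A(\nw{R_0})\#V(\Delta,t_0,t_1)$, one chooses for the narrowing step a freshened variant that is fresh also for $s_0$ and $\Delta_0$ — which we may take to be the \emph{same} $\nw{R_0}$, after renaming its atoms and unknowns away from the (finite) data $R_0,\Delta_0,s_0$. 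Forming $\eta=\rho_0\cup\sigma$ exactly as in Lemma~\ref{theo:rewritenarr}, one checks that $\eta$ (together with a suitable context) solves the nominal $\theory{E}$-unification problem between $(\Delta_0\vdash s_0')$ and $(\nw{\nabla_0}\vdash\nw{l_0})$: the equality part and the $\Delta_0,\nw{\nabla_0}$ parts are handled as before, while the extra constraints $A(\nw{R_0})\#V(\Delta_0,s_0)$ demanded by closed narrowing follow from $A(\nw{R_0})\#V(\Delta,t_0)$, because $\dom{\rho_0}\subseteq V(s_0)$ and $\rho_0$ cannot reintroduce the fresh atoms of $\nw{R_0}$ (Proposition~\ref{def:e-compatible} and Lemma~\ref{lem:compatibility}). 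Taking a most general such solution $(\Delta_1,\theta_0)$ defines the closed narrowing step, and the remaining conditions on $\rho_1$, the variable set $V_1$, $\Delta_1$ and $s_1$ are derived precisely as in the non-closed proof. The inductive step is then essentially unchanged: it appeals to $\theory{E}$-convergence of $\theory{R}$ and $\theory{E}$-coherence of $\to_{\theory{R,E}}$ through Theorem~\ref{prop:nom-e-coherence} to conclude that all $\theory{R,E}$-normal forms of $t_0$ are $\ealeq$-equal, so that the narrowing tails glue together; here one also uses that, the system being closed, closed $\theory{R,E}$-rewriting coincides with equivariant $\theory{R,E}$-rewriting on the terms-in-context at hand (cf.~\cite{corr:ClosedNomRew/Fernandez10}), so the convergence and coherence hypotheses transfer to the closed relation. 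I expect the only real obstacle to be this alignment of freshened variants in the base case — showing that the variant witnessing the rewrite step can be reused for the narrowing step, and that the \emph{stronger} auxiliary freshness context required by closed narrowing is derivable from the \emph{weaker} one available from the rewrite step; everything else is a routine, if bookkeeping-heavy, reprise of Lemmas~\ref{theo:narrowtorewrite} and~\ref{theo:rewritenarr}.
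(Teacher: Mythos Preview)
Your proposal is correct and aligns with the paper's approach: the paper does not give a detailed proof of Theorem~\ref{teo:closedElifting} but simply states that it is obtained ``by replacing nominal \theory{R,E}-rewriting (\theory{R,E}-narrowing) for \emph{closed} nominal \theory{R,E}-rewriting (\theory{R,E}-narrowing)'' in the argument for Theorem~\ref{teo:Elifting}. Your write-up is precisely an elaboration of this replacement, spelling out how the freshened variants $\nw{R}$ and the auxiliary contexts $A(\nw{R})\#V(\Delta,s,t)$ thread through the closed analogues of Proposition~\ref{prop:narrow-to-rewriting} and Lemmas~\ref{theo:narrowtorewrite}--\ref{theo:rewritenarr}; this is exactly what the paper leaves implicit.
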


\subsection{Correctness of \texorpdfstring{\theory{R{\cup}E_\alpha}}{RuE}-unification  via narrowing}\label{ssec:unif_narr}

In order to find a solution for the unification problem $(\Delta \vdash s) \; _?{\overset{\theory{T}}{\approx}}_? \; (\nabla \vdash t)$, we will apply \emph{closed \theory{R,E}-narrowing} on $\Delta \vdash s$ and $\nabla \vdash t$ in parallel, that is, we \theory{R,E}-narrow a single term $u = (s,t)$ under $\Delta, \nabla$.

\begin{lemma}[Soundness]\label{lem:soundness}
Let
$\theory{T}=\theory{R}{\cup}\theory{E}_\alpha$ be
closed and well-structured.
Let $\Delta \vdash s$ and $\nabla \vdash t$ be two nominal terms-in-context and $$\Delta,\nabla \vdash (s,t) = u_0 \closenarrow \cdots \closenarrow \Delta_n \vdash u_n = (s_n, t_n)$$ a closed \theory{R,E}-narrowing derivation such that $(\Delta_n, \{s_n\ealeq t_n\})$ has an \theory{E}-solution, say $(\Gamma, \sigma)$.
Then $(\Gamma, \theta\sigma)$ is a $\theory{T}$-solution of the problem $(\Delta \vdash s) \; _?{\overset{\theory{T}}{\approx}}_? \; (\nabla \vdash t)$, where $\theta$ is the composition of substitutions along the \theory{R,E}-narrowing derivation.
\end{lemma}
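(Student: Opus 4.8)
The plan is to combine the closed \theory{E}-Lifting Theorem (Theorem~\ref{teo:closedElifting}) with Proposition~\ref{prop:narrow-to-rewriting}, and then to descend from narrowing to rewriting and from rewriting back into the equational theory \theory{T}. First I would unfold the hypothesis: the derivation $\Delta,\nabla \vdash (s,t) = u_0 \closenarrow \cdots \closenarrow \Delta_n \vdash u_n = (s_n,t_n)$ is a sequence of closed \theory{R,E}-narrowing steps with accumulated substitution $\theta = \theta_0\theta_1\cdots\theta_{n-1}$. Applying Proposition~\ref{prop:narrow-to-rewriting} step by step (its closed variant, which follows from the closed \theory{E}-Lifting development) gives a corresponding closed \theory{R,E}-rewriting derivation $\Delta_n \vdash u_0\theta \closerewrite^* u_n$, hence in particular $\Delta_n \vdash s\theta \closerewrite^* s_n$ and $\Delta_n \vdash t\theta \closerewrite^* t_n$, since rewriting a pair $(s,t)$ amounts to rewriting the two components independently (the pairing symbol is not a rewrite target). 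Because closed \theory{R,E}-rewriting is contained in \theory{R/E}-rewriting, each such step is an instance of $\ealeq \circ \to_\theory{R} \circ \ealeq$, and \theory{R} consists of oriented instances of identities in \theory{T}; therefore $\Delta_n \vdash s\theta \approx_{\theory{T}} s_n$ and $\Delta_n \vdash t\theta \approx_{\theory{T}} t_n$, where $\approx_{\theory{T}}$ denotes provable equality in the equational theory $\theory{T} = \theory{R}\cup\theory{E}_\alpha$ (recall $\to_\theory{R} \subseteq {\approx_{\theory{T}}}$ and $\ealeq \subseteq {\approx_{\theory{T}}}$).

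Next I would bring in the \theory{E}-solution $(\Gamma,\sigma)$ of $(\Delta_n, \{s_n \ealeq t_n\})$. By Definition~\ref{def:nominalEunif} this means $\Gamma \vdash \Delta_n\sigma$ and $\Gamma \vdash s_n\sigma \ealeq t_n\sigma$; in particular $\Gamma \vdash s_n\sigma \approx_{\theory{T}} t_n\sigma$. Instantiating the \theory{T}-equalities $\Delta_n \vdash s\theta \approx_{\theory{T}} s_n$ and $\Delta_n \vdash t\theta \approx_{\theory{T}} t_n$ by $\sigma$ and using \theory{E}-Compatibility with substitutions (Proposition~\ref{def:e-compatible}, which lifts from \theory{E} to \theory{T} by the analogous inspection of the equational-closure rules, noting $\langle \Delta_n\sigma\rangle_{nf}$ exists because $\Gamma \vdash \Delta_n\sigma$), we obtain $\Gamma \vdash s\theta\sigma \approx_{\theory{T}} s_n\sigma$ and $\Gamma \vdash t\theta\sigma \approx_{\theory{T}} t_n\sigma$. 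Chaining these three \theory{T}-equalities through transitivity of $\approx_{\theory{T}}$ yields $\Gamma \vdash s\theta\sigma \approx_{\theory{T}} t\theta\sigma$, which is the second condition required of a \theory{T}-solution.

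It remains to check the freshness condition $\Gamma \vdash \Delta\theta\sigma,\ \nabla\theta\sigma$. From the first closed narrowing step we have $\Delta_1 \vdash \Delta\theta_0$ and $\Delta_1 \vdash \nabla\theta_0$, and inductively along the derivation each step propagates the previous context, so $\Delta_n \vdash \Delta\theta$ and $\Delta_n \vdash \nabla\theta$ (using Lemma~\ref{lem:compatibility} to compose the contexts across steps, exactly as in the proof of Lemma~\ref{theo:narrowtorewrite}). Applying Proposition~\ref{def:e-compatible} with $\sigma$, and using $\Gamma \vdash \Delta_n\sigma$, Lemma~\ref{lem:compatibility} gives $\Gamma \vdash \Delta\theta\sigma$ and $\Gamma \vdash \nabla\theta\sigma$. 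Together with the \theory{T}-equality established above, this shows $(\Gamma, \theta\sigma)$ satisfies both clauses of Definition~\ref{def:nominalEunif} for the problem $(\Delta\vdash s) \;_?{\overset{\theory{T}}{\approx}}_?\; (\nabla\vdash t)$, completing the proof.

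The main obstacle I anticipate is bookkeeping rather than conceptual: carefully tracking the freshened variants introduced at each closed narrowing step (the $A(\nw{R})\#V(\cdots)$ fragments in Definition~\ref{def:clsd-narr-modE}) and verifying they do not interfere with the accumulated contexts, together with confirming that Proposition~\ref{prop:narrow-to-rewriting} and Proposition~\ref{def:e-compatible}, both stated for the plain relations and for \theory{E}, transfer to the closed setting and to the full theory \theory{T}. The inclusion $\to_\theory{R} \subseteq {\approx_{\theory{T}}}$ and the instantiation stability of $\approx_{\theory{T}}$ are the two facts that make the "rewriting back into \theory{T}" step go through cleanly.
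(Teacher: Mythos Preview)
Your proposal is correct and follows essentially the same route as the paper: lift the closed narrowing derivation to a closed \theory{R,E}-rewriting derivation (the paper invokes Lemma~\ref{theo:narrowtorewrite} with $\rho={\tt Id}$, you invoke Proposition~\ref{prop:narrow-to-rewriting} iterated, which amounts to the same thing), pass from rewriting to $\taleq$, instantiate by $\sigma$ using compatibility, chain via $\Gamma\vdash s_n\sigma\ealeq t_n\sigma$, and handle the freshness obligations via Lemma~\ref{lem:compatibility}. The only cosmetic difference is that you first derive the rewriting under $\Delta_n$ and then instantiate by $\sigma$ to land in $\Gamma$, whereas the paper writes the rewriting derivation directly under $\Gamma$; your version is arguably cleaner on this point.
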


\begin{proof}
Using Lemma~\ref{theo:narrowtorewrite}, with $\rho = {\tt Id}$, we can associate this \theory{R,E}-narrowing derivation with the following \theory{R,E}-rewriting derivation:
    $$\Gamma \vdash u_0\rho_0 \closerewrite u_1\rho_1 \closerewrite \cdots \closerewrite u_n\rho = u_n = (s_n,t_n)$$
Thus,
    ${\bf(I)}~ \Gamma \vdash s\rho_0 \starcloserewrite \ s_n\rho = s_n$ and
    $ {\bf(II)}~ \Gamma \vdash t\rho_0 \ {\starcloserewrite} \ t_n\rho = t_n$.
Because $(\Gamma,\sigma)$ is an \theory{E}-solution
of $(\Delta_n, {\tt Id}, \{s_n\ealeq t_n\})$, we have first that $\Gamma \vdash \Delta_n\sigma$, which from Lemma~\ref{lem:compatibility} gives us $\Gamma \vdash \Delta_0\theta\sigma$, that is, $\Gamma \vdash \Delta\theta\sigma$ and $\Gamma \vdash \nabla\theta\sigma$,
and second that ${\bf(III)} \ \Gamma\vdash s_n\sigma \ealeq t_n\sigma$.

Since we have $\theory{T=R{\cup}E}$,
from {\bf (I)} and {\bf (II)}, and remembering that $\rho_0=\theta_0\theta_1\cdots\theta_{n-1}\rho = \theta\rho= \theta$, we get $\Gamma \vdash s\theta \taleq \cdots \taleq s_n$ and $\Gamma \vdash t\theta \taleq \cdots \taleq t_n$.
From {\bf (III)}, it follows that $\Gamma \vdash (s\theta)\sigma \taleq s_n\sigma \ealeq t_n\sigma \taleq (t\theta)\sigma$.
Therefore, we have $\Gamma\vdash (\Delta\theta\sigma, \nabla\theta\sigma, s\theta\sigma \taleq t\theta\sigma)$ and hence, $(\Gamma, \theta\sigma)$ is a $\theory{T=R{\cup}E}$-solution to $(\Delta \vdash s) \; _?{\overset{\theory{T}}{\approx}}_? \; (\nabla \vdash t)$.
\end{proof}

The next result states that if we have a \theory{T}-solution between two terms-in-context, then we can build a closed \theory{R,E}-narrowing derivation from this starting terms, and thus there exists an \theory{E}-solution between the final terms.

\begin{lemma}[Completeness]\label{lem:completeness}
Let
$\theory{T=R}{\cup}\theory{E}_\alpha$ be
closed and well-structured,
and let $\to_\theory{R,E}$ be \theory{E}-coherent.
Let $\Delta \vdash s$ and $\nabla \vdash t$ be two nominal terms-in-context, such that the problem $(\Delta \vdash s) \; _?{\overset{\theory{T}}{\approx}}_? \; (\nabla \vdash t)$ has a $\theory{T}$-solution, $(\Delta',\sigma)$, and let $V$ be a finite set of variables containing $V(\Delta, \nabla, s, t)$.
Then there exists a closed \theory{R,E}-narrowing derivation: $$\Delta,\nabla \vdash u = (s,t) \closenarrow \cdots \closenarrow \Gamma_n \vdash (s_n,t_n),$$ such that there exists an \theory{E}-solution $(\Gamma, \mu) $ for $(\Gamma_n,  \{s_n\ealeq t_n\})$  and such that $(\Gamma, \theta\mu) \leq^V_\theory{E} (\Delta', \sigma)$, where $\theta$ is the composition of the narrowing substitutions.
Moreover, we are allowed to restrict our attention to $\closenarrow$-derivations such that: $\forall i, 0\leq i < n$, $\theta_i|_V$ is normalised.
\end{lemma}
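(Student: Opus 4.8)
The plan is to reverse the soundness argument of Lemma~\ref{lem:soundness}, this time invoking the completeness direction of the lifting theorem. I would first reduce to the case where $\sigma$ is $\theory{R,E}$-normalised: replacing each $X\sigma$ (reduced under $\Delta'$) by one of its $\theory{R,E}$-normal forms produces a substitution $\rho_0$ with $\texttt{dom}(\rho_0)\subseteq V(\Delta,\nabla,s,t)\subseteq V$; since $\theory{T}=\theory{R}\cup\theory{E}$ and every $\theory{R,E}$-step lies inside $\taleq$, the pair $(\Delta',\rho_0)$ is again a $\theory{T}$-solution, and, because $\to_{\theory{R,E}}$ preserves freshness constraints (well-formedness of the rules), $\Delta'\vdash\Delta\rho_0,\nabla\rho_0$; thus $\rho_0$ is $\theory{R,E}$-normalised and satisfies the combined context $\Delta,\nabla$ with $\Delta'$. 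The final generality statement is then read relative to this normalised representative of $\sigma$; when $\sigma$ is already normalised, $\rho_0=\sigma$.

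Second, I would build a rewriting derivation to a normal form. From $\Delta'\vdash s\sigma\taleq t\sigma$ together with $s\rho_0\taleq s\sigma$ and $t\rho_0\taleq t\sigma$ we obtain $\Delta'\vdash s\rho_0\taleq t\rho_0$; since $\theory{R}$ is $\theory{E}$-convergent and $\to_{\theory{R,E}}$ is $\theory{E}$-coherent, Church--Rosser modulo $\theory{E}$ (a consequence of $\theory{E}$-confluence and $\theory{E}$-termination, combined with Theorem~\ref{prop:nom-e-coherence}) gives $\Delta'\vdash s\rho_0{\downarrow}_{\theory{R,E}}\ealeq t\rho_0{\downarrow}_{\theory{R,E}}$. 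Reducing the two components of $u\rho_0=(s\rho_0,t\rho_0)$ independently then yields a closed rewriting derivation $\Delta'\vdash u\rho_0 \starcloserewrite (s\rho_0{\downarrow},t\rho_0{\downarrow})$ onto a $\theory{R,E}$-normal form of $u\rho_0$, the two components of which are $\ealeq$-equal under $\Delta'$.

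Third, I would lift this derivation using Lemma~\ref{theo:rewritenarr} in its closed form (equivalently, the completeness half of the closed $\theory{E}$-Lifting Theorem, Theorem~\ref{teo:closedElifting}), applied with $V_0=V$ and the normalised substitution $\rho_0$ that satisfies $\Delta,\nabla$ with $\Delta'$. This produces a closed narrowing derivation $\Delta,\nabla\vdash u=(s,t)\closenarrow\cdots\closenarrow\Gamma_n\vdash u_n=(s_n,t_n)$ with substitutions $\rho_i=\theta_i\cdots\theta_{n-1}\rho_n$ ($\rho_n$ normalised), $\theta=\theta_0\cdots\theta_{n-1}$, satisfying $\Delta'\vdash\Gamma_i\rho_i$, $\Delta'\vdash s_n\rho_n\ealeq s\rho_0{\downarrow}$, $\Delta'\vdash t_n\rho_n\ealeq t\rho_0{\downarrow}$, and $\Delta'\vdash\rho_0|_V\ealeq\theta\rho_n|_V$; the restriction to derivations with each $\theta_i|_V$ normalised is obtained by selecting at every narrowing step a most general $\theory{E}$-unifier that is already $\theory{R,E}$-normalised, which is consistent with (and in fact forced by) the factorisation $\rho_i=\theta_i\cdots\theta_{n-1}\rho_n$ through normalised substitutions. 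Taking $\Gamma:=\Delta'$ and $\mu:=\rho_n$, transitivity of $\ealeq$ with the join $s\rho_0{\downarrow}\ealeq t\rho_0{\downarrow}$ gives $\Gamma\vdash s_n\mu\ealeq t_n\mu$, so with $\Gamma\vdash\Gamma_n\mu$ the pair $(\Gamma,\mu)$ is an $\theory{E}$-solution of $(\Gamma_n,\{s_n\ealeq t_n\})$; and $\Delta'\vdash X\theta\mu\ealeq X\rho_0$ for all $X\in V$ gives $(\Gamma,\theta\mu)\leq^V_{\theory{E}}(\Delta',\rho_0)$ with witness $\id$, which is the assertion (relative to $\rho_0$).

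The main obstacle is the combination of closedness and freshness bookkeeping: every closed step carries the auxiliary context $A(\nw{R})\#V(\Delta,s,t)$, and these must be threaded consistently through the whole lifting — this is precisely why the closed analogue of Lemma~\ref{theo:rewritenarr} is needed rather than the plain one, and the hypothesis that $\to_{\theory{R,E}}$ is $\theory{E}$-coherent is indispensable there to guarantee that all $\theory{R,E}$-normal forms of $u\rho_0$ are $\ealeq$-equal (so that the two components of the lifted narrowing still terminate at an $\theory{E}$-unifiable pair). A secondary subtlety is that normalising $\sigma$ changes its $\theory{E}$-class, so the $\leq^V_{\theory{E}}$-comparison is naturally with $\rho_0=\sigma{\downarrow}$; for an arbitrary $\sigma$ one obtains directly only $(\Gamma,\theta\mu)\leq^V_{\theory{T}}(\Delta',\sigma)$, which is still enough for completeness of the $\theory{T}$-unification procedure.
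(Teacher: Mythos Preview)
Your proof follows essentially the same route as the paper's: normalise $\sigma$ to $\rho_0=\sigma{\downarrow}$, use $\theory{E}$-convergence and coherence to closed-rewrite $(s\rho_0,t\rho_0)$ to a pair of $\ealeq$-equal normal forms, then lift via the closed version of Lemma~\ref{theo:rewritenarr} (Theorem~\ref{teo:closedElifting}).

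The one structural difference is your choice of $(\Gamma,\mu)$. You set $(\Gamma,\mu):=(\Delta',\rho_n)$ directly, whereas the paper takes $(\Gamma,\mu)$ to be a \emph{most general} $\theory{E}$-unifier of $(\Gamma_n,\{s_n\ealeq t_n\})$ and factors $(\Delta',\rho_n)$ through it via a witness~$\delta$. For the lemma as literally stated (``there exists an $\theory{E}$-solution'') your choice is enough; the paper's extra step is what aligns the lemma with Theorem~\ref{teo:complet_clsd}, where only unifiers drawn from the complete set $S'=\mathcal{U}_\theory{E}(\Delta_n\vdash s_n,\nabla_n\vdash t_n)$ are collected into~$\mathcal{S}$. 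So the paper's detour buys compatibility with the subsequent theorem, while yours is a legitimate shortcut for the lemma in isolation.

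Your closing remark about $\sigma$ versus $\rho_0$ is well taken: the paper's final line asserts $\Delta'\vdash \rho_0|_V\aleq\sigma|_V$ to obtain $(\Gamma,\theta\mu)\leq^V_\theory{E}(\Delta',\sigma)$, which faces exactly the objection you raise (normalising $\sigma$ moves it by $\theory{R,E}$-steps, not merely $\ealeq$). On this point your proposal is at least as careful as the original, and your observation that $\leq^V_\theory{T}$ is what one gets directly---and that this suffices for completeness of the procedure---is the honest reading.
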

\begin{proof}
    By Definition~\ref{def:nominalEunif}, we have $\Delta' \vdash \Delta\sigma, \nabla\sigma, s\sigma \taleq t\sigma$.

    Take $\rho_0 = \sigma{\downarrow}$, that is, the normal form of $\sigma$ in $\Delta'$: $\Delta' \vdash X\rho_0 \ealeq (X\sigma){\downarrow}$, for all $X\in V$.
    Since the rules are closed, it follows that $\Delta' \vdash \Delta\rho_0, \nabla\rho_0, s\rho_0 \taleq t\rho_0$.

    Since
    $\theory{T=R}{\cup}\theory{E}_\alpha$ is
    closed and well-structured,
    $\Delta' \vdash s\rho_0 \taleq t\rho_0$ gives us that $s\rho_0$ and $t\rho_0$ have the same normal form in $\Delta'$, which we will call $r$. Then
    $$\Delta' \vdash u\rho_0 = (s\rho_0,t\rho_0) \closerewrite \cdots \closerewrite (r,r) = (u\rho_0){\downarrow}$$
    By the Lemma~\ref{theo:rewritenarr}, there exists a corresponding $\closenarrow$-derivation ending with $\Gamma_n \vdash (s_n,t_n)$ such that
    $\Delta'\vdash (s_n\rho_n, t_n\rho_n) = (s_n\rho, t_n\rho) \ealeq (u\rho_0){\downarrow} = (r,r)$
    and $\Delta'\vdash \Gamma_n\rho_n = \Gamma_n\rho$.
    Thus, $(\Delta', \rho)$ is an \theory{E}-solution of $(\Gamma_n,\{s_n\ealeq t_n\})$.

    Since $(\Gamma,\mu)$ is the least \theory{E}-unifier, it follows that $(\Gamma,\mu) \leq_\theory{E} (\Delta',\rho)$ and there exists a substitution $\delta$ such that $\Delta' \vdash X\mu\delta \ealeq X\rho$, for all $X\in V$ and $\Delta'\vdash \Gamma\delta$.
    Therefore, by Lemma~\ref{theo:rewritenarr}, $\Delta' \vdash \rho_0|_V \aleq \theta\rho|_V \aleq (\theta\mu\delta)|_V$ and $\Delta' \vdash \rho_0|_V \aleq \sigma|_V$, which gives us $(\Gamma,\theta\mu) \leq_\theory{E}^V (\Delta', \sigma)$.
\end{proof}

Now we can describe how to build a complete set of
$\theory{T=R}{\cup}\theory{E}_\alpha$-unifiers for two terms-in-context, which is an extension of the result initially proposed in~\citep{NominalNarrowing16}.

\begin{theorem}\label{teo:complet_clsd}
Let
$\theory{T=R}{\cup}\theory{E}_\alpha$ be
closed and well-structured,
and let $\to_{\theory{R,E}}$ be $\theory{E}$-coherent. Let $\Delta \vdash s$ and $\nabla \vdash t$ be two terms-in-context, and $V$ be a finite set of variables containing $V(\Delta, s, \nabla, t)$. Let $\mathcal{S}$ be the set of pairs $(\Psi, \sigma)$ such that there exists a $\closenarrow$-derivation:
    $$\Gamma_0 \equiv \Delta, \nabla \vdash u = (s, t) = u_0 \closenarrow \cdots \closenarrow \Gamma_n \equiv \Delta_n, \nabla_n \vdash u_n = (s_n, t_n),$$
where $(\Gamma_n, \{s_n\ealeq t_n\})$ has a complete set of \theory{E}-solutions $S' = \mathcal{U}_\theory{E}(\Delta_n \vdash s_n, \nabla_n \vdash t_n)$, $(\Psi,\mu)\in S'$,
$ \sigma \equiv \theta\mu$, and
$\theta$ is the normalised composition of the \theory{R,E}-narrowing substitutions. Then $\mathcal{S}$ is a complete set of \theory{T}-unifiers of $\Delta \vdash s$ and $\nabla \vdash t$ away
from $V$.
\end{theorem}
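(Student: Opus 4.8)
The plan is to obtain Theorem~\ref{teo:complet_clsd} as a direct consequence of the two previous lemmas: soundness (Lemma~\ref{lem:soundness}) handles one containment and completeness (Lemma~\ref{lem:completeness}) the other. Concretely, I would verify the two defining properties of a complete set of $\theory{T}$-unifiers away from $V$: (i) every $(\Psi,\sigma)\in\mathcal{S}$ is a $\theory{T}$-solution of $(\Delta\vdash s)\tunif{T}(\nabla\vdash t)$; and (ii) for every $\theory{T}$-solution $(\Delta',\sigma)$ of that problem there is some $(\Psi,\sigma')\in\mathcal{S}$ with $(\Psi,\sigma')\leq^V_\theory{E}(\Delta',\sigma)$; and finally check that the variables introduced along the derivations stay away from $V$.

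For (i), I would take an arbitrary $(\Psi,\sigma)\in\mathcal{S}$. By definition of $\mathcal{S}$ there is a closed narrowing derivation $\Gamma_0\equiv\Delta,\nabla\vdash(s,t)\closenarrow\cdots\closenarrow\Gamma_n\vdash(s_n,t_n)$ and a pair $(\Psi,\mu)\in S'=\mathcal{U}_\theory{E}(\Delta_n\vdash s_n,\nabla_n\vdash t_n)$ with $\sigma\equiv\theta\mu$, where $\theta$ is the (normalised) composition of the narrowing substitutions. Since $(\Psi,\mu)$ is in particular an $\theory{E}$-solution of $(\Gamma_n,\{s_n\ealeq t_n\})$, I would apply Lemma~\ref{lem:soundness} with $(\Gamma,\sigma):=(\Psi,\mu)$, which gives exactly that $(\Psi,\theta\mu)=(\Psi,\sigma)$ is a $\theory{T}$-solution of $(\Delta\vdash s)\tunif{T}(\nabla\vdash t)$.

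For (ii), given a $\theory{T}$-solution $(\Delta',\sigma)$, I would invoke Lemma~\ref{lem:completeness} with the set $V$: it yields a closed narrowing derivation $\Delta,\nabla\vdash(s,t)\closenarrow\cdots\closenarrow\Gamma_n\vdash(s_n,t_n)$ with each $\theta_i|_V$ normalised, together with an $\theory{E}$-solution $(\Gamma,\mu_0)$ of $(\Gamma_n,\{s_n\ealeq t_n\})$ such that $(\Gamma,\theta\mu_0)\leq^V_\theory{E}(\Delta',\sigma)$, where $\theta$ is the composition of the narrowing substitutions. Because $S'=\mathcal{U}_\theory{E}(\Gamma_n\vdash s_n,t_n)$ is a complete set of $\theory{E}$-solutions (this exists since $\theory{R}\cup\theory{E}_\alpha$ is well-structured), there is $(\Psi,\mu)\in S'$ with $(\Psi,\mu)\leq_\theory{E}(\Gamma,\mu_0)$; this derivation together with $(\Psi,\mu)$ puts $(\Psi,\theta\mu)$ into $\mathcal{S}$. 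It then remains to chain the orderings: I would show that $\leq_\theory{E}$ is preserved under left-composition with the fixed substitution $\theta$, i.e. from $(\Psi,\mu)\leq_\theory{E}(\Gamma,\mu_0)$ and its witness $\delta$ (with $\Gamma\vdash X\mu\delta\ealeq X\mu_0$ for all $X$ and $\Gamma\vdash\Psi\delta$) deduce $\Gamma\vdash(X\theta)\mu\delta\ealeq(X\theta)\mu_0$ for all $X$ — using that $\theory{E}$-equality is a congruence compatible with substitution (Proposition~\ref{def:e-compatible}) — hence $(\Psi,\theta\mu)\leq_\theory{E}(\Gamma,\theta\mu_0)$; transitivity of $\leq^V_\theory{E}$ with the bound from Lemma~\ref{lem:completeness} then gives $(\Psi,\theta\mu)\leq^V_\theory{E}(\Delta',\sigma)$. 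The "away from $V$" clause follows because closed narrowing (Definition~\ref{def:clsd-narr-modE}) always uses freshened variants $\nw{R}$, so every fresh variable and atom introduced along the derivation is chosen disjoint from $V$, and $\theta$ introduces no variable of $V$ outside $V$.

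The main obstacle I expect is the bookkeeping in part (ii): matching the particular $\theory{E}$-solution $(\Gamma,\mu_0)$ returned by Lemma~\ref{lem:completeness} against an element $(\Psi,\mu)$ of the complete set $S'$ that is actually used to build $\mathcal{S}$, and verifying that prefixing the subsumption witness $\delta$ by $\theta$ and restricting everything to $V$ is legitimate — in particular carefully tracking the freshness-context side-conditions of the form $\Gamma\vdash\Psi\delta$ and $\Gamma\vdash\Gamma_n\rho$ through the composition, which is exactly where compatibility of $\theory{E}$-equality with substitutions (Proposition~\ref{def:e-compatible}, Lemma~\ref{lem:compatibility}) and the hypotheses that $\theory{R}\cup\theory{E}_\alpha$ is closed, well-structured and that $\to_{\theory{R,E}}$ is $\theory{E}$-coherent are all needed (the latter two being already consumed inside Lemma~\ref{lem:completeness}).
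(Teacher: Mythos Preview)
Your proposal is correct and matches the paper's own proof, which simply states that the theorem is a consequence of Lemmas~\ref{lem:soundness} and~\ref{lem:completeness}. You have merely spelled out in detail the two containments that the paper leaves implicit, including the bookkeeping needed to pass from the particular $\theory{E}$-solution produced by Lemma~\ref{lem:completeness} to an element of the complete set $S'$.
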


 \begin{proof}
     Consequence of Lemmas~\ref{lem:soundness} and~\ref{lem:completeness}.
 \end{proof}

\paragraph{\bf Unification procedure}
Theorem~\ref{teo:complet_clsd} says that if we have a narrowing derivation ending in a pair of terms-in-context that has an \theory{E}-unifier, say $(\Psi,\mu)$, then if we compose $\mu$
with $\theta$, the composition of narrowing substitutions, we get $\sigma$ and hence we have a \theory{T}-solution $(\Psi,\sigma)$ for the pair of terms-in-context at the beginning of the 
derivation.

\begin{remark}
If \theory{E} is infinitary or nullary, each node in the narrowing tree may have infinite children. This is the case, for instance, of the commutativity theory \theory{C} that may yield an infinite set of solutions due to fixed-point problems. So we have a tree that is possibly infinite both in width and depth, and therefore, a standard breadth-first-search will not yield a complete unification procedure.
 However, notice that even if a node may have an infinite number of children, these are \emph{enumerable}. It is well-known that an enumerable union of enumerable sets is enumerable~\citep{Ebbinghaus:mathematicalLogic}; therefore, we can enumerate all nodes in the narrowing tree, yielding a sound and complete \theory{T}-unification procedure also in this case.

 If \theory{E} is finitary, the narrowing tree for $(\Delta \vdash s) \; _?{\overset{\theory{T}}{\approx}}_? \; (\nabla \vdash t)$ is finitely branching (even if it might be infinite in depth), and we can enumerate all the nodes using a standard breadth-first-search strategy (i.e., building the tree by levels).
 \end{remark}

\subsection{Example: Symbolic Differentiation with Commutativity (\theory{E=C})}\label{ssec:exam_diff}


    In this section we consider symbolic differentiation~\citep{prehofer1998solving} to illustrate our results. We consider the term $\f{diff}(F, X)$ that specifies the computation of the differential of a function $F$ at a point $X$. Here, $F$ is a meta-level unknown that can be instantiated by a term.  Differentiation is implemented by  the
    rewrite rules $\theory{R}_{\f{diff}}$ presented in Figure~\ref{fig:diff_rules}.
We consider the theory  \theory{T=R_{\f{sub}} \cup R_\f{diff}\cup C_\alpha}
defined by the union of the rules in \theory{R_\f{sub}}, defined in Example~\ref{ex:lambda},
together with the rules
$\theory{R}_{\f{diff}}$,
and \theory{C} being the commutative axioms for the binary function symbols  $\{\plus, \mult\}$.

\begin{figure}[!t]
\noindent

\resizebox{\textwidth}{!}{
$
    \begin{array}{lclcl}
    & \vdash & \plus(0,X) & \to & X  \\
    & \vdash & \plus(X,s(Y)) & \to & s(\plus(X,Y))  \\
    & \vdash & \mult(X,1) & \to & X\\
    y\# F & \vdash & \f{diff}(\texttt{lam}([y]F),X) & \to & 0  \\
    & \vdash & \f{diff}(\texttt{lam}([y]y),X) & \to & 1  \\
    & \vdash & \f{diff}(\texttt{lam}([y]\sin(F)),X) & \to & \mult(\cos(\texttt{sub}([y]F,X)),\f{diff}(\texttt{lam}([y]F),X) )  \\
    & \vdash & \f{diff}(\texttt{lam}([y]\plus(F,G)),X) & \to & \plus(\f{diff}(\texttt{lam}([y]F),X) ,\f{diff}(\texttt{lam}([y]G),X) )  \\
    & \vdash & \f{diff}(\texttt{lam}([y]\mult(F,G)),X) & \to & \plus(\mult(\f{diff}(\texttt{lam}([y]F),X),\texttt{sub}([y]G,X) ),\\
    & & & &\hfill \mult(\f{diff}(\texttt{lam}([y]G),X),\texttt{sub}([y]F,X) ) ) \\
\end{array}
$}
\caption{$\theory{R}_\f{diff}$ -- Rules for Symbolic Differentiation Modulo Commutativity}\label{fig:diff_rules}
\end{figure}

This system is closed
and \theory{C}-convergent, so Lemma~\ref{lem:soundness} applies and we can use \theory{C}-narrowing to
obtain a solution $(\{y\#G\}, \sigma=[F\mapsto y])$ for the \theory{T}-unification problem
\vspace{-2mm}
$$\big(y\#G\vdash \texttt{lam}([z]\f{diff}(\texttt{lam}([y]\plus(\sin(F),G)),z)) \big) \overset{\theory{T}}{\unif} \big( \vdash \texttt{lam}([z]\cos(z))\big).$$

Figure~\ref{fig:ex_diff} illustrates the use of \theory{C}-narrowing to solve this problem. For simplicity, we omit the freshness contexts.

\begin{figure}[!t]
\begin{center}
\resizebox{\textwidth}{!}{
\begin{tikzpicture}[node distance=1.5cm, every node/.style={anchor=base west}]
    \node (A) {$\texttt{lam}([z]\teal{\f{diff}(\texttt{lam}([y]\plus(\sin(F),G)),z)}) \overset{\theory{T}}{\unif} \texttt{lam}([z]\cos(z))$};
    \node (B) [below of=A] {$\texttt{lam}([z]\plus(\teal{\f{diff}(\texttt{lam}([y'](y\ y')\cdot G)},z), \f{diff}(\texttt{lam}([y'](y\ y')\cdot \sin(F)),z) ) ) \overset{\theory{T}}{\unif} \texttt{lam}([z]\cos(z))$};
    \node (C) [below of=B] {$\texttt{lam}([z]\teal{\plus(0,\f{diff}(\texttt{lam}([y'](y\ y')\cdot \sin(F)),z) )} ) \overset{\theory{T}}{\unif} \texttt{lam}([z]\cos(z))$};
    \node (D) [below of=C] {$\texttt{lam}([z]\teal{\f{diff}(\texttt{lam}([y'](y\ y')\cdot \sin(F)),z)} ) \overset{\theory{T}}{\unif} \texttt{lam}([z]\cos(z))$};
    \node (E) [below of=D] {$\texttt{lam}([z]\mult(\cos(\texttt{sub}([y'']\pi\act F,z)),\teal{\f{diff}(\texttt{lam}([y'']\pi\act F),z) })) \overset{\theory{T}}{\unif} \texttt{lam}([z]\cos(z))$};
    \node (F) [below of=E] {$\texttt{lam}([z]\mult(\cos(\teal{\texttt{sub}([y'']y'',z)}),1)) \overset{\theory{T}}{\unif} \texttt{lam}([z]\cos(z))$};
    \node (G) [below of=F] {$\texttt{lam}([z]\teal{\mult(\cos(z),1)}) \overset{\theory{T}}{\unif} \texttt{lam}([z]\cos(z))$};
    \node (H) [below of=G] {$\texttt{lam}([z]\cos(z)) \overset{\theory{T}}{\unif} \texttt{lam}([z]\cos(z))$};
    \draw[->,decorate,decoration={snake}] (A) -- node[right] {$\theta_1$} (B);
    \draw[->,decorate,decoration={snake}] (B) -- node[right] {$\theta_2$} (C);
    \draw[->,decorate,decoration={snake}] (C) -- node[right] {$\theta_3$} (D);
    \draw[->,decorate,decoration={snake}] (D) -- node[right] {$\theta_4$} (E);
    \draw[->,decorate,decoration={snake}] (E) -- node[right] {$\theta_5$} (F);
    \draw[->,decorate,decoration={snake}] (F) -- node[right] {$\theta_6$} (G);
    \draw[->,decorate,decoration={snake}] (G) -- node[right] {$\theta_7$} (H);
\end{tikzpicture}
}
\end{center}
\caption{\theory{T}-unification narrowing tree for
the nominal \theory{C}-unification problem using $\theory{R}_\f{diff}$}
    \label{fig:ex_diff}
\end{figure}

The first closed \theory{C}-narrowing step uses a freshened version of the fourth rule of differentiation\footnote{$\vdash \f{diff}(\texttt{lam}([y']\plus(F',G')),X') \to \plus(\f{diff}(\texttt{lam}([y']F'),X'),\f{diff}(\texttt{lam}([y']G'),X') )  $} with the assumptions $y'\#F,G$. The substitution used is $\theta_1= [F'\mapsto (y\ y')\cdot \sin(F), G'\mapsto (y\ y')\cdot G, X'\mapsto z]$.

The commutativity of $\plus$ is used before instantiating the subterm with the narrowing substitution.
Since the initial problem contains the constraint $y\# G$, we can  narrow, using rule $  y\# F \vdash \f{diff}(\texttt{lam}([y]F),X) \to 0$ (second narrowing step)
and from rule $\plus(0,X) \to X$, we can rewrite/narrow again (third narrowing step).
The next narrowing step uses the  freshened rule:
{\small $$\vdash \f{diff}(\texttt{lam} ([y''] \sin(F'')), X'') \rightarrow
\mult(\cos(\texttt{sub}([y'']F'',X'')),\f{diff}(\texttt{lam}([y''] F''),X''))$$ }

\noindent with the assumption $y''\#F$.
The substitution is $\theta_4=[F'' \mapsto \pi\act F, X'' \mapsto z]$), with  the permutation $\pi=(y'\ y'')(y\ y')$.

Then we use the freshened rule $\vdash \f{diff}(\texttt{lam}([w]w), W) \rightarrow 1$ with substitution $\theta_5=\{F \mapsto y, W \mapsto z\}$ and assumption $w \#  F$ to narrow the second argument of $\mult$.
Now we can use the rules for $\texttt{sub}$, and rewrite (hence also narrow) to the next level, and rewriting (narrowing) with $\vdash \mult(X,1)\to X$, we obtain two equal terms.


\section{Refinement: Nominal Basic \textit{Closed} Narrowing}
\label{sec:basnarr}

In the previous section, we saw that the closed narrowing relation $\closenarrow$ provides a sound and complete $(\theory{R\cup\alpha\cup E})$-unification procedure. As in the first-order approach, narrowing may be non-terminating, meaning that this nominal $(\theory{R\cup\alpha\cup E})$-unification procedure might not  terminate (see Example~\ref{exa:narrow-infty}), even when only a finite number of unifiers exist~\citep{VariantNarrowing:EscobarMS09}.

In this section, we introduce a refinement of nominal closed narrowing - called {\em (nominal) basic closed narrowing} - designed to manage the potentially infinite number of narrowing derivations when $\theory{E} = \emptyset$.
In a basic derivation, narrowing is never applied to a subterm introduced by a previous narrowing substitution~\cite{NominalNarrowing16, ALPUENTE_TerminationofNarrowing,BasicNarrowing:MiddeldorpH94}. The difference here is that we work with closed rewriting/narrowing derivations taking into account theories $\theory{T}=\theory{R}\cup\alpha$, that can be oriented into a convergent NRS $\theory{R}$ with embedded $\alpha$-equivalence.

We conclude the section with a discussion of the narrowing technique for the case $\theory{E}\neq \emptyset $, we highlight the limitations of the current approach and outline directions for future work.

\subsection{Basic Closed Narrowing for \texorpdfstring{$\theory{E}=\emptyset$}{E=emptyset} }

Let $\theory{T}=\theory{R}\cup\alpha$, be an equational theory
with
a convergent closed nominal rewrite system $\theory{R}$.
In the rest of this section, we assume $\theory R = \{R_k\equiv \nabla \cent l_k
\rightarrow r_k \}$, for $k\geq 1$, and we write ${\cal P}os(t)$ to denote the set of positions of the term $t$ and $\overline{{\cal P}os}(t)$ to denote the set of \emph{ground} positions of the term $t$.

\begin{definition}[Basic Rewriting] \label{defi:basic.positions}
 Let $ s$ be a  term and
 $U=\overline{\mathcal{P}os}
 (r)$, for some subterm $ r$ of $ s$.
 A  rewriting
 derivation
 $$\Delta \cent s=s_0\rightarrow_{[\context{C}_0,R_0]}s_1 \rightarrow_{[\context{C}_1,R_1]}
 \rightarrow \ldots \rightarrow_{[\context{C}_{n-1},R_{n-1}]} s_n $$
is \emph{based on $U$} and construct
 sets of positions $U_i\subset \mathcal{P}os(s_i)$, $0\leq i \leq n$,
  inductively, as follows:
  \begin{enumerate}
  \item the empty derivation is
 based on $U$, and $U_0=U$;
 \item if a derivation up to $s_i$ is based on $U$,
 then the derivation obtained from it by adding one step
 $s_i\rightarrow_{[\context{C}_i,R_{i}]}s_{i+1}$ is based on $U$ iff $\context{C}_i\in
 U_i$, and in this case we take:
 $$U_{i+1}\!=\!(U_i-\{\context{C}\in U_i \mid
 \context{C}_i\leq \context{C}\})\cup \{\context{C}_i.\context{C} \mid  \context{C}\in \overline{\mathcal{P}os}(r_{i})\},$$
 where $r_i$ denotes the right-hand side of the rule $R_i$  in
 $\theory R$\footnote{Given $\context{C}_i=(s_i, \_)$ and $\context{C}=(s, \_)$, it follows $\context{C}_i.\context{C}=(s_i\{\_ \mapsto s\}, \_)$  and $\context{C}_i\leq \context{C}$ if $\exists t : \, s_i\{\_ \mapsto t\}=s$. }.
 \end{enumerate}
Positions in $U_i$ are called \emph{basic} and positions in $\overline{{\cal P}os}(r) - U_i$ are  \emph{non-basic}.
\end{definition}
\normalsize

A (closed) rewrite step $\Delta \vdash \context{C}[s] \to \context{C}[s']$ at position $\context{C}$ is \emph{innermost} if for any $\context{C}_i$ such that $\context{C} < \context{C}_i$ and $\context{C}[s] = \context{C}_i[s_i]$,
there is no (closed) rewrite step $\Delta \vdash \context{C}_i[s_i] \to \context{C}_i[t]$ at position $\context{C}_i$. In other words, there is no (closed) rewrite step inside $s$.
An innermost (closed) rewrite derivation contains only innermost (closed) rewrite steps.

The next result is the nominal version of the well-known {\em Hullot's Property}~\cite{Hullot80} for closed rewriting, which is fundamental to obtain a complete and finite $\theory{T}$-unification algorithm (Theorem~\ref{Thm:complete_basic}) using basic closed narrowing.

\begin{lemma}[Hullot's Property]\label{lema:based}
Let $\Delta \cent s\approx_{\alpha}s_0\rho_0$, with $\rho_0$ normalised in $\Delta$ w.r.t. $\rew$. Every innermost (closed) nominal rewrite derivation issuing from $\Delta\cent s$ is based on $\overline{\mathcal{P}os}(s_0)$.
\end{lemma}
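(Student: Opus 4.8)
The plan is to prove a slightly stronger statement by induction. First note that $\overline{\mathcal{P}os}(s_0)$ really is a set of positions of $s$: since $\Delta\cent s\aleq s_0\rho_0$ and $\aleq$ preserves the position structure of a term, $\overline{\mathcal{P}os}(s_0)\subseteq\mathcal{P}os(s_0\rho_0)=\mathcal{P}os(s)$. Now, given an innermost derivation $\Delta\cent s=t_0\rightarrow_{[\context{C}_0,R_0]}t_1\rightarrow_{[\context{C}_1,R_1]}\cdots\rightarrow_{[\context{C}_{n-1},R_{n-1}]}t_n$, I will exhibit terms $u_0,\dots,u_n$ and substitutions $\rho_0,\dots,\rho_n$, each $\rho_i$ normalised in $\Delta$ w.r.t.\ $\rew$, such that $\Delta\cent t_i\aleq u_i\rho_i$ and $\overline{\mathcal{P}os}(u_i)=U_i$, where $U_i$ is the set of basic positions built in Definition~\ref{defi:basic.positions} from $U=U_0=\overline{\mathcal{P}os}(u_0)$ with $u_0=s_0$ and the given $\rho_0$. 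The lemma is then immediate, since ``based on $\overline{\mathcal{P}os}(s_0)$'' is by definition the requirement that $\context{C}_i\in U_i$ for every step, which the invariant forces.

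The base case $i=0$ is the hypothesis. For the inductive step, assume $\Delta\cent t_i\aleq u_i\rho_i$ with $\rho_i$ normalised and $\overline{\mathcal{P}os}(u_i)=U_i$, and consider the step $\Delta\cent t_i\rightarrow_{[\context{C}_i,R_i]}t_{i+1}$, which is innermost. First, $\context{C}_i\in\overline{\mathcal{P}os}(u_i)$: otherwise the subterm of $u_i\rho_i$ at $\context{C}_i$ would be $\aleq$ to a permutation of a subterm of some $X\rho_i$; since $\theory{R}$ is equivariant and $\aleq$ preserves $\rew$-normal-formhood, that subterm would be a $\rew$-normal form in $\Delta$, contradicting that a rewrite step fires at $\context{C}_i$. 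Hence $\context{C}_i\in U_i$, which is the ``based on $U$'' condition for this step.

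Next I build $u_{i+1}$ and $\rho_{i+1}$. Write the freshened rule used as $\nw{R_i}=\nw{\nabla_i}\cent\nw{l_i}\rightarrow\nw{r_i}$ with matching substitution $\sigma_i$; so the redex at $\context{C}_i$ is $\aleq$ to $\nw{l_i}\sigma_i$ (up to a permutation induced by abstractions on the path to $\context{C}_i$) and $\Delta\cent t_{i+1}\aleq\context{C}_i[\nw{r_i}\sigma_i]$. Because the step is innermost, every proper subterm of the redex is a $\rew$-normal form; since $\nw{l_i}$ is not a variable, each $Y\in V(\nw{l_i})$ sits at a proper position of $\nw{l_i}$, so $Y\sigma_i$ is $\aleq$ to a permutation of a proper subterm of the redex and is therefore normalised, and since $V(\nw{r_i})\subseteq V(\nw{l_i})$ the restriction $\sigma_i|_{V(\nw{l_i})}$ is a normalised substitution. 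As $\context{C}_i$ is ground in $u_i$, factor $u_i=D_i[u_i']$ with the hole at $\context{C}_i$ and the path through the context $D_i$ to the hole variable-free. The variables of $u_i$ are disjoint from those of $\nw{R_i}$, so $\rho_{i+1}:=\rho_i\cup\sigma_i|_{V(\nw{l_i})}$ is a well-defined substitution, normalised in $\Delta$, and $u_{i+1}:=D_i[\nw{r_i}]$ satisfies $u_{i+1}\rho_{i+1}=D_i\rho_i[\nw{r_i}\sigma_i]\aleq t_{i+1}$. Finally, the ground positions of $D_i[\nw{r_i}]$ are the ground positions of $u_i$ that are not at or below $\context{C}_i$, together with the positions $\context{C}_i.\context{C}$ for $\context{C}\in\overline{\mathcal{P}os}(\nw{r_i})=\overline{\mathcal{P}os}(r_i)$ (using that the path through $D_i$ to $\context{C}_i$ contains no variable, and that freshening preserves position structure); this is exactly $U_{i+1}=(U_i-\{\context{C}\in U_i\mid\context{C}_i\leq\context{C}\})\cup\{\context{C}_i.\context{C}\mid\context{C}\in\overline{\mathcal{P}os}(r_i)\}$. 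This closes the induction.

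The position arithmetic and the ``merge'' $\rho_{i+1}=\rho_i\cup\sigma_i|_{V(\nw{l_i})}$ are routine; the delicate points I would spell out in full are (i) the permutations produced by $\aleq$ when the redex lies under abstractions, so that statements like ``subterm of $X\rho_i$'' and ``$Y\sigma_i$'' hold only up to a permutation --- harmless because equivariance of $\theory{R}$ makes normal-formhood permutation-stable --- and (ii) the freshening intrinsic to closed rewriting: one must carry the auxiliary atoms $A(\nw{R_i})\#V(\Delta,t_i,t_{i+1})$ through the closed matching and check that $\sigma_i$ may be chosen so that $\rho_{i+1}$ stays normalised \emph{in $\Delta$}, not merely in an enlarged context. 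I expect (ii) to be the main obstacle, as it is precisely where the nominal closed setting departs from the classical first-order Hullot argument.
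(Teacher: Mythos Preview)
Your proof is correct. The paper establishes the same result by contradiction rather than by an explicit invariant: it picks an arbitrary step, assumes $\context{C}_i\notin U_i$, and argues that a non-basic position must arise either from $\rho_0$ (impossible since $\rho_0$ is normalised) or from the matching substitution $\theta$ of an earlier rule application; in the latter case the same variable instance already occurs in the lhs $\nw{l}$, hence the redex sits strictly below the earlier rewrite position, contradicting innermostness. You instead carry the invariant explicitly, constructing at each step a term $u_i$ with $\overline{\mathcal{P}os}(u_i)=U_i$ and a normalised $\rho_i$ with $\Delta\vdash t_i\aleq u_i\rho_i$; your observation that innermostness forces $\sigma_i|_{V(\nw{l_i})}$ to be normalised (because $\nw{l_i}$ is not a variable, so its variables sit at proper positions of the redex) is precisely the content of the paper's second case. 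The mathematical core is therefore the same; your version buys a cleaner explanation of why the paper's two cases are exhaustive and flags the permutation/freshening subtleties (your points (i) and (ii)) that the paper leaves implicit, at the cost of more bookkeeping.
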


\begin{proof}
We will prove the lemma for the case of closed rewriting derivations. The proof of the case without the closedness assumption is analogous.
Consider the following innermost closed nominal rewrite derivation:
$$\Delta\vdash s_0\rho_0 \aleq s = t_0 \to^c_{[\context{C}_0,R_0]} \cdots \to^c_{[\context{C}_{n-1},R_{n-1}]} t_n$$
Let $U = \overline{{\cal P}os}(s_0)$ and consider positions $U_0,U_1 \ldots, U_{n-1}$ as in Definition~\ref{defi:basic.positions}.

Take one arbitrary step  $\Delta\vdash t_i\to^c_{[\context{C}_i,R_i]} t_{i+1}$ and suppose by contradiction that $\context{C}_i$ is not based on $U_i$, i.e., that $\context{C}_i \in {\cal P}os(t_i) - U_i$.
There are two cases to consider:
\begin{itemize}
    \item $\context{C}_i$ is a position that was introduced by $\rho_0$. Such reduction is not possible since, by hypothesis, $\rho_0$ is normalised.
    \item $\context{C}_i$ is a position that was introduced by an instance of the rhs of
    a rule, consider w.l.o.g
    the rule $R_{i-1}\equiv \nabla_{i-1}\vdash l_{i-1}\to r_{i-1}$ that was applied in the previous step $\Delta\vdash t_{i-1}\to^c_{[\context{C}_{i-1},R_{i-1}]} t_{i}$. That is, for some substitution $\theta$ and $t_{i-1}\equiv \context{C}_{i-1}[t'_{i-1}]$, the following holds:
        \begin{mathpar}
    \inferrule{ \Delta, A(\nw{R}_{i-1})\# V(\Delta, t_{i-1},t_{i})\vdash \big( \nw{\nabla}_{i-1}\theta,\ t_{i-1}' \aleq \nw{l}_{i-1}\theta, \ \context{C}_{i-1}[\nw{r}_{i-1}\theta] \aleq t_{i}\big)}
    {\Delta\vdash t_{i-1}\rightarrow_{\theory{R}}^{c}\  t_{i} }
    \end{mathpar}
    Note that $t_i$ has positions that were introduced by $\nw{r}_{i-1}\theta$ via $\theta$-instances of variables of $\nw{r}_{i-1}$. Those positions are not in $U_i$.  Suppose that $\context{C}_i$ is one of such positions of $t_i$, i.e., $t_i=\context{C}_i[t_i']$ and $t_i'$ is a redex for rule $R_i$. However, $V(\nw{r}_{i-1})\subseteq V(\nw{l}_{i-1})$ by definition. Therefore, this position also occurs in $\nw{l}_{i-1}\theta$, and consequently, in $t_{i-1}'$. This implies that there is a redex in a position below $\context{C}_{i-1}$, and this contradicts the fact that the derivation is innermost.
\end{itemize}
\end{proof}

In a basic closed narrowing derivation, a closed narrowing step is never applied to a subterm introduced by the substitution of a previous closed narrowing step. Formally,
\begin{definition}[Basic closed narrowing]
A closed  narrowing derivation
$$(\Delta_0\vdash s_0)\cnarr{\context{C}_0,R_{0},\theta_0}\ldots \rightsquigarrow^c_{[\context{C}_{i-1},R_{i-1},\theta_{i-1}]}(\Delta_i\vdash s_i),
$$
is  \emph{basic} if it is based on $\overline{\mathcal{P}os}{(s_0)}$ (in the same sense as in Definition~\ref{defi:basic.positions}).
\end{definition}

\begin{example}\label{exa:narrow-infty}
    Let $\theory{R} = \{a\#X\vdash \forall[a]\forall[a]X \to X\}$ be an ENRS. The infinite sequence
    \[
    \begin{array}{cll}
       \; \vdash (\forall[a]X,X)  &  \narrow^c_{[X\mapsto \forall[a]X']}& \;\vdash (X',\forall[a]X')\\
         & \narrow^c_{[X'\mapsto \forall[a]X'']}& \;\vdash (\forall[a]X'',X'')\\
         &\narrow^c_{[X''\mapsto \forall[a]X''']}& \cdots\\
    \end{array}
    \]
    is the only narrowing derivation issued from $(\forall[a]X,X)$. It is non-basic: $U_1= \{[\_]\}$ but $\context{C}_1= \forall[a][\_]$.
\end{example}

The next result guarantees that in the case of $\theory{E}=\emptyset$, the nominal (closed) lifting theorems leverage basic (closed) narrowing derivations.

\begin{lemma}\label{teo:basicNar}
Suppose that $\theory{E}=\emptyset$. Then the narrowing derivations constructed in the Theorem~\ref{teo:Elifting} and Theorem~\ref{teo:closedElifting}
are all basic.
\end{lemma}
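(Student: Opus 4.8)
The plan is to derive the statement from Hullot's Property (Lemma~\ref{lema:based}), by transporting the ``based on $\overline{\mathcal{P}os}(s_0)$'' bookkeeping from the rewriting derivation that drives the construction of Lemma~\ref{theo:rewritenarr} (and its closed analogue) onto the narrowing derivation it produces. First observe that when $\theory{E}=\emptyset$ the hypothesis ``$\theory{R}\cup\theory{E}_\alpha$ well-structured'' reduces to ``$\theory{R}$ a convergent (closed) nominal rewrite system'', $\theory{E}$-coherence is vacuous, and $\ealeq$ collapses to $\aleq$; hence the narrowing derivations asserted by Theorems~\ref{teo:Elifting} and~\ref{teo:closedElifting} are exactly those built in Lemma~\ref{theo:rewritenarr} (respectively its closed version) out of a rewriting derivation $\Delta\vdash t_0 = s_0\rho_0 \to_{\theory{R}} \ldots \to_{\theory{R}} t_n = t_0{\downarrow}$ to a normal form, with $\rho_0$ normalised in $\Delta$ and satisfying $\Delta_0$ with $\Delta$. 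Since $\theory{R}$ is convergent we may take this rewriting derivation to be \emph{innermost} without loss of generality: if it is not, replace it by an innermost derivation from $\Delta\vdash t_0$ to its unique normal form; because the Lifting Theorems only assert the \emph{existence} of a corresponding narrowing derivation, this substitution changes nothing.

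Now apply Lemma~\ref{lema:based} to this innermost derivation (we have $\Delta\vdash t_0 \aleq s_0\rho_0$ with $\rho_0$ normalised): it is based on $U=\overline{\mathcal{P}os}(s_0)$, so, writing $U_0=U, U_1, \ldots, U_n$ for the sets of Definition~\ref{defi:basic.positions}, the rewrite step at stage $i$ is performed at a position $\context{C}_i\in U_i$ with some rule $R_i\in\theory{R}$. It remains to prove, by induction on $i$, that the narrowing step at stage $i$ in the derivation produced by Lemma~\ref{theo:rewritenarr} is performed at the \emph{same} position $\context{C}_i$ (read as a position of the skeleton $s_i$ instead of $t_i=s_i\rho_i$) with the \emph{same} rule $R_i$, and that the basic-position sets $U_i'$ constructed along the narrowing derivation coincide with the $U_i$. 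The base case is read off the proof of Lemma~\ref{theo:rewritenarr}: the narrowing step is placed at the non-variable position $\context{C}_0'$ of $s_0$ with $s_0\equiv\context{C}_0'[s_0']$ and $\Delta\vdash s_0'\rho_0\aleq t_0'\aleq l_0\sigma$ where $t_0\equiv\context{C}_0[t_0']$; since $\rho_0$ is normalised, the first redex cannot lie strictly inside a $\rho_0$-image, so $\context{C}_0$ is a skeleton position of $s_0$ and $\context{C}_0'=\context{C}_0\in U_0=U_0'$; the update clause of Definition~\ref{defi:basic.positions} is driven by $R_0$ at $\context{C}_0$ in both derivations, so $U_1'=U_1$. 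The induction step is identical, using that at stage $i$ we have $t_i\aleq s_i\rho_i$ with $\rho_i$ normalised --- a property maintained throughout the construction of Lemma~\ref{theo:rewritenarr} (it is exactly what makes its own induction go through, and it holds because the rule-matching substitutions of an innermost derivation are themselves normalised); hence $\context{C}_i$ is a skeleton position of $s_i$, coincides with the stage-$i$ narrowing position, lies in $U_i=U_i'$, and the two families are updated by the same rule at the same position. Therefore the narrowing derivation is based on $\overline{\mathcal{P}os}(s_0)$, i.e.\ it is basic. The closed case (Theorem~\ref{teo:closedElifting}) is verbatim, replacing $\to_{\theory{R}}$, $\rightsquigarrow_{\theory{R}}$ by their closed counterparts and invoking the closed instance of Lemma~\ref{theo:rewritenarr}, since Lemma~\ref{lema:based} is already proved for closed rewriting.

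I expect the main obstacle to be the position bookkeeping modulo $\alpha$-equivalence that makes ``the stage-$i$ narrowing step is at $\context{C}_i$'' precise: one must check that, because $\rho_i$ stays normalised at every stage, neither the rewrite nor the narrowing step ever lands strictly below a variable occurrence of the current skeleton $s_i$, so that its position can be transported unambiguously between $t_i$ and $s_i\rho_i$ and the two families $(U_i)$ and $(U_i')$ remain synchronised under the same sequence of rule applications. This forces one either to read the preservation of normalisation of the $\rho_i$ off the proof of Lemma~\ref{theo:rewritenarr} or to record it there explicitly; once that is settled, the induction above is routine.
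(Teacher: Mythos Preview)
Your proposal is correct and follows essentially the same route as the paper: reduce to an innermost rewriting derivation by convergence, apply Hullot's Property (Lemma~\ref{lema:based}) to conclude it is based on $\overline{\mathcal{P}os}(s_0)$, and then transfer this to the narrowing derivation via the construction of Lemma~\ref{theo:rewritenarr}. The paper's own proof dispatches the last step in a single sentence (``since the sets $U_i$ in the two derivations are equivalent''), whereas you spell out the induction synchronising the $U_i$ along the two derivations and flag the need for the $\rho_i$ to stay normalised; this extra care is welcome but not a different argument.
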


\begin{proof}
Let \theory{E = \emptyset} be an empty theory and  \theory{R}  be terminating and confluent closed nominal rewrite system. The coherence property is trivially achieved for empty theories.
 Consider the nominal narrowing derivation of Theorem~\ref{teo:Elifting}:
$$(\Delta_0\vdash s_0)\rightsquigarrow_{[\context{C}'_0,R_0,\theta_0]}   \ldots  \rightsquigarrow_{[\context{C}'_{n-1},R_{n-1},\theta_{n-1}]}(\Delta_n\cent s_n)$$
 with the associated rewriting sequence $\Delta\cent t_0 = s_0\rho_0\rightarrow_{[\context{C}_0,R_0]}  \ldots \rightarrow_{[\context{C}_{n-1},R_{n-1}]} t_n = t_0{\downarrow}$,
such that $\rho_0$ is normalised satisfying $\Delta_0$ with $\Delta$. Since $\theory{R}$ is confluent we may assume that the nominal rewriting sequence from $\Delta \cent s_0\rho_0$ is innermost. By
Lemma~\ref{lema:based}, this nominal rewriting derivation is  based on $\overline{\mathcal{P}os}(s_0)$, and since the sets
$U_i$ in the two derivations are equivalent, it follows that the considered nominal narrowing derivation is basic. The verification for Theorem~\ref{teo:closedElifting} is analogous.
\end{proof}

The main interest of basic closed narrowing is that we can give a sufficient condition for the termination of the narrowing process when we consider only basic $\rightsquigarrow^c$-derivations and therefore for the termination of the corresponding nominal \theory{T}-unification procedure.

\begin{theorem}\label{thm:termination}
Let $\theory{R}=\{\nabla_k\cent l_k \rightarrow r_k\}$ be a convergent closed nominal rewriting system such that any basic $\rightsquigarrow^c$-derivation
issuing from any of the right-hand sides $r_k$ terminates. Then any basic $\rightsquigarrow^c$-derivation issuing from any nominal term terminates.
\end{theorem}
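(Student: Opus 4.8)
The plan is to transpose Hullot's termination criterion for basic narrowing~\cite{Hullot80} to the nominal closed setting, arguing by contradiction. Suppose there were an \emph{infinite} basic $\rightsquigarrow^c$-derivation
$$(\Delta_0\vdash s_0)\cnarr{\context{C}_0,R_0,\theta_0}(\Delta_1\vdash s_1)\cnarr{\context{C}_1,R_1,\theta_1}(\Delta_2\vdash s_2)\cnarr{\context{C}_2,R_2,\theta_2}\cdots,$$
where $R_i$ is a freshened variant of a rule of $\theory{R}$ with right-hand side $r_{k_i}$, and let $U_0=\overline{\mathcal{P}os}(s_0), U_1, U_2,\dots$ be the associated basic position sets of Definition~\ref{defi:basic.positions}. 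The aim is to extract from it an infinite basic $\rightsquigarrow^c$-derivation issuing from one of the right-hand sides $r_k$, contradicting the hypothesis.

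First I would introduce bookkeeping on basic positions. Assign to each basic position an \emph{origin}: the positions of $U_0$ have origin $s_0$, and the positions $\{\context{C}_i.\context{C}\mid\context{C}\in\overline{\mathcal{P}os}(r_{k_i})\}$ created by step $i$ all receive a new origin, ``the copy of $r_{k_i}$ introduced at step $i$''. Origins form a tree rooted at $s_0$, the parent of the origin created at step $i$ being the origin of the redex position $\context{C}_i$. From Definition~\ref{defi:basic.positions} two facts are then straightforward: (i)~for a fixed origin $o$ the set of basic positions of origin $o$ never grows, and it shrinks \emph{strictly} whenever a step is performed at a position of origin $o$ (that position changes origin and no fresh position of origin $o$ is ever created); since this set injects into the finite set $\overline{\mathcal{P}os}(s_0)$ or $\overline{\mathcal{P}os}(r_k)$, \emph{each origin hosts only finitely many narrowing steps}. (ii)~Hence the origin tree is finitely branching, and as the derivation is infinite the tree is infinite, so by K\"onig's Lemma it has an infinite branch. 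This branch yields indices $i_1<i_2<\cdots$ such that step $i_{m+1}$ is performed at a basic — hence ground — position $\context{C}_{i_{m+1}}$ of the copy of $r_{k_{i_m}}$ created at step $i_m$; that is, the redexes of steps $i_1,i_2,\dots$ are nested, each strictly inside the right-hand-side instance created by the previous one.

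The final step is the projection. Put $r=r_{k_{i_1}}$ and focus on the subterm occurring at $\context{C}_{i_1}$; this position is never destroyed after step $i_1$ (otherwise the whole subtree of the origin ``copy~$i_1$'' would be emptied, so only finitely many of the steps $i_2,i_3,\dots$ could take place — a short argument). Because the derivation is \emph{basic}, any narrowing substitution applied at a position not nested inside $\context{C}_{i_1}$ instantiates only variables, i.e. acts on non-basic positions, and therefore leaves the ground skeleton $\overline{\mathcal{P}os}(r)$ of that subterm untouched; so that skeleton evolves solely through the steps taking place inside $\context{C}_{i_1}$, which include the infinitely many steps $i_2,i_3,\dots$. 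Gathering exactly these steps, and relativising the freshness contexts $\Delta_{i_m}$ and the (still suitably fresh) freshened rule variants to the subterm at $\context{C}_{i_1}$, produces a genuine closed narrowing derivation from $r$ that is infinite and, by construction, based on $\overline{\mathcal{P}os}(r)$, hence basic. This contradicts the hypothesis that every basic $\rightsquigarrow^c$-derivation issuing from a right-hand side of $\theory{R}$ terminates; therefore no infinite basic $\rightsquigarrow^c$-derivation exists, which is the claim.

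I expect the main obstacle to be making the projection rigorous in the nominal \emph{closed} framework: one must check that the restriction of a basic closed narrowing derivation to the positions below a fixed node is itself a legitimate basic closed narrowing derivation from the subterm there, carefully tracking the evolving freshness contexts, the freshened variants of rules (which must remain fresh for the relativised data), and the accumulated narrowing substitution (whose relevant part, by basicness, only meets non-basic positions and so does not perturb the basic skeleton). The combinatorial heart — origins, the strict-decrease observation, and the application of K\"onig's Lemma — is, by contrast, essentially a direct transcription of the first-order argument.
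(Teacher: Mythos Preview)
Your proposal is correct and follows essentially the same classical Hullot argument as the paper. The paper's proof is far more compressed: it simply notes that each basic step is either at one of the finitely many positions of $\overline{\mathcal{P}os}(s_0)$ (each usable at most once) or inside some right-hand side $r_k$, and then invokes the hypothesis directly the moment a step lands inside an $r_k$, without spelling out the recursion. Your origin-tracking and K\"onig's-Lemma machinery make precise exactly what that invocation is hiding --- in particular the interleaving of steps between different subterms and the projection onto a genuine basic closed derivation from $r_k$ --- so your write-up is more rigorous on the combinatorial side, while the underlying idea is the same.
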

\begin{proof}
    Consider the following basic $\narrow^c$-derivation:
    $$\Delta\vdash s=s_0 \narrow^c_{[\context{C}_0,R_0]} s_1 \narrow^c \cdots \narrow^c_{[\context{C}_{n-1},R_{n-1}]} s_n$$
     The first  step gives us that $\context{C}_0\in \overline{\mathcal{P}os}{(s_0)}$. If the second step issues from an rhs $r_k$, then by hypothesis, the derivation terminates. Suppose then, w.l.o.g, that the second step does not start from an rhs. Since our derivation is basic, we know that $\context{C}_1\in \overline{\mathcal{P}os}{(s_0)}-\{\context{C}_0\}$. We repeat the reasoning process. Since $\overline{\mathcal{P}os}{(s_0)}$ is finite, we know that eventually we would not be able to apply a step unless it is in a rhs $r_k$, which gives us the result.
\end{proof}

The next results show that we can build a complete set of $\theory{T=R{\cup}E}$-unifiers for two terms-in-context using only basic $\narrow$-derivations.
\begin{theorem}\label{teo:basic_complet_clsd}
If  \theory{E=\emptyset} then Theorem~\ref{teo:complet_clsd} holds for closed basic narrowing.
\end{theorem}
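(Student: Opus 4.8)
The plan is to re-run the proof of Theorem~\ref{teo:complet_clsd}, i.e.\ of Lemmas~\ref{lem:soundness} and~\ref{lem:completeness}, and to check that, once $\theory{E}=\emptyset$, every closed narrowing derivation that those arguments produce can be taken to be \emph{basic}. Write $\mathcal{S}_{b}$ for the set of pairs $(\Psi,\sigma)$ defined exactly as in Theorem~\ref{teo:complet_clsd} but with the additional requirement that the underlying $\closenarrow$-derivation be basic. Since a basic closed narrowing derivation is in particular a closed narrowing derivation, Lemma~\ref{lem:soundness} applies verbatim, so every element of $\mathcal{S}_{b}$ is a $\theory{T}$-unifier of $\Delta\vdash s$ and $\nabla\vdash t$; the work is entirely on the completeness side.

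For completeness I would start, as in Lemma~\ref{lem:completeness}, from a $\theory{T}$-solution $(\Delta',\sigma)$ of $(\Delta\vdash s)\tunif{T}(\nabla\vdash t)$, set $\rho_0=\sigma{\downarrow}$ (normalised in $\Delta'$ w.r.t.\ $\rew$), and use closedness of the rules together with $\theory{E}=\emptyset$ to get $\Delta'\vdash s\rho_0\taleq t\rho_0$, so that $s\rho_0$ and $t\rho_0$ share a normal form $r$ and $\Delta'\vdash u\rho_0=(s\rho_0,t\rho_0)\starcloserewrite (r,r)$. The one new ingredient is to choose this closed rewriting derivation to be \emph{innermost}: because $\theory{R}$ is convergent, innermost reduction terminates and, by confluence, reaches the unique normal form $(r,r)$, so such a derivation exists. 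By Hullot's Property (Lemma~\ref{lema:based}), applied with $\rho_0$ in the role of the normalised substitution, this innermost closed rewriting derivation is based on $\overline{\mathcal{P}os}(u)$. Then I would apply the rewrite‑to‑narrow lifting (Lemma~\ref{theo:rewritenarr}, whose $\theory{E}$-coherence hypothesis is trivial for $\theory{E}=\emptyset$) exactly as in Lemma~\ref{lem:completeness}, obtaining a closed narrowing derivation $\Delta,\nabla\vdash u=(s,t)\closenarrow\cdots\closenarrow\Gamma_n\vdash(s_n,t_n)$ and, from an $\theory{E}$-unifier $(\Gamma,\mu)$ of $(\Gamma_n,\{s_n\ealeq t_n\})$, the bound $(\Gamma,\theta\mu)\leq^V_{\theory{E}}(\Delta',\sigma)$, with $\theta$ the composition of the narrowing substitutions. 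Reproducing the argument of Lemma~\ref{teo:basicNar}, each narrowing step acts on the same non‑variable position as the corresponding rewriting step, and the sets $U_i$ of Definition~\ref{defi:basic.positions} built along the two derivations coincide; hence the narrowing derivation is based on $\overline{\mathcal{P}os}(u)$, i.e.\ it is basic, so $(\Gamma,\theta\mu)\in\mathcal{S}_{b}$. Since $(\Delta',\sigma)$ was arbitrary, $\mathcal{S}_{b}$ is complete, and with soundness we conclude that $\mathcal{S}_{b}$ is a complete set of $\theory{T}$-unifiers away from $V$, which is the statement of Theorem~\ref{teo:complet_clsd} for closed basic narrowing.

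I expect the main obstacle to be the transfer of the ``based on $U$'' property from the rewriting derivation to the lifted narrowing derivation: this needs the correspondence of Lemma~\ref{theo:rewritenarr} to be unfolded far enough to see that narrowing and rewriting steps share positions and that the inductive bookkeeping of basic positions is preserved step by step, together with the fact that the pair $u=(s,t)$ is handled uniformly. This is essentially the argument of Lemma~\ref{teo:basicNar}, so the effort is mostly in confirming that it survives the extra $\theory{E}$-unification layer (vacuous here) and the parallel treatment of the pair; a secondary, routine point to spell out is the existence of an innermost closed rewriting derivation to the normal form, which follows from termination and confluence of $\theory{R}$.
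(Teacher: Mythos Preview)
Your proposal is correct and follows essentially the same approach as the paper: soundness is immediate since basic closed narrowing derivations are in particular closed narrowing derivations, and for completeness you choose the closed rewriting derivation to be innermost, invoke Hullot's Property (Lemma~\ref{lema:based}) to get that it is based on $\overline{\mathcal{P}os}(u)$, and then use the position-preserving correspondence of Lemma~\ref{teo:basicNar} to conclude that the lifted narrowing derivation is basic. The paper's own proof is much terser---it simply points to the derivation produced by Theorem~\ref{teo:complet_clsd} and observes that the result follows once all positions $\context{C}_i$ are basic, implicitly relying on Lemma~\ref{teo:basicNar}---so your write-up is in fact a more explicit unfolding of the same argument.
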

\begin{proof}
Let \theory{E = \emptyset} be an empty theory and  \theory{R}  be terminating and confluent closed nominal rewrite system. The coherence property is trivially achieved for empty theories.
Consider the closed narrowing derivation from Theorem~\ref{teo:complet_clsd}
$$
\Delta, \nabla
\vdash u = (s, t) = u_0
\narrow^c_{[\context{C}_0,R_0]}
\cdots
\narrow^c_{[\context{C}_{n-1},R_{n-1}]}
\Delta_n, \nabla_n \vdash u_n = (s_n, t_n),$$
The proof follows if all the $\context{C}_i$, $i=1\ldots n-1$, along the derivation are basic.

\end{proof}

\begin{theorem}\label{Thm:complete_basic}
    If $\theory{R}$ is closed and satisfies the hypothesis of Theorem~\ref{thm:termination}, the construction of Theorem~\ref{teo:basic_complet_clsd} leads to a complete and finite \theory{T}-unification algorithm.
\end{theorem}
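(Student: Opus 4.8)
The plan is to combine the completeness already secured by Theorem~\ref{teo:basic_complet_clsd} with a termination argument showing that, when $\theory{E}=\emptyset$, the construction actually produces a \emph{finite} complete set of $\theory{T}$-unifiers. Since completeness holds verbatim from Theorem~\ref{teo:basic_complet_clsd} (coherence is trivial for the empty theory, and Lemma~\ref{teo:basicNar} guarantees that the lifted narrowing derivations are basic), the only remaining task is to prove that the basic closed narrowing tree rooted at the pairing term $u=(s,t)$ under $\Delta,\nabla$ is finite: a finite tree has finitely many nodes, so the set $\mathcal{S}$ of Theorem~\ref{teo:basic_complet_clsd} is finite, and the algorithm that enumerates it terminates.

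First I would show the tree is \emph{finitely branching}. At a node $\Gamma_i\vdash u_i$, a basic closed narrowing step is determined by a choice of position $\context{C}_i\in U_i$ (finitely many positions in $u_i$), a rule $R_k\in\theory{R}$ (finitely many, $\theory{R}$ being finite) together with a freshened variant $\nw{R_k}$ (canonical up to the choice of fresh names), and a solution of the nominal unification problem $(\Gamma_i\vdash u_i')\leftmatch(\nw{\nabla_k}\vdash\nw{l_k})$, where $u_i\equiv\context{C}_i[u_i']$. Because $\theory{E}=\emptyset$, this is ordinary (syntactic) nominal unification, which is \emph{unitary}: it either fails or returns a single most general pair $(\Gamma_{i+1},\theta_i)$. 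Hence each node has only finitely many children, each computed effectively.

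Next I would show that every branch of the tree is \emph{finite}. By hypothesis $\theory{R}$ is a convergent closed nominal rewrite system and every basic $\rightsquigarrow^c$-derivation issuing from a right-hand side $r_k$ terminates; these are exactly the premises of Theorem~\ref{thm:termination}, which then yields that every basic $\rightsquigarrow^c$-derivation issuing from \emph{any} nominal term terminates --- in particular from $u=(s,t)$, viewed as a single term built with a pairing symbol. Thus no branch is infinite. Combining finite branching with the absence of infinite branches, K\"{o}nig's Lemma gives that the whole basic closed narrowing tree is finite.

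Finally, the tree being finite, there are finitely many nodes $\Gamma_n\vdash(s_n,t_n)$; for each, computing a complete set of $\theory{E}$-solutions of $\{s_n\ealeq t_n\}$ under $\Gamma_n$ reduces, since $\theory{E}=\emptyset$, to deciding nominal $\alpha$-unifiability of $s_n$ and $t_n$ and returning the (at most one) most general unifier $\mu$; setting $\sigma\equiv\theta\mu$, with $\theta$ the normalised composition of the narrowing substitutions along the branch, produces each element of $\mathcal{S}$. Consequently $\mathcal{S}$ is finite and is enumerated by a terminating procedure, while its completeness is Theorem~\ref{teo:basic_complet_clsd}; this is the claimed complete and finite $\theory{T}$-unification algorithm. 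I expect the main obstacle to be precisely the finiteness argument: one must check that finite branching genuinely follows from unitarity of nominal unification (so that position, rule, and unifier each range over a finite set), and that the hypothesis of Theorem~\ref{thm:termination} is legitimately transported to the pairing term $u=(s,t)$ before K\"{o}nig's Lemma can be applied.
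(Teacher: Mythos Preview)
Your proposal is correct and follows exactly the line the paper intends: the paper states Theorem~\ref{Thm:complete_basic} without an explicit proof, treating it as an immediate corollary of Theorem~\ref{teo:basic_complet_clsd} (completeness) together with Theorem~\ref{thm:termination} (termination of every basic $\rightsquigarrow^c$-derivation). Your elaboration---finite branching from unitarity of syntactic nominal unification when $\theory{E}=\emptyset$, no infinite branch from Theorem~\ref{thm:termination}, hence a finite tree by K\"onig's Lemma, and a finite $\mathcal{S}$ because each leaf contributes at most one mgu---is precisely the argument the paper leaves implicit.
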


\subsection{Nominal Basic Narrowing for \texorpdfstring{$\theory{E}\neq \emptyset$}{E=/emptyset}}\label{ssec:basic_equational}

The nominal language includes the first-order language of terms, as expected, all the problems that exist in reasoning modulo equational theories in the former, are still problems in the latter. In addition to those, one has to consider the extra complications that languages with binders include, such as dealing with freshness and modulo renaming.
The next example (taken from~\citep{CLD05_TheFiniteVariantProperty}) and expressed in the nominal framework using empty freshness constraints -- shows that Hullot's property does not necessarily hold when $\theory{E}\neq \emptyset$. This  compromises the completeness of the unification procedure.

\begin{example}\label{exa:incomplete-ac}
Let $R=\{X+0 \to X, X+X\to 0, X+(X+Y)\to Y\}$ be  a  nominal rewrite system, which is known to be $\theory{AC}$-convergent. Let us consider the term $s_0=[c](X_1+X_2)$ and the normalised substitution $\rho_0=[X_1\mapsto (a+b), X_2\mapsto (a+b)]$. The following innermost rewriting $\to_{\theory{R,AC}}$-derivation starting from $s_0\rho_0$:
%
$$[c]((a+b)+(a+b))\to [c](b + b) \to [c]0$$
is not based on $s_0$. Contradicting Hullot's property (Lemma~\ref{lema:based}). In fact, the first step takes place at position $
[c][\_]
\in U_0=\overline{{\cal P}}os(s_0)=\{
[\_],[c][\_]
\}$, by applying the rewrite rule $X+X+Y\to Y$ by rearranging the arguments (modulo \theory{AC}). The new set of positions is $U_1=
\{[\_]\}$, since there are no ground positions in the rhs of the rule applied. The second derivation occurs in a position that is not from the set of basic positions in $s_0$.

\end{example}

The point here is that the notion of positions is static and we are working with representatives of \theory{AC}-equivalence classes of terms and positions that have different subterms depending on the representative, and this suggests a more {\em dynamic} notion of positions. Looking at the
position $\context{C}\equiv[c][\_]$ of the representative $[c]((a+b)+(a+b))$, there are no (strictly) internal redexes. The application of rule $X+X+Y\to Y$ introduces the redex $[c](b+b)$. However, if we take another representative of $s_0\rho_0$, say $[c](a+( a+(b+b)))$, now the derivation starting at the
position $\context{C}$, applying the same rule, is not innermost anymore.

Breaking the Hullot's Property affects the completeness of the algorithm for \theory{AC}-unification procedure using \theory{AC}-narrowing, as the original proof of completeness - in the first-order approach - relies on the property. The existence of a finite minimal complete set of \theory{AC}-unifiers  for unification problems modulo \theory{AC} is well-known~\cite{DBLP:journals/jacm/Stickel81,DBLP:journals/jar/AyalaRinconFSKN24}.

Basic closed narrowing  works modulo $\alpha$-equivalence classes because the elements in the class are not structurally affected by renaming, there are no structural changes in the terms and the subterms at each position except for the renamed atoms - and this does not affect the proof of Lemma~\ref{lema:based}. However,  dealing with theories including $\theory{AC}\cup \alpha$ will be as problematic in the nominal framework as it is in the first-order approaches.

\paragraph{Alternative Methods and Current Limitations}
There are methods addressing both the non-termination of basic narrowing and incompleteness issues of \theory{E}-unification using narrowing in the first-order approaches~\cite{Viola01,CLD05_TheFiniteVariantProperty,VariantNarrowing:EscobarMS09}. But these are not easily transferable to the nominal framework due to some limitations of the existing developments.    Most of the alternative approaches to first-order unification using narrowing techniques rely on restricting the equational theory \theory{E} to satisfy (some of the) following requirements:
\begin{enumerate}
    \item \theory{E} is regular, that is, for each $s\approx t \in \theory{E}$, we have $V(s) = V(t)$;
    \item  An  \theory{E}-unification algorithm exists;
    \item \theory{E} is finitary;
    \item \theory{R} is \theory{E}-confluent and \theory{E}-terminating;
    \item $\erew$ is \theory{E}-coherent.
\end{enumerate}
The current limitation for the nominal framework lies in requirements 2 and 3: the equational theories for which nominal unification has been developed are limited to \theory{C} and \theory{AC}. Besides,  the current developments identified that these equational theories interact with freshness and $\alpha$-renaming in a way that disrupts the finitary property of nominal \theory{C/AC}-unification (see Example~\ref{rmk:cunif-notfin}).
Alternative approaches that yield finitary theories (using, e.g., fixed-point equations to express solutions) are under development.

\section{Related Work}\label{sec:rel-work}
Narrowing has traditionally been used to solve (first-order) equations in initial and free algebras modulo a set of equations. It is also used to integrate functional and logic programming~\citep{EquationalProgramming_Dershowitz,FoundationsOfLogicProgramming_Lloyd}.  Narrowing was originally introduced for theorem proving~\citep{Hullot80}, but nowadays it is used in type inference~\citep{TypelevelComputationUsingNarrowing_Sheard} and verification of cryptographic protocols~\citep{SymbolicReachability_Meseguer}, amongst other areas. Narrowing gives rise to a complete \theory{E}-unification procedure if \theory{E} is defined by a convergent rewrite system, but it is generally inefficient. Several strategies have been designed to make narrowing-based \theory{E}-unification procedures more efficient by reducing the search space (e.g., basic narrowing~\citep{Hullot80} and variant narrowing~\citep{VariantNarrowing:EscobarMS09}, the latter inspired by the notion of \theory{E}-variant~\citep{CLD05_TheFiniteVariantProperty}) and sufficient conditions for termination have been obtained~\citep{Hullot80,VariantNarrowing:EscobarMS09,ALPUENTE_TerminationofNarrowing}. 

Nominal unification is closely related to higher-order pattern unification~\citep{NomUnifHOPerspective_Levy} and and there has been prior work on equational extensions in this setting. Prehofer~\citep{prehofer1998solving} introduced higher-order narrowing along with variants--such as lazy narrowing, conditional narrowing, pattern narrowing--and explored their use as inference mechanisms in functional-logic programming (see also~\cite{Hanus-Prehofer}). Nominal extensions of logic and functional programming languages are already available (see, e.g., \citep{FreshML_PittsGabbay,NomLogicProgramming_CheneyUrban}), and nominal narrowing could play a similar role in these languages.

Initial efforts to address equational nominal unification included integrating the theories of \theory{A}ssociativity ($\approx_{\alpha,\theory{A}}$), \theory{C}ommutativity~($\approx_{\alpha,\theory{C}}$) and \theory{A}ssocia\-tiv\-ity-\theory{C}ommutativity ($\approx_{\alpha,\theory{AC}}$) with $\alpha$-equality~\citep{A-C-AC/tcs/Ayala-RinconSFN19}. Various algorithms for nominal unification modulo commutativity ($\theory{C}$-unification) and formalisations of their correctness in proof assistants 
have been developed~\citep{OnNominalSyntax_AyalaFernandezSobrinho,FormalisingNomC-unif/mscs/Gabriel21}, as well as algorithms for nominal \theory{C}-matching~\citep{FormalisingNomC-unif/mscs/Gabriel21} and nominal \theory{AC}-matching~\citep{CICM:AyalaRinconFSKN23}.
Further investigations into nominal unification include exploring a \textit{letrec} construct and atom variables~\citep{NomUnif/Schmidt-Schauss22}.
These developments reveal significant differences between first-order and nominal languages, highlighting the challenges of extending equational unification algorithms to languages with binders. For example, the theory of \theory{C}-unification has nullary  type if $\alpha$-equivalence is considered~\citep{FormalisingNomC-unif/mscs/Gabriel21} contrasting with the finitary type of first-order \theory{C}-unification~\citep{Baader98}, and a direct extension of the  Stickel-Fages first-order \theory{AC}-unification algorithm was shown to introduce cyclicity in solutions~\citep{CICM:AyalaRinconFSKN23},
differing from the original first-order approach.

The techniques presented here for the definition of nominal narrowing modulo $\alpha$ and $\theory{E}$ could serve as a basis to extend Prehofer's results to deal with higher-order pattern-unification modulo equational theories $\theory{T}$ that cannot be presented by convergent rewrite rules, but could be represented by a convergent $\theory{R}$ modulo $\theory{E}$.
Moreover, nominal $\theory{R,E}$-narrowing could serve as a basis for extensions of first-order rewriting-based languages, such as Maude. The use of narrowing modulo equational axioms  makes Maude an excellent platform to develop applications such as security protocol analysis, and incorporating $\alpha$ will facilitate applications that require manipulating  syntax with binding.

\section{Conclusion and Future work}
\label{sec:future-work}

In this work, we present an approach to solving equational nominal unification problems via narrowing extended with axioms. We proposed definitions for nominal $\theory{E}$-rewriting and $\theory{E}$-narrowing and proved some properties relating them, obtaining  the \theory{E}-Lifting Theorem, when \theory{R}  is an \theory{E}-convergent NRS, $\to_\theory{R,E}$ is \theory{E}-coherent and a complete algorithm for nominal \theory{E}-unification exists.
We illustrated the technique with an example of symbolic differentiation.
In future work we will explore applications in rewriting-based programming languages and theorem provers, as well as applications in the symbolic analysis of security protocols. Several tools, such as Maude~\citep{DBLP:journals/jlap/LopezRuedaES23}, Deepsec~\citep{DBLP:journals/theoretics/ChevalKR24}, Tamarin~\citep{DBLP:journals/ieeesp/BasinCDS22}, ProVerif~\citep{ProVerif/csfw/ChevalR23}, etc, use equational reasoning.

The current working hypothesis to constrain approaches to theorem proving and logic programming is that it is not necessary to compute finite complete sets of unifiers: it is sufficient to decide the solvability of the \theory{E}-unification problems.  Viola~\cite{Viola01} proposed an alternative narrowing strategy by introducing rule extensions.
Other alternative methods, which we aim to work on in the future, are to use Variant Narrowing~\citep{VariantNarrowing:EscobarMS09,Escobar12_FoldingVariantNarrowing} or Needed Narrowing~\citep{10.1145/347476.347484}.
Variant Narrowing works
with rules \theory{R} modulo axioms \theory{E} and is complete for any \theory{R{\cup}E} satisfying some key conditions, and avoids many unnecessary narrowing sequences that would result from complete narrowing. Also, if \theory{R} satisfies the finite variant property modulo \theory{E}~\citep{CLD05_TheFiniteVariantProperty}, then it can be refined into a terminating and complete narrowing algorithm.

The finite variant property enables the reduction of problems modulo an equational theory \theory{E} to problems modulo a subtheory $\theory{E'} \subseteq \theory{E}$~\citep{CLD05_TheFiniteVariantProperty}.
It can be essential for
constraint-solving problems, such as intruder derivability constraints~\citep{ComonLundh_IntruderDeductions} or disunification problems, which explains our desire to study and work with these approaches in the future.

\bibliographystyle{elsarticle-harv}
\bibliography{mybibliography}

\end{document}